\documentclass[10pt,a4paper]{article}

\usepackage[utf8x]{inputenc}
\usepackage[colorlinks=true,linkcolor=black,citecolor=black,hypertexnames=false]{hyperref}
\usepackage{amsmath}
\usepackage{amsfonts}
\usepackage{amssymb}
\usepackage{amsthm}
\usepackage{makeidx}
\usepackage{graphicx}
\usepackage{microtype}
\usepackage{tikz-cd}
\usepackage{url}
\usepackage[all]{xy}

\newtheorem{theoremintro}{Theorem}

\newtheorem{theorem}{Theorem}[section]
\newtheorem{corollary}[theorem]{Corollary}
\newtheorem{lemma}[theorem]{Lemma}
\newtheorem{proposition}[theorem]{Proposition}
\theoremstyle{definition}
\newtheorem{definition}[theorem]{Definition}
\newtheorem{remark}[theorem]{Remark}

\newcommand{\calA}{\mathcal A}
\newcommand{\calC}{\mathcal C}
\newcommand{\calZ}{\mathcal Z}
\newcommand{\Aplus}{\calA^+}
\newcommand{\Cplus}{\calC^+}
\newcommand{\Zplus}{\calZ^+}
\newcommand{\croT}{[\hspace{-0.3ex}[T]\hspace{-0.3ex}]}
\newcommand{\parT}{(\!(T)\!)}

\newcommand{\id}{\text{\rm id}}
\newcommand{\im}{\text{\rm im}\:}
\newcommand{\tr}{\text{\rm Tr}}
\newcommand{\op}{\text{\rm op}}

\newcommand{\ur}{\text{\rm ur}}
\newcommand{\Aur}{\calA^{\ur}}

\newcommand{\ev}{\text{\rm ev}}
\newcommand{\End}{\text{\rm End}}
\newcommand{\Hom}{\text{\rm Hom}}
\newcommand{\Frac}{\text{\rm Frac}}

\newcommand{\TS}{\text{\rm TS}}
\newcommand{\Ks}{K^{\text{\rm s}}}
\newcommand{\Fs}{F^{\text{\rm s}}}
\newcommand{\Asplus}{\calA^{\text{\rm s},+}}
\newcommand{\Csplus}{\calC^{\text{\rm s},+}}
\newcommand{\Zsplus}{\calZ^{\text{\rm s},+}}
\newcommand{\calAs}{\calA^{\text{\rm s}}}

\newcommand{\ord}{\text{\rm ord}}
\newcommand{\res}{\text{\rm res}}
\newcommand{\sres}{\text{\rm sres}}

\newcommand{\LRS}{\text{\rm LRS}}
\newcommand{\LG}{\text{\rm LG}}
\newcommand{\srk}{\text{\rm s-rk}}

\newcommand{\Trd}{T_\text{\rm rd}}

\begin{document}

\title{Duals of linearized Reed-Solomon codes}
\author{Xavier Caruso \& Amaury Durand}
\date{October 2021}

\maketitle

\begin{abstract}
We give a description of the duals of linearized Reed-Solomon codes in 
terms of codes obtained by taking residues of Ore rational functions.
Our construction shows in particular that, under some assumptions on
the base field, the class of linearized Reed-Solomon codes is stable
under duality.
As a byproduct of our work, we develop a theory of residues in the Ore
setting, extending the results of~\cite{caruso}.
\end{abstract}

\setcounter{tocdepth}{1}
\tableofcontents

\section*{Introduction}

One of the oldest and most basic construction of codes is due to Reed 
and Solomon and consists in evaluating polynomials of small degree in 
a large number of points; due to the decalage between the degree and 
the number of evaluation points, one can hope recovering the initial 
polynomial even when some errors occur during the evaluation, or the 
transmission of the values.
During the last decades, new problems in coding theory have emerged and 
new solutions have been proposed. In particular, one realizes than the 
rank metric (for which the distance between two codewords is given by 
the rank of some matrix) is more well-suited than the classical Hamming 
metric for some applications, \emph{e.g.} network coding~\cite{silva} 
or space-time coding~\cite{LK,robert}. The rank metric has then gained
more and more popularity over the years and many classical constructions 
have been extended to this framework.

In particular, Delsarte~\cite{delsarte}, Roth~\cite{roth} and 
Gabidulin~\cite{gabidulin} (independently)
noticed that replacing classical polynomials with linearized polynomials 
in Reed-Solomon's construction, one ends up with quite interesting 
codes as well. Those codes are nowadays referred to as Gabidulin codes.
They appear naturally as linear subspaces of matrix algebras, hence
the connexion of the rank metric.
After the work of Boucher and Ulmer~\cite{bouulm}, we prefer nowadays
working with Ore polynomials in place of linearized polynomials; this
indeed allows us to extend Gabidulin codes to arbitrary base fields,
including in particular number fields~\cite{robert} and, to some 
extent, to put the theory of Gabidulin codes in the perspective of
differential algebras~\cite{bouulm2,liu}.

More recently, Martínez-Peñas~\cite{martinez} managed to find a common 
generalization of Reed-Solomon codes, on the one hand, and Gabidulin 
codes, on the other hand.
Martínez-Peñas' codes are called \emph{linearized Reed-Solomon codes}
are involves the so-called sum-rank metric. Moreover, Martínez-Peñas
gives applications to his codes to multishot network coding
in~\cite{martinez3}. Since then, sum-rank metric codes have received
some interest (see for example~\cite{martinez3,puchinger,ravagnani,ott}).
In particular, the notion of duality for sum-rank metric codes have
been addressed in~\cite{martinez2} in which the authors proved that
the duals of certain linearized Reed-Solomon remains of the same type
(see \cite[Theorem~4]{martinez2}).

The aim of the present paper is to study in a wider generality the 
duals of linearized Reed-Solomon codes. More precisely, we will 
consider linearized Reed-Solomon codes obtained from rings of Ore 
polynomials $K[X; \theta,\delta]$ satisfying the two following
assumptions:

\medskip

\begin{tabular}{@{\hspace{1ex}}l@{\hspace{1ex}}l}
\textbf{(H1)}: &
the base ring $K$ is a \emph{commutative} field, \smallskip \\
\textbf{(H2)}: &
the subfield $F$ of $K$ consisting of elements $x 
\in K$ such that\\
& $\theta(x) = x$ and $\delta(x) = 0$ has finite 
index in $K$.
\end{tabular}

\smallskip

\noindent
These two assumptions do not allow us to work in the full generality
of Martínez-Peñas' original paper but it turns out that they are 
sufficiently weak to cover the most interesting situations.
For example, they are fulfilled when $K$ is a finite field and 
$\theta$ is (a power of) the Frobenius endomorphism or when $K$ is
the field of rational functions in the variable $t$ over a finite
field and $\delta$ is the usual derivation~$\frac d{dt}$.

In order to achieve our goal, we draw our inspiration from the 
classical case; indeed, it is a standard fact in the theory of 
algebraic geometry codes (see for instance \cite[\S 4.1.2]{tsfasman}) 
that the duals of standard Reed-Solomon codes can be described in 
terms of taking residues of differential forms over $\mathbb P^1$. 
In this paper, we extend this view point to Ore polynomials and 
linearized Reed-Solomon codes.

For this, several important ingredients are needed.
The first one is a powerful theory of residues in the Ore setting.
Such a theory has been already partially developed in a former 
paper of one of us~\cite{caruso}. Building on this work,
we extend the theory to the general setting of this article and use it to
give a new construction of codes for the sum-rank distance, that we
call \emph{linearized Goppa codes}. It turns out that these codes are
(noncanonically) isomorphic to linearized Reed-Solomon codes; however
having this alternative presentation in terms of residues will be of
crucial importance.

Two other ingredients we shall need is a notion of trace and a notion of 
duality at the level of Ore rings. The former will be given by the 
so-called \emph{reduced trace}, which is a somehow standard tool in this 
context~(see for instance \cite[\S 1.6]{jacobson-book}); however, we 
shall use a slightly unusual approach in this paper inspired by the 
theory of Azumaya algebras and better-suited to the applications we have 
in mind. As for duality maps, we introduce them in the present paper. We 
furthermore prove that the trace and the duality both satisfy 
nice commutation relations with evaluation morphisms and residue maps.
Putting all these inputs together, we finally prove that our linearized 
Goppa codes are the duals of Martínez-Peñas' linearized Reed-Solomon 
codes. As a corollary, we derive the following theorem that extends the 
theorem of Martínez-Peñas and 
Kschischang~\cite[Theorem~4]{martinez2} we have mentioned earlier.

\begin{theoremintro}
Under the assumptions~\textbf{(H1)} and~\textbf{(H2)}, the dual of
a linearized Reed-Solomon code is isomorphic to a linearized 
Reed-Solomon code.
\end{theoremintro}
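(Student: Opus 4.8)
The plan is to reduce the theorem to two intermediate statements that organize the body of the paper. First, that the dual $C^\perp$ of a linearized Reed-Solomon code $C$ is precisely a \emph{linearized Goppa code} --- a code cut out by taking residues of Ore rational functions at a prescribed family of points with prescribed multiplicities. Second, that every linearized Goppa code is, as a sum-rank metric code, (noncanonically) isomorphic to a linearized Reed-Solomon code. Granting both, the theorem follows in one line: $C^\perp$ is a linearized Goppa code, hence isomorphic to a linearized Reed-Solomon code.

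To carry this out I would first extend the residue calculus of \cite{caruso} to Ore polynomial rings $K[X;\theta,\delta]$ satisfying \textbf{(H1)} and \textbf{(H2)}. The outputs I need are: for each evaluation point a local residue map with values in the corresponding residue field (a finite extension of $F$), and a global \emph{residue formula} stating that the sum of all local residues of a fixed Ore rational function vanishes --- the Ore counterpart of ``the sum of the residues of a differential form on $\mathbb P^1$ is zero''. I would then define linearized Goppa codes from these residue maps, on the same data as the linearized Reed-Solomon codes, and verify by a degree/multiplicity count that the dimensions of $C$ and of the associated Goppa code are complementary in the ambient space.

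The heart of the argument is to show that the linearized Goppa code is contained in $C^\perp$. For this I would introduce the reduced trace on the relevant Ore rings via the Azumaya-algebra viewpoint, together with the duality maps, and establish the two compatibility properties announced in the introduction: the reduced trace commutes with the evaluation morphisms, and the duality is compatible with the residue maps. Combining these with the residue formula, the sum-rank pairing of a codeword of $C$ with a codeword of the Goppa code rewrites as the sum of the residues of a single Ore rational function, hence is zero; the dimension count then turns the inclusion into an equality, so that $C^\perp$ equals the linearized Goppa code. I expect the main obstacles to be (i) proving the residue formula at this level of generality, in particular getting the contributions at the point at infinity and at the ramified places right, and (ii) the isomorphism between a linearized Goppa code and a linearized Reed-Solomon code, which amounts to a Riemann--Roch-type identification of the image of an ``evaluation of residues'' map with a genuine evaluation code and is where the noncanonical choice enters. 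This last step also recovers and generalizes \cite[Theorem~4]{martinez2}.
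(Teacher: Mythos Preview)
Your proposal is correct and mirrors the paper's approach almost exactly: define linearized Goppa codes via skew residues, prove the identity $\LRS(k,\underline c,\underline V)^\perp = \LG(n{-}k,\underline c^\vee,\underline V^\perp)$ by combining the residue formula with the commutation of the reduced trace with evaluation and of duality with residues (plus a dimension count), and then invoke the noncanonical isomorphism $\LG \simeq \LRS$. The only minor imprecision is that skew residues land in the quotient algebra $\Aplus/N\Aplus$ rather than in a residue \emph{field}, and it is only after applying $\Trd$ that one gets values in $F$ to which the residue formula applies; otherwise your outline matches the paper's structure step for step.
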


The article is organized as follows.
In Section \ref{sec:LRS}, we introduce linearized Reed-Solomon codes; we 
basically follow Martínez-Peñas' treatment but reformulate it in 
a slightly different language which will help us afterwards to carry 
out our constructions.
In Section \ref{sec:trd}, we introduce the reduced trace maps, we show
that they commute with evaluation morphisms and give entirely explicit
formulas for them.
Section \ref{sec:residue} is devoted to the theory of residues: we 
define them and prove a noncommutative analogue of the residue formula.
Duality questions are discussed in Section \ref{sec:duality}: we
define a duality on Ore rings and establish useful commutation results
with evaluation morphisms and residue maps.
Finally, the construction of linearized Goppa codes and the duality
theorem is addressed in Section~\ref{sec:LG}.

\section{From Ore polynomials to codes}
\label{sec:LRS}

The aim of this section is to recall Martínez-Peñas' construction of 
linearized Reed-Solomon codes~\cite{martinez}. Actually, our presentation 
differs slightly from that of \emph{loc. cit.} in that it takes place in
a more restricted setting which allows us to use more powerful arguments 
in some places and to adopt a more conceptual view (avoiding for instance
the use of $P$-basis). This perspective on linearized Reed-Solomon codes 
will be quite useful for later developments we shall achieve in this
article.
For this reason, we have chosen to do a complete exposition of the
theory and, in particular, include full detailed proofs.

Throughout this article, we consider a field $K$ equipped with 
a ring homomorphism $\theta : K \rightarrow K$ and a
$\theta$-derivation $\delta : K \rightarrow K$, that is,
by definition, an additive mapping such that $\delta(ab) = 
\theta(a)\delta(b) + \delta(a)b$ for all $a, b \in K$.

We denote by $F$ the subfield of $K$ consisting of elements $a \in K$ 
such that $\theta(a) = a$ and $\delta(a) = 0$. \emph{We will always 
assume that the extension $K/F$ is finite}.
This hypothesis implies in particular that $\theta$ has finite order
and hence is bijective.

\subsection{Ore polynomials}
\label{ssec:Ore}

\begin{definition}
The \emph{ring of Ore polynomials} $K[X;\theta, \delta]$ is the ring 
whose elements are polynomials in $X$ over $K$ endowed with the usual 
addition and the multiplication defined by the rule:
\begin{equation*}
X a = \theta(a)X + \delta(a), \ \forall a \in K
\end{equation*}
\end{definition}

When $\theta = \id_K$ and $\delta = 0$, the ring $K[X;\theta, \delta]$
is nothing but the ring of usual univariate polynomials in $X$. In
what follows, in order to avoid this trivial cornercase, we shall
always suppose that $(\theta, \delta) \neq (\id_K, 0)$.
This additional assumption ensures in particular that $F$ is a
\emph{strict} subfield of $K$.

Although $K[X;\theta, \delta]$ is noncommutative, it shares many 
properties with the ring of usual polynomials.
First of all, we notice that the notion of degree extends 
\textit{verbatim} to Ore polynomials: if $P = \sum_i a_iX^i \in 
K[X;\theta,\delta]$, its degree is the largest integer $i$ for which $a_i \ne 0$. 
Besides, $K[X;\theta,\delta]$ is endowed with a right (resp. left\footnote{For
the left division, we use that $\theta$ is bijective.}) Euclidean 
division: if 
$A, B \in K[X;\theta,\delta]$ with $B \ne 0$, there exist unique $Q, R \in K[X;\theta,\delta]$ 
such that $A = QB+R$ (resp. $A = BQ + R$) and $\deg R < \deg B$. 
This result has the usual consequences: the noncommutative ring 
$K[X;\theta,\delta]$ is left- and right-principal, it admits GCDs and LCMs (on the 
left and on the right) and those can be computed using a noncommutative 
version of the Euclidean algorithm. In what follows, we will denote by 
$A\%B$ the remainder in the right division of $A$ by $B$.

\subsubsection*{Hilbert twist}

An important tool in the study of Ore polynomials is that of 
\emph{Hilbert twist}; this is an affine change of variables which has 
the effect of modifying the derivation. Set $\delta_0 = \theta{-}\id_K$; 
one checks that it is a $\theta$-derivation and, consequently, 
that $\delta + a \delta_0$ is also a $\theta$-derivation for all 
$a \in K$.

\begin{proposition}
\label{prop:hilbert}
For any $a \in K$, the mapping:
\begin{equation*}
K[X; \theta, \delta] 
\stackrel{\sim}{\longrightarrow} K[X; \theta, \delta{+}a \delta_0], 
\quad X \mapsto X+a
\end{equation*}
is an isomorphism of rings.
\end{proposition}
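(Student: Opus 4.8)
The plan is to write down the map explicitly and verify directly that it is a bijective ring homomorphism; one could also phrase this via the universal property of Ore extensions, but I prefer to keep the argument self-contained. Concretely, I would define
$\phi \colon K[X;\theta,\delta] \to K[X;\theta,\delta{+}a\delta_0]$
to be the unique additive map restricting to the identity on $K$ and sending $X^i$ to $(X+a)^i$ for every $i \geq 0$, the powers being computed in the target ring; thus $\phi\big(\sum_i c_i X^i\big) = \sum_i c_i (X+a)^i$. Since $(X+a)^i = X^i + (\text{terms of degree} < i)$, the matrix of $\phi$ in the monomial $K$-bases (for left multiplication by scalars) is unipotent upper triangular, so $\phi$ is automatically bijective. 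It therefore only remains to prove that $\phi$ is multiplicative.

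The heart of the matter is the commutation identity $(X+a)\,b = \theta(b)(X+a) + \delta(b)$, to be checked in $K[X;\theta,\delta{+}a\delta_0]$ for every $b \in K$. This is where $\delta_0 = \theta - \id_K$ enters: in the twisted ring one has $Xb = \theta(b)X + \delta(b) + a\big(\theta(b) - b\big)$, and adding $ab$ on the left gives $(X+a)b = \theta(b)X + \delta(b) + a\theta(b) = \theta(b)(X+a) + \delta(b)$, as wanted. (That $\delta + a\delta_0$ is again a $\theta$-derivation, so that the target ring is legitimate, has already been noted before the statement.) Granting this identity, multiplicativity of $\phi$ reduces — using additivity, left $K$-linearity (clear from the definition), and an induction establishing $\phi(X^j P) = (X+a)^j\,\phi(P)$ — to the single check $\phi(XP) = (X+a)\,\phi(P)$ for $P = \sum_i c_i X^i$. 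Expanding $XP = \sum_i \theta(c_i)X^{i+1} + \sum_i \delta(c_i)X^i$ and applying $\phi$ produces $\sum_i \theta(c_i)(X+a)^{i+1} + \sum_i \delta(c_i)(X+a)^i$, whereas applying the commutation identity termwise in $(X+a)\phi(P) = \sum_i (X+a)c_i(X+a)^i$ yields precisely the same expression; so the two agree.

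Finally, for the inverse I would observe that $(\delta{+}a\delta_0) + (-a)\delta_0 = \delta$, so the identical construction with $(\delta,a)$ replaced by $(\delta{+}a\delta_0,-a)$ furnishes a ring homomorphism in the opposite direction sending $X$ to $X-a$; the two composites fix $K$ and send $X$ to $X$, hence coincide with the identity on a generating set and are therefore the identity. (Alternatively one may simply invoke the bijectivity established above.) I do not expect a real obstacle here: the argument is essentially bookkeeping, and the only two places calling for care are the termwise use of the twisted commutation rule in the multiplicativity check and the unipotent-triangularity observation underpinning bijectivity.
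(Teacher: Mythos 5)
Your proof is correct and rests on exactly the same key step as the paper's, namely verifying the commutation identity $(X+a)b = \theta(b)(X+a) + \delta(b)$ in $K[X;\theta,\delta+a\delta_0]$; you merely spell out the bookkeeping (multiplicativity via induction, bijectivity via unipotent triangularity or via the inverse twist by $-a$) that the paper leaves implicit.
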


\begin{proof}
It suffices to check that 
$(X{+}a) b = \theta(b) (X{+}a) + \delta(b)$ holds
in the Ore ring $K[X; \theta, \delta{+}a \delta_0]$
for all $b \in K$, which is a simple calculation.
\end{proof}

\noindent
In addition, when $\theta$ is not the identity, one can classify 
the $\theta$-derivations of~$K$.

\begin{proposition}
\label{prop:theta-derivations}
If $\theta \neq \id_K$,
all $\theta$-derivations of $K$ are of the form $a \delta_0$
with $a \in K$.
\end{proposition}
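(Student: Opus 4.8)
The plan is to reduce to a derivation vanishing at a suitable point and then exploit the commutativity of $K$ together with the twisted Leibniz rule. Since $\theta \neq \id_K$, I would first fix an element $t \in K$ with $\theta(t) \neq t$, so that $\delta_0(t) = \theta(t) - t$ is a nonzero element of $K$. Then, given an arbitrary $\theta$-derivation $\delta$ of $K$, I would set $a = \delta(t)/\delta_0(t) \in K$ and introduce $\delta' = \delta - a\delta_0$. By the remark preceding Proposition~\ref{prop:hilbert} (applied with $-a$ in place of $a$, starting from the zero $\theta$-derivation), $\delta'$ is again a $\theta$-derivation of $K$, and by construction $\delta'(t) = \delta(t) - a\,\delta_0(t) = 0$. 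The goal then becomes to prove that $\delta'$ is identically zero, which immediately gives $\delta = a\delta_0$.

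To show $\delta' = 0$, I would pick an arbitrary $x \in K$ and compute $\delta'(tx)$ in two ways using the twisted Leibniz rule $\delta'(uv) = \theta(u)\delta'(v) + \delta'(u)v$. Expanding the product as $tx$ gives $\delta'(tx) = \theta(t)\delta'(x) + \delta'(t)x = \theta(t)\delta'(x)$, since $\delta'(t) = 0$. Expanding the same product as $xt$ (here I use that $K$ is commutative, so $tx = xt$) gives $\delta'(xt) = \theta(x)\delta'(t) + \delta'(x)t = \delta'(x)t$. Equating the two expressions yields $\bigl(\theta(t) - t\bigr)\delta'(x) = 0$ in $K$. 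Since $\theta(t) - t \neq 0$ and $K$ is a field, this forces $\delta'(x) = 0$.

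As $x$ was arbitrary, $\delta' = 0$, hence $\delta = a\delta_0$, which is the claimed form. There is no real obstacle here: the only point deserving a moment of care is the verification that $\delta'$ is still a $\theta$-derivation and that it annihilates $t$, both of which are formal; the substance of the argument is the two-way computation of $\delta'(tx)$, where the interplay between the commutativity of $K$ and the $\theta$-twist in the Leibniz rule does all the work.
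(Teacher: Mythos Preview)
Your proof is correct and follows essentially the same approach as the paper: both arguments fix an element $t$ (the paper calls it $x_0$) with $\theta(t)\neq t$ and exploit the two-way expansion of $\delta(tx)$ via the twisted Leibniz rule together with the commutativity of $K$. The only cosmetic difference is that you first subtract $a\delta_0$ and then show the resulting $\theta$-derivation vanishes, whereas the paper keeps $\delta$ and solves directly for $\delta(x)=\dfrac{\delta(t)}{\delta_0(t)}\,\delta_0(x)$; the substance is identical.
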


\begin{proof}
Let $x_0 \in K$ with $\theta(x_0) \neq x_0$, \emph{i.e.}
$\delta_0(x_0) \neq 0$.
Given a $\theta$-derivation $\delta$ and $x \in K$, we write:
\begin{equation*}
\delta(x_0 x) = \theta(x_0)\delta(x) + \delta(x_0)x 
= \theta(x)\delta(x_0) + \delta(x)x_0
\end{equation*}
for what we deduce that
$\delta(x) = \frac{\delta(x_0)}{\delta_0(x_0)}\delta_0(x)$
and finally that $\delta$ is proportionnal to $\delta_0$.
\end{proof}

Combining Propositions~\ref{prop:hilbert}
and~\ref{prop:theta-derivations}, we find that the Ore polynomial
ring $K[X;\theta,\delta]$ is isomorphic to $K[X;\theta,0]$ as
soon as $\theta$ is not the identity. Therefore, one can split
the study of Ore polynomials over fields into two cases: the 
``endomorphic'' one where $\delta = 0$ and the ``differential''
one where $\theta = \id_K$.

\subsubsection*{The centre}

Recall that the centre of a noncommutative ring $A$ is by definition 
the subset of $A$ consisting of elements $x$ such that $xy=yx$ for all 
$y \in A$; in particular, the centre is always commutative.

It turns out that the centre of Ore polynomial rings plays a quite 
important role and can be explicitely determined.
Precisely, when $\delta = 0$, one checks that the centre of
$K[X;\theta,0]$ is $F[X^s]$ where $s$ is the order of $\theta$
(and we recall that $F$ is by definition the subfield of $K$ fixed
by $\theta$). The case where $\theta \neq \id_K$ reduces to
the previous one using Hilbert twist;
indeed, from Proposition~\ref{prop:theta-derivations}, we know
that $\delta = a \delta_0$ for some $a \in K$ and it then follows
from Proposition~\ref{prop:hilbert} that the centre of 
$K[X;\theta,\delta]$ is $F[(X{+}a)^s]$ where, again, $s$ denotes
the order of $\theta$.

The case where $\theta = \id_K$ is, by far, the more difficult
one. By chance, it has already been studied in details 
in~\cite{arriagada}.
Recall that, in this case, $F$ is defined as the subfield of
constants of $K$.
Let $Z(X) \in K[X]$ be the monic polynomial of minimal degree
annihilating $\delta$ (which exists because all the $\delta^i$
are $F$-linear mappings of $K$, which is finite dimensional over
$F$).
The centre of $K[X;\id_K, \delta]$ is then the subring
$F[Z(X)]$. Besides, it is proved in~\emph{loc. cit.} that $Z(X)$
is a monic linearized polynomial with coefficients in $F$, 
\emph{i.e.} it takes the form:
\begin{equation}
\label{eq:ZX}
Z(X) = X^{p^r} + z_{r-1} X^{p^{r-1}} + \cdots + 
z_1 X^p + z_0 X \quad (z_i \in F)
\end{equation}
and its degree $p^r$ is the degree of the extension $K/F$.

To summarize, the following proposition holds in all cases.

\begin{proposition}
\label{prop:centre}
There exists a monic Ore polynomial $Z(X) \in K[X;\theta,\delta]$ 
such that the centre of $K[X;\theta,\delta]$ is $F[Z(X)]$.
Moreover $\deg Z(X) = [K:F]$.
\end{proposition}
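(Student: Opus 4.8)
The plan is to follow the case split already outlined above: according to whether $\delta=0$ (the \emph{endomorphic} case), $\theta\neq\id_K$ (the \emph{twisted} case), or $\theta=\id_K$ with $\delta\neq0$ (the \emph{differential} case), one exhibits an explicit $Z(X)$ and checks that its degree equals $[K:F]$. The first two cases are elementary and essentially reduce to one another; the third is the only difficult one and will be deferred to~\cite{arriagada}.

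For $\delta=0$, I claim that $Z(X)=X^s$ works, where $s=\ord(\theta)$. Since $\theta^s=\id_K$, one has $X^s c = \theta^s(c)X^s = cX^s$ for every $c\in K$, so $X^s$ is central and $F[X^s]$ lies in the centre. Conversely, let $P=\sum_i c_iX^i$ be central. Comparing the coefficients of $XP$ and $PX$ gives $\theta(c_i)=c_i$, hence $c_i\in F$, for all $i$; comparing $Pc=\sum_i c_i\theta^i(c)X^i$ with $cP$ for all $c\in K$ gives $c_i\bigl(\theta^i(c)-c\bigr)=0$, hence $c_i=0$ whenever $\theta^i\neq\id_K$, that is, whenever $s\nmid i$. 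Thus $P\in F[X^s]$ and the centre is exactly $F[X^s]$. For the degree: $\theta$ is bijective (because $K/F$ is finite) and fixes $F$, so it generates a group of field automorphisms of $K$ of order $s$ with fixed field $F$, and Artin's theorem yields $[K:F]=s=\deg X^s$.

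When $\theta\neq\id_K$, I would reduce to the previous case. By Proposition~\ref{prop:theta-derivations} we may write $\delta=a\delta_0$ for some $a\in K$, and Proposition~\ref{prop:hilbert}, applied to $K[X;\theta,0]$, provides a ring isomorphism $K[X;\theta,0]\xrightarrow{\ \sim\ }K[X;\theta,a\delta_0]=K[X;\theta,\delta]$ sending $X$ to $X+a$. This isomorphism carries the centre onto the centre, and $F$ is the same subfield in both rings: since $\theta\neq\id_K$, one has $\delta_0(x)=0\iff\theta(x)=x$, so in either ring $F=\{x\in K:\theta(x)=x\}$. Hence the centre of $K[X;\theta,\delta]$ equals $F[(X{+}a)^s]$, and one may take $Z(X)=(X{+}a)^s$, which is monic of degree $s=[K:F]$ by the previous paragraph. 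Together, these two cases cover everything except $\theta=\id_K$, $\delta\neq0$ (the remaining case $\theta=\id_K$, $\delta=0$ being excluded by our standing assumption $(\theta,\delta)\neq(\id_K,0)$).

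Finally, the differential case $\theta=\id_K$ is the genuinely hard one, and here I would simply invoke~\cite{arriagada}. The point is that $\delta$ is an $F$-linear endomorphism of the finite-dimensional $F$-vector space $K$; a direct computation shows that a polynomial $\sum_i z_iX^i$ is central only if every $z_i$ lies in $F$ and only if the polynomial $\sum_{i\geq 1}z_iT^i\in F[T]$ annihilates the operator $\delta$, so that central polynomials give rise to polynomial relations satisfied by $\delta$ over $F$. The substantial content of \emph{loc. cit.} is that the monic generator $Z(X)$ of the centre is a linearized polynomial over $F$ of the shape~\eqref{eq:ZX} whose degree $p^r$ equals $[K:F]$. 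Granting this, the centre is $F[Z(X)]$ with $\deg Z(X)=[K:F]$, as wanted. I expect this case to be the only real obstacle: the endomorphic and twisted cases above are immediate, whereas determining the exact degree of $Z(X)$ when $\theta=\id_K$ genuinely requires the structural analysis of~\cite{arriagada}.
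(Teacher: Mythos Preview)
Your proposal is correct and follows essentially the same approach as the paper: the three-way case split into $\delta=0$, $\theta\neq\id_K$ (reduced to the first via Hilbert twist through Propositions~\ref{prop:hilbert} and~\ref{prop:theta-derivations}), and $\theta=\id_K$ (deferred to~\cite{arriagada}) is exactly what the paper does in the discussion preceding the proposition. You supply somewhat more detail in the $\delta=0$ case---the explicit coefficient comparison and the appeal to Artin's theorem for $[K:F]=s$---whereas the paper simply asserts the result, but the substance is the same.
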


\begin{remark}
The equality
$\sum_{i=0}^d a_iZ(X)^i = \sum_{i=0}^e b_iZ(X)^i$
readily implies that $d=e$ and $a_i = b_i$ for all $i$.
As a consequence, the centre $F[Z(X)]$ is an actual polynomial 
ring in one variable with coefficients in $F$.
\end{remark}

\begin{remark}
\label{rem:ZX}
The condition of Proposition~\ref{prop:centre} only
determines $Z(X)$ uniquely but only up to an additive constant in 
$F$. However, one can always normalize it by requiring it to be
a linearized polynomial when $\theta = \id_K$, or a power of
a Ore polynomial of degree $1$ otherwise.
\end{remark}

\subsection{On the evaluation of Ore polynomials}
\label{ssec:evalOre}

Evaluating Ore polynomials is not straightforward; indeed 
performing the substitution $X \mapsto c$ for some $c \in K$
does not define a ring homomorphism and hence is not relevant.
An option which is often considered (see for instance \cite{lam})
is to define $P(c)$ as the remainder in the division of $P$ by $X{-}c$.
However, in this article, we will follow a different path 
based on the notion of pseudo-linear morphism which was first
introduced by Jacobson in~\cite{jacobson2} and then further
developed by Leroy in~\cite{leroy}.

\subsubsection{Definition of evaluation maps}

We start by recalling Jacobson's definition of pseudo-linear
morphisms.

\begin{definition}
Let $M$ be a vector space over $K$.
A \textit{pseudo-linear endomorphism} $u : M \rightarrow M$ 
(with respect to $\theta$ and $\delta$) is an additive map verifying:
$$u(ax) = \theta(a)u(x) + \delta(a)x$$
for all $a \in K$ and $x \in M$.
\end{definition}

We observe that any pseudo-linear morphism is \textit{a fortiori} 
$F$-linear.
If $u : M \rightarrow M$ is a pseudo-linear morphism and $P
= \sum_{i} a_iX^i \in K[X;\theta,\delta]$ is a Ore polynomial, we
define $P(u) = \sum_i a_iu^i$. A simple computation then shows that
$P(u)\circ Q(u) = (PQ)(u)$ for all $P, Q \in K[X;\theta,\delta]$.
In other words, the mapping:
\begin{equation*}
\ev_u : K[X;\theta,\delta] \longrightarrow \End_F(M), \quad P(X) \mapsto P(u)
\end{equation*}
is a ring homomorphism (where $\End_F(M)$ denotes the ring of 
$F$-linear endomorphisms of $M$).
The case where $M$ is $K$ itself deserves particular attention.
Indeed, we first observe that evaluation is then closely related to 
Euclidean division thanks to the formula:
\begin{equation}
\label{eq:eveuc}
\textstyle \ev_u(P)(a) = (aP)\,\%\left(X-\frac{u(a)}{a}\right)
\end{equation}
which is correct for any pseudo-linear endomorphism $u$ of $K$, any $P 
\in K[X;\theta,\delta]$ and any $a \in K$, $a \neq 0$ (see also
\cite[Theorem~2.8]{leroy}). Second, we have a 
complete classification of pseudo-linear endomorphisms of $K$.

\begin{proposition}
The pseudo-linear endomorphisms of $K$ are exactly the maps of the form $\delta + c\theta$ with $c \in K$.
\end{proposition}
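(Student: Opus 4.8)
The plan is to check both inclusions. First I would verify that every map of the form $u = \delta + c\theta$ with $c \in K$ is indeed a pseudo-linear endomorphism of $K$. This is a direct computation: for $a, x \in K$ we have
\[
u(ax) = \delta(ax) + c\theta(ax) = \theta(a)\delta(x) + \delta(a)x + c\theta(a)\theta(x) = \theta(a)\bigl(\delta(x) + c\theta(x)\bigr) + \delta(a)x = \theta(a)u(x) + \delta(a)x,
\]
using the $\theta$-derivation property of $\delta$ and the multiplicativity of $\theta$. Additivity of $u$ is clear since both $\delta$ and $\theta$ are additive. So this direction is routine.

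For the converse, suppose $u : K \to K$ is pseudo-linear. Applying the defining relation with $x = 1$ gives $u(a) = u(a \cdot 1) = \theta(a)u(1) + \delta(a)$ for all $a \in K$. Setting $c = u(1) \in K$, this reads $u(a) = \delta(a) + c\theta(a)$, i.e. $u = \delta + c\theta$. So in fact the pseudo-linear endomorphism is entirely determined by its value at $1$, and every such value is allowed by the first part. There is essentially no obstacle here: the only mild point to note is that the argument uses $\theta(1) = 1$ (so that the constant $c$ is literally $u(1)$) and that $K$, being a field, contains $1$; both are automatic in the setting of the paper.

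In short, the proof is a two-line verification in each direction, the key observation being that pseudo-linearity forces $u$ to be determined by $u(1)$ via the Leibniz-type rule, and conversely any choice of $u(1)$ produces a valid pseudo-linear map. I do not expect any genuine difficulty; the statement is really a structural normalization lemma that will be used later to identify $\ev_u$ for concrete choices of $u$ (for instance $u = \delta$ itself, corresponding to $c = 0$, or twisted versions arising from Hilbert twists).
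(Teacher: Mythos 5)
Your proof is correct and follows exactly the same approach as the paper: verify directly that $\delta + c\theta$ is pseudo-linear, then for the converse set $x=1$ in the defining relation to get $u(a) = \theta(a)u(1) + \delta(a)$, identifying $c = u(1)$. The paper merely compresses the forward direction to ``it is straightforward to check''; otherwise the two arguments coincide.
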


\begin{proof}
It is straightforward to check that $\delta + c \theta$ is a 
pseudo-linear morphism for any $c \in K$.
Conversely, let $u : K \to K$ be a pseudo-linear endomorphism.
For $x \in K$, it follows from the definition that $u(x)
= \theta(x) u(1) + \delta(x)$. Therefore $u = \delta + c \theta$
with $c = u(1)$.
\end{proof}

In what follows, we will often use the notation $\ev_c$ in place of 
$\ev_{\delta+c\theta}$.
We notice that those evaluation maps are compatible with Hilbert twists
in the sense that the diagram below commutes for all $a, c \in K$:
$$\xymatrix @C=5em {
K[X;\theta,\delta] \ar[r]^-{X \mapsto X+a} \ar[d]_-{\ev_c} &
K[X;\theta,\delta{+}a\delta_0] \ar[d]^-{\ev_{c-a}} \\
\End_F(K) \ar@{=}[r] & \End_F(K)}$$

\subsubsection{The kernel of the evaluation maps}

We recall from Proposition~\ref{prop:centre}
that the centre of $K[X;\theta,\delta]$ is 
of the form $F[Z(X)]$ and that the Ore polynomial $Z(X)$ can be
normalized using the additional conditions of Remark~\ref{rem:ZX}.
For $c \in K$, we define $\upsilon(c)$ as the remainder of the right Euclidean
division of $Z(X)$ by $X{-}c$; by definition, $Z(X){-}\upsilon(c)$ is then
a right multiple of $X{-}c$.

\begin{definition}
\label{def:ramified}
We say that an element $c \in K$ is \emph{ramified} (with respect to 
$\theta$ and $\delta$) if $\delta + c\theta$ is a scalar multiple of 
$\id_K$.
Otherwise, we say that $c$ is \emph{unramified}.
\end{definition}

One checks that 
$K$ contains at most one ramified element.
Precisely, when $\theta = \id_K$ (and $\delta \neq 0$), all
elements of $K$ are unramified while, when $\theta \neq \id_K$ 
and $\delta = a \delta_0$, the unique ramified element of $K$ is
$-a$.

\begin{proposition}
\label{prop:kerevc}
Let $c$ be an unramified element of $K$.
Then $\ev_c : K[X;\theta,\delta] \to \End_F(K)$ is surjective
and its kernel is the principal left ideal generated by 
$Z(X)-\upsilon(c)$.
\end{proposition}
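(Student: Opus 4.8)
The plan is to first pin down the dimensions on both sides, and then show that $\ev_c$ factors through the quotient $K[X;\theta,\delta]/(Z(X)-\upsilon(c))$ and that the induced map is an isomorphism by a dimension count. Concretely, I would proceed as follows.

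First I would record that $\ev_c$ is a homomorphism of $F$-algebras where the source has a natural $Z(X)$-module structure (via the centre $F[Z(X)]$) and the target $\End_F(K)$ is an $F$-algebra of dimension $[K:F]^2$. I would check that $Z(X)-\upsilon(c)$ does lie in $\ker \ev_c$: since $Z(X)-\upsilon(c)$ is a right multiple of $X-c$, say $Z(X)-\upsilon(c) = Q\cdot(X-c)$, it suffices to see that $\ev_c(X-c) = 0$, i.e. that the pseudo-linear endomorphism $\delta + c\theta$ sends $1$ to $c$ — which is exactly the computation $u(1)=c$ for $u = \delta+c\theta$ recalled just above. Hence $\ev_c$ descends to a well-defined $F$-algebra map $\overline{\ev_c} : K[X;\theta,\delta]/\bigl(Z(X)-\upsilon(c)\bigr) \to \End_F(K)$. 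Next I would compute the $F$-dimension of the left-hand quotient: by right Euclidean division every class has a unique representative of degree $< \deg\bigl(Z(X)-\upsilon(c)\bigr) = \deg Z(X) = [K:F]$ (using Proposition~\ref{prop:centre}), and each such representative is determined by $[K:F]$ coefficients in $K$, so the quotient has $F$-dimension $[K:F]^2 = \dim_F \End_F(K)$.

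Given the equality of dimensions, it remains to prove that $\overline{\ev_c}$ is injective (equivalently surjective). This is where the unramifiedness of $c$ enters, and I expect it to be the main obstacle. The cleanest route is to show $\ker \ev_c$ is contained in the left ideal generated by $Z(X)-\upsilon(c)$: suppose $P \in \ker \ev_c$ and write $P = A\cdot\bigl(Z(X)-\upsilon(c)\bigr) + R$ by right division, where $\deg R < [K:F]$; then $R \in \ker \ev_c$ as well, so it is enough to show that the only Ore polynomial of degree $< [K:F]$ killed by $\ev_c$ is $0$. Equivalently, the operators $\id, (\delta+c\theta), (\delta+c\theta)^2, \dots, (\delta+c\theta)^{[K:F]-1}$ should be linearly independent over $K$ inside $\End_F(K)$, where "linearly independent over $K$" means: if $\sum_i a_i (\delta+c\theta)^i = 0$ with $a_i \in K$ then all $a_i = 0$. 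I would prove this by a minimal-counterexample / lowest-degree argument: take a nonzero relation $\sum_{i=0}^{m} a_i u^i = 0$ of minimal length with $u = \delta+c\theta$ and $a_m \neq 0$; after dividing we may assume $a_m = 1$, so $u^m = -\sum_{i<m} a_i u^i$ as an identity of $F$-linear (not $K$-linear) maps. Applying this identity to a product $xy$ and using that $u$ is pseudo-linear (so $u(ay)=\theta(a)u(y)+\delta(a)y$), I would expand both sides and subtract a scalar multiple of the original relation to obtain a shorter relation; forcing that shorter relation to be trivial pins down the $a_i$ in terms of $\theta$ and $\delta$ applied to a suitable element, and the unramifiedness of $c$ — i.e. that $u = \delta + c\theta$ is \emph{not} a scalar multiple of $\id_K$ — is exactly what guarantees we can choose an element $x$ with $u(x)/x \notin$ the forbidden locus so that the truncated relation is genuinely shorter and hence, by minimality, zero. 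Chasing the resulting constraints then forces the original relation to have been trivial, contradiction. This linear-independence statement is essentially a noncommutative analogue of the Artin–Dedekind independence of characters, and the key input is precisely Definition~\ref{def:ramified}: when $c$ is ramified, $u$ is scalar and $u^2$ is already $K$-proportional to $u$, so the statement genuinely fails, which is why the hypothesis is needed.

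Once injectivity of $\overline{\ev_c}$ is established, surjectivity of $\ev_c$ follows from the dimension count, and the identification $\ker\ev_c = \bigl(Z(X)-\upsilon(c)\bigr)$ as a left ideal is immediate from the inclusion proved above together with the fact that $Z(X)-\upsilon(c) \in \ker \ev_c$. An alternative to the bare-hands independence argument — which I would mention if the above gets unwieldy — is to use formula~\eqref{eq:eveuc}: it identifies $\ev_c(P)(a)$ with the remainder $(aP)\%\bigl(X - \tfrac{u(a)}{a}\bigr)$, and when $c$ is unramified the map $a \mapsto u(a)/a = \delta(a)/a + c\,\theta(a)/a$ takes enough distinct values that a Ore polynomial $P$ with $\ev_c(P) = 0$ would have to be right-divisible by $X - \tfrac{u(a)}{a}$ for many pairwise non-associate linear factors, forcing $\deg P \geq [K:F]$ unless $P$ is a left multiple of $Z(X)-\upsilon(c)$ — again recovering the claim. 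Either way, the crux is the same: translating "unramified" into a quantitative statement about how many distinct linear right-divisors $Z(X)-\upsilon(c)$ has, or equivalently into the $K$-linear independence of the powers of $\delta+c\theta$.
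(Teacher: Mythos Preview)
Your overall architecture --- show $Z(X)-\upsilon(c)\in\ker\ev_c$, then compare $F$-dimensions, then prove $K$-linear independence of $\id,u,\dots,u^{s-1}$ with $u=\delta+c\theta$ --- is exactly the paper's, but the first step contains a genuine error. You argue that $\ev_c(Z(X)-\upsilon(c))=\ev_c(Q)\circ\ev_c(X-c)$ vanishes because $\ev_c(X-c)=0$, and you justify the latter by $u(1)=c$. But $u(1)=c$ only says $\ev_c(X-c)$ vanishes \emph{at~$1$}; the operator $\ev_c(X-c)=\delta+c\theta-c\,\id_K$ is not the zero map. In fact a short computation shows $\ev_c(X-c)=0$ precisely when $\delta=-c\,\delta_0$, i.e.\ precisely when $c$ is the \emph{ramified} element, so your argument is backwards on this point.

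The fix is to use centrality instead of the factorisation through $X-c$. Since $Z(X)$ lies in the centre, $\ev_c(Z(X))$ commutes with every $\ev_c(a)=\mu_a$ ($a\in K$); an $F$-linear endomorphism of $K$ commuting with all $\mu_a$ is $K$-linear, hence multiplication by some $\lambda\in K$. Now your factorisation $Z(X)-\upsilon(c)=Q\cdot(X-c)$ together with $\ev_c(X-c)(1)=0$ does give $\ev_c(Z(X)-\upsilon(c))(1)=0$, so $\lambda=\upsilon(c)$ and $\ev_c(Z(X)-\upsilon(c))=0$. The paper reaches the same conclusion by a case split: for $\delta=0$ one computes $(c\theta)^s$ directly as a scalar and checks it kills~$1$; for $\theta=\id_K$ one uses the identity $Z(X-c)=Z(X)-\upsilon(c)$ in $\Aplus$.

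For the linear-independence step, your minimal-relation sketch is plausible but vague at the key moment (``forcing that shorter relation to be trivial pins down the $a_i$\dots''). The paper avoids this by again splitting cases: when $\delta=0$, Artin's independence of characters gives $K$-freeness of $(\id,\theta,\dots,\theta^{s-1})$, whence that of the $(c\theta)^i$ since $c\neq0$; when $\theta=\id_K$, the very definition of $Z(X)$ as the minimal polynomial of $\delta$ gives $K$-freeness of $(\delta^i)_{i<p^r}$, and a triangular change of basis transfers this to $((\delta+c)^i)_{i<p^r}$. Either route is fine, but if you keep your uniform argument you should actually write out the commutator computation that produces the shorter relation.
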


\begin{proof}
Let us first assume that $\delta = 0$ and let $n$ be the order
of $\theta$. Then $Z(X) = X^n$ and $n = [K:F]$. Moreover, by 
Artin's linear independence theorem, we know that the family
$(\id_K, \theta, \ldots, \theta^{n-1})$ is $K$-free in $\End_F(K)$.
By comparing dimensions, it is then enough to prove that $\ev_c$
vanishes on $X^n - \upsilon(c)$. Write $\varphi = \ev_c\big(X^n - \upsilon(c)\big)
= (c\theta)^n - \upsilon(c) \:\id_K$.
On the one hand, a direct computation using that $\theta^n = \id_K$
indicates that $\varphi$ must be a multiple of $\id_K$.
On the other hand, the fact that $X^n - \upsilon(c)$ is a right multiple of 
$X-c$ implies that $\varphi$ vanishes at $1$. Putting these two
inputs together, we deduce that $\varphi$ vanishes, as wanted.
The case where $\theta \neq \id_K$ reduces to the previous one using 
a well-chosen Hilbert twist (see Propositions~\ref{prop:hilbert}
and~\ref{prop:theta-derivations}).

Finally, we suppose that $\theta = \id_K$. 
Recall that $Z(X)$ is defined in this case as the 
minimal polynomial of $\delta$ and that it is linearized polynomial 
with coefficients in $F$. Observe now that
\begin{equation}
\label{eq:ZXc}
Z(X{-}c) = Z(X) - \upsilon(c).
\end{equation}
Indeed, it follows from Proposition~\ref{prop:hilbert} that the
mapping $X \mapsto X{-}c$ induces an automorphism of 
$K[X;\id_K,\delta]$, implying that $Z(X{-}c)$ has to be central
and hence of the form $Z(X) - a$ for some $a \in F$. Taking
remainders modulo $X{-}c$, we finally find $a = \upsilon(c)$.

It follows from Eq.~\eqref{eq:ZXc} that $Z(X) - \upsilon(c)$ lies in the 
kernel of $\ev_c$. Moreover, the fact that $Z(X)$ is the minimal 
polynomial of $\delta$ says that the family $(\delta^i)_{0 \leq i < 
p^r}$ is linearly independent over $K$. Hence
the family $((\delta+c{\cdot}\id_K)^i)_{0 \leq i < p^r}$ is also,
implying eventually that the kernel of $\ev_c$ is exactly the ideal
generated by $Z(X) - \upsilon(c)$. The surjectivity of $\ev_c$ follows by
comparing dimensions over $F$ (and using that $\deg Z(X) = [K:F]$).
\end{proof}

An interesting (and unexpected) corollary of 
Proposition~\ref{prop:kerevc} is the following.

\begin{corollary}
\label{cor:imnu}
For all $c \in K$, we have $\upsilon(c) \in F$.
\end{corollary}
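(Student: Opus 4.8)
The plan is to exploit the classification of the kernel of $\ev_c$ given in Proposition~\ref{prop:kerevc}, combined with the fact that $\upsilon(c)$ is, by construction, the value at a single point of a Ore polynomial lying in that kernel. Fix an unramified element $c \in K$ (there is at most one ramified element, so such $c$ exists), and consider the Ore polynomial $Z(X) - \upsilon(c)$. By Proposition~\ref{prop:kerevc} it lies in the kernel of $\ev_c$, hence $\ev_c\big(Z(X)-\upsilon(c)\big) = 0$ in $\End_F(K)$; evaluating this endomorphism at $1 \in K$ gives $\ev_c(Z)(1) = \upsilon(c)$. So the first step is simply to record that $\upsilon(c) = \ev_c(Z)(1)$, i.e. $\upsilon(c)$ is obtained by letting the pseudo-linear operator $\delta + c\theta$ act on $1$ via the polynomial $Z$.

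The second and main step is to show this quantity lies in $F$, which I would do by a symmetry/uniqueness argument rather than a direct computation. The key observation is that $Z(X)$ is central in $K[X;\theta,\delta]$, hence commutes with every scalar $a \in K$; combined with the compatibility of evaluation maps with Hilbert twists (the commuting square displayed just before Section~1.2.2) and with the formula \eqref{eq:eveuc} relating $\ev_u(P)(a)$ to Euclidean division by $X - u(a)/a$, one sees that the value $\upsilon(c)$ does not actually depend on $c$ among unramified elements, and moreover is stable under the natural action of $\theta$. More concretely: for two unramified elements $c, c'$ the polynomials $Z(X)-\upsilon(c)$ and $Z(X)-\upsilon(c')$ are both central and both have $1$ in the kernel of the corresponding evaluation; reducing modulo $X - c$ identifies $\upsilon(c)$ with a residue that the centrality of $Z$ forces to be $\theta$-invariant and $\delta$-killed. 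I would then conclude that $\upsilon(c) \in F$ directly from the definition of $F$ as $\{x : \theta(x) = x,\ \delta(x) = 0\}$.

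Alternatively — and this is probably the cleanest route — I would argue case by case following the trichotomy already set up in the proof of Proposition~\ref{prop:kerevc}. In the case $\theta = \id_K$ we have the explicit identity \eqref{eq:ZXc}, $Z(X-c) = Z(X) - \upsilon(c)$, and since $Z$ is a linearized polynomial with coefficients in $F$ of the form \eqref{eq:ZX}, evaluating $Z$ at $-c$ and comparing constant terms shows $\upsilon(c) = -Z(c)$ lies in... no, $Z(c)$ need not be in $F$; rather, the point is that $Z(X) - \upsilon(c)$ being central forces $\upsilon(c) \in F$ outright, since by the Remark after Proposition~\ref{prop:centre} the centre is $F[Z(X)]$ and the degree-zero part of that ring is exactly $F$. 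In the cases $\delta = 0$ and $\theta \neq \id_K$, the same conclusion holds for the same reason: $Z(X) - \upsilon(c)$ lies in the kernel ideal, which is a two-sided consideration only insofar as we use that $Z(X)$ itself is central, and subtracting $\upsilon(c)$ must keep it central (since it generates a left ideal that is also the kernel of a ring map, hence two-sided), so $\upsilon(c) \in F[Z(X)] \cap K = F$.

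The main obstacle I anticipate is making rigorous the claim that $Z(X) - \upsilon(c)$ is \emph{central}, not merely a generator of a left ideal: a priori $\ker \ev_c$ is only asserted to be a principal \emph{left} ideal in Proposition~\ref{prop:kerevc}. But since $\ev_c$ is a ring homomorphism, its kernel is automatically a two-sided ideal; a two-sided ideal of $K[X;\theta,\delta]$ that is left-principal and generated by a monic element of degree $[K:F] = \deg Z$ must be generated by a central element of that degree, and $Z(X) - \upsilon(c)$ is the unique monic such generator. Matching it against $F[Z(X)]$, whose elements of degree $[K:F]$ are exactly $Z(X) + a$ with $a \in F$, forces $\upsilon(c) \in F$. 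Once that centrality point is nailed down, the corollary is immediate.
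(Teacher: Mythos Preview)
Your final paragraph lands on the same core idea as the paper: since $\ev_c$ is a ring homomorphism, its kernel is a two-sided ideal, and you want to leverage this to force $\upsilon(c)\in F$. That is exactly the paper's strategy. The earlier paragraphs (the symmetry argument, the case-by-case attempt) are false starts and can be discarded; you essentially acknowledge this yourself.

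The genuine gap is the assertion that ``a two-sided ideal of $K[X;\theta,\delta]$ that is left-principal and generated by a monic element of degree $[K:F]=\deg Z$ must be generated by a central element of that degree.'' This is not a fact you can simply quote: generators of two-sided ideals in Ore rings are in general only \emph{normal}, not central (think of $X$ in $K[X;\theta,0]$). You need to prove centrality, and the degree constraint alone does not do it. The paper closes this gap with a direct commutator computation: writing $a=\upsilon(c)$, the two-sided ideal generated by $Z(X)-a$ must contain the commutator
\[
[Z(X)-a,\,X] \;=\; -[a,X] \;=\; (\theta(a)-a)X + \delta(a),
\]
since $Z(X)$ is already central. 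This commutator is therefore a right multiple of $Z(X)-a$, but its degree is at most $1<\deg Z$, so it must vanish. Hence $\theta(a)=a$ and $\delta(a)=0$, i.e.\ $a\in F$. This is the missing step in your argument; once you insert it, your proof and the paper's coincide.

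One minor point: both your argument and the paper's invoke Proposition~\ref{prop:kerevc}, which assumes $c$ is unramified. For the (at most one) ramified element $c=-a$ when $\theta\neq\id_K$ and $\delta=a\delta_0$, one has $\upsilon(-a)=N_{K/F}(0)=0\in F$ directly, so the statement holds trivially there.
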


\begin{proof}
To simplify notations, we write $a = \upsilon(c)$.
From Proposition~\ref{prop:kerevc}, we derive that the left
principal ideal generated by $Z(X){-}a$ is actually two-sided
because it appears as the kernel of a ring homomorphism.
In particular, it contains the commutator of $Z(X){-}a$ and
$X$, which is $(\theta(a) - a) X + \delta(a)$.
Therefore the latter Ore polynomial is right-divisible by 
$Z(X){-}a$ and so, by comparing degrees, it must vanish.
Hence $\theta(a) = a$ and $\delta(a) = 0$, which exactly means
that $a \in F$.
\end{proof}

Beyond Corollary~\ref{cor:imnu}, it is possible to write down 
explicit formulas for $\upsilon(c)$.
Concretely, when $\delta = 0$, it is easy to check that the
$\upsilon = N_{K/F}$, the norm of $K$ over $F$. As usual,
the case where $\delta = a \delta_0$ reduces to the previous one 
using Hilbert twist; in this setting, we find $\upsilon(c) = N_{K/F}(a+c)$.
Finally, when $\theta = \id_K$, it follows from the computations of
\cite[No~12]{jacobson} (see in particular Eq.~(35)) that
\begin{equation}
\label{eq:upsilon}
\upsilon(c) = 
\sum_{i=0}^r \sum_{j=0}^i \left(z_i\delta^{p^j-1}(c) \right)^{p^{i-j}}
\end{equation}
where the $z_i$'s are the coefficients of $Z(X)$ as in
Eq.~\eqref{eq:ZX}.
These explicit descriptions provide an alternative proof of
Corollary~\ref{cor:imnu}.

\subsubsection{Zeros of Ore polynomials}

In the standard commutative case, it is well known that the number of roots of a polynomial cannot exceed its degree. In the Ore setting, analoguous bounds exist.

\begin{proposition}
\label{prop:bounddeg}
Let $c$ be an unramified element of $K$ and
let $P \in K[X;\theta,\delta]$ be a nonzero polynomial. We have
$$\dim_F \ker(\ev_c(P)) \leq \deg P$$
and equality holds if and only if $P$ divides $Z(X) - \upsilon(c)$.
\end{proposition}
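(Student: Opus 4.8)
The plan is to reduce everything to the structure of $\ev_c$ as a ring homomorphism with kernel $(Z(X) - \upsilon(c))$, as provided by Proposition~\ref{prop:kerevc}. Write $R = K[X;\theta,\delta]$, $Z_c = Z(X) - \upsilon(c)$, and $\varphi = \ev_c(P) \in \End_F(K)$. First I would handle the inequality. Let $D = \gcd$ (on the right) of $P$ and $Z_c$; since $R$ is right-principal this is well-defined, and one may normalize $D$ monic. By a standard property of evaluation, $\ker \ev_c(P) \supseteq \ker \ev_c(Z_c) = K$ is the wrong containment to want; instead I would argue directly that $\ker \varphi$ is exactly the image under $\ev_c$ of the left annihilator structure. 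Concretely: for $a \in K$ with $a \neq 0$, formula~\eqref{eq:eveuc} gives $\ev_c(P)(a) = (aP)\,\%\,(X - \tfrac{u(a)}{a})$ where $u = \delta + c\theta$; this vanishes iff $X - \tfrac{u(a)}{a}$ divides $aP$ on the right. Translating through the theory of (right) roots of Ore polynomials, the set of $a$ killed by $\varphi$ corresponds to a factor of $P$ on the right, and the $F$-dimension of this kernel is bounded by the degree of the largest right factor of $P$ whose roots all "come from" points, hence by $\deg P$. Cleanly: I expect the cleanest route is to observe that $\ker\varphi$ is an $F$-subspace of $K$ of dimension $d$ iff there is a monic Ore polynomial $G$ of degree $d$ (the minimal/annihilating polynomial, on the appropriate side, of the restriction of $u$) with $G$ right-dividing $P$; then $d \le \deg P$ is immediate.

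For the equality case, suppose first that $P \mid Z_c$ on the right, say $Z_c = Q P$ with $\deg Q = \deg Z_c - \deg P = [K:F] - \deg P$. Since $\ev_c(Z_c) = 0$, we get $\ev_c(Q)\,\ev_c(P) = 0$, so $\im \ev_c(P) \subseteq \ker \ev_c(Q)$. Now $\ev_c$ is surjective (Proposition~\ref{prop:kerevc}), and one checks $\ev_c$ restricted to Ore polynomials of degree $< \deg Q$ is injective modulo nothing relevant — more precisely, $\dim_F \ker \ev_c(Q) \le \deg Q = [K:F] - \deg P$ by the inequality already proved, while $\dim_F \im \ev_c(P) = [K:F] - \dim_F \ker \varphi$ by rank-nullity over $F$ (note $\dim_F K = [K:F]$ and $\varphi \in \End_F(K)$). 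Combining, $[K:F] - \dim_F \ker\varphi \le [K:F] - \deg P$, i.e. $\dim_F \ker \varphi \ge \deg P$; with the inequality this forces equality. Conversely, if $\dim_F \ker \varphi = \deg P$, let $G$ be the monic right factor of $P$ of degree $\deg P$ witnessing the kernel (so $G$ and $P$ differ by a unit, i.e. $G$ is a scalar multiple of $P$ — one should check degrees match so $P = \lambda G$); then $G$ lies in the kernel of $\ev_c$ restricted appropriately, hence $G$ is right-divisible... no: rather $G$ itself should be shown to right-divide $Z_c$, because $\ker \ev_c(G) = \ker\varphi$ has dimension $\deg G = [K:F] - \deg(Z_c/\!\!\sim)$, and then the left ideal $R G$ is contained in $\ker \ev_c = R Z_c$, giving $Z_c \mid G$ on... — I need to be careful about sides here.

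The side bookkeeping is the main obstacle I anticipate. The kernel of $\ev_c$ is a \emph{left} ideal $R Z_c$, so $P \in \ker \ev_c$ means $Z_c$ divides $P$ \emph{on the right}, whereas "$P$ divides $Z_c$" in the statement means $Z_c = QP$, i.e. $P$ is a right factor of $Z_c$. To connect $\dim_F \ker \ev_c(P)$ with right-divisibility of $Z_c$ by $P$, the right tool is: $\ker \ev_c(P)$ has maximal dimension $\deg P$ iff $P$, viewed as acting by right multiplication, has "full rank deficiency", and this happens precisely when $R P + R Z_c \neq R$, i.e. the right gcd of... — concretely, $\gcd_{\mathrm{right}}(P, Z_c)$ (the monic $D$ with $RD = RP + RZ_c$) satisfies $\dim_F \ker\varphi = \deg D$, and $D = P$ (up to scalar) iff $P$ is a right divisor of $Z_c$. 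Establishing $\dim_F \ker \ev_c(P) = \deg \gcd_{\mathrm{right}}(P, Z_c)$ is the real content: the inequality part is the $\le$, and it follows because $\ker \ev_c(P) = \ker \ev_c(D)$ (since $P = D'D$ forces $\ker\ev_c(D) \subseteq \ker\ev_c(P)$, and conversely writing $D = AP + BZ_c$ gives $\ev_c(D) = \ev_c(A)\ev_c(P)$ so $\ker\ev_c(P)\subseteq\ker\ev_c(D)$), combined with the bound $\dim_F \ker \ev_c(D) \le \deg D$ which one gets from $D \mid Z_c$ on the right and the surjectivity/dimension count above. So the logical spine is: (1) $\ker\ev_c(P) = \ker\ev_c(D)$ with $D = \gcd_{\mathrm{right}}(P,Z_c)$; (2) for right divisors $D$ of $Z_c$, $\dim_F\ker\ev_c(D) = \deg D$ via the factorization $Z_c = Q D$ and rank-nullity together with $\dim_F\ker\ev_c(Q)\le\deg Q$ and $\deg Q + \deg D = [K:F]$ — this last step needs a base case or an induction on $\deg D$, anchored at $D = 1$; (3) conclude $\dim_F\ker\ev_c(P) = \deg D \le \deg P$ with equality iff $D = P$ up to scalar iff $P \mid Z_c$ on the right.
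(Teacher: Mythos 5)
There is a genuine gap, and you flagged part of it yourself. Your reduction to $D = \gcd_{\mathrm{right}}(P, Z_c)$ is sound: the Bezout relation $D = AP + BZ_c$ together with $\ev_c(Z_c) = 0$ gives $\ker\ev_c(P) \subseteq \ker\ev_c(D)$, and $P = D'D$ gives the reverse inclusion, so indeed $\ker\ev_c(P) = \ker\ev_c(D)$ and the proposition would follow from the claim that $\dim_F\ker\ev_c(D) = \deg D$ for every monic right divisor $D$ of $Z_c$. But the way you propose to establish that claim is circular. The rank--nullity argument you invoke (from $Z_c = QD$, hence $\im\ev_c(D) \subseteq \ker\ev_c(Q)$) only gives the \emph{lower} bound $\dim_F\ker\ev_c(D) \geq [K:F] - \dim_F\ker\ev_c(Q) \geq \deg D$, and even that assumes the inequality already holds for the cofactor $Q$. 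The upper bound $\dim_F\ker\ev_c(D) \leq \deg D$ is exactly what remains to be produced, and the induction on $\deg D$ anchored at $D=1$ cannot supply it: the cofactor $Q$ has degree $[K:F] - \deg D$, which exceeds $\deg D$ whenever $\deg D < [K:F]/2$, so the inductive hypothesis never applies to $Q$ in that range. Your ``cleanest route'' --- the existence of a monic $G$ of degree $\dim_F\ker\ev_c(P)$ right-dividing $P$ --- is likewise not a free input; it is precisely Corollary~\ref{cor:vanishpol}, which the paper derives \emph{from} the present proposition, so invoking it here begs the question.

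The missing ingredient is a dimension count through the isomorphism furnished by Proposition~\ref{prop:kerevc}. Write $R = K[X;\theta,\delta]$ and $s = [K:F]$, and let $I \subseteq \End_F(K)$ be the left ideal of $F$-linear endomorphisms vanishing on $\ker\ev_c(P)$; one has $\dim_F \End_F(K)/I = s\cdot\dim_F\ker\ev_c(P)$. Since $\ev_c$ is surjective with kernel contained in $\ev_c^{-1}(I)$, and $\ev_c^{-1}(I)$ is a principal left ideal $RQ$, comparing the $F$-dimensions of $R/RQ$ and $\End_F(K)/I$ gives $\deg Q = \dim_F\ker\ev_c(P)$; then $P \in \ev_c^{-1}(I) = RQ$ shows $Q$ right-divides $P$, whence $\dim_F\ker\ev_c(P) = \deg Q \leq \deg P$. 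This is exactly the step your proposal identifies as needed but does not supply, and it is where the paper's proof actually lives; once the inequality is in hand, the equality case and your gcd bookkeeping go through as you describe.
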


\begin{proof}
Let $I$ be the left ideal of $\End_F(K)$ consisting of linear morphisms 
vanishing on $\ker (\ev_c(P))$. The inverse image of $I$ by $\ev_c$ is 
an ideal of $K[X;\theta,\delta]$. Since $K[X;\theta,\delta]$ is 
principal, there exists $Q \in K[X;\theta,\delta]$ such that 
$\ev_c^{-1}(I) = K[X;\theta,\delta] Q$. We have the following 
equalities of dimensions:
\begin{align*}
r \cdot \deg Q & = \dim_F K[X;\theta,\delta] /K[X;\theta,\delta] Q \\
 & = \dim_F K[X;\theta,\delta]/\ev_c^{-1}(I) \\
 & = \dim_F \End_F(K)/I  \\
 & = r \cdot \dim_F \ker(\ev_c(P)).
\end{align*}
We deduce that $\deg Q = \dim_F \ker(\ev_c(P))$. Since $P$ obviously 
belongs to $\ev_c^{-1}(I)$, we conclude that $P$ divides $Q$, showing 
the inequality of the proposition.

Moreover, equality holds if and only if $\deg P = \deg Q$, \emph{i.e.}
$P$ divides $Q$. If $P$ indeed divides $Q$, we conclude that $P$ divides
$Z(X) - \upsilon(c)$ because $Q$ is a divisor of $Z(X) - \upsilon(c)$
thanks to Proposition~\ref{prop:kerevc}.
Conversely, if $P$ divides $Z(X)-\upsilon(c)$, we write 
$Z(X)-\upsilon(c) = PP'$ and, applying the inequality to $P$ and $P'$, 
we get:
$$\dim_F \ker(\ev_c(P)) + \dim_F \ker(\ev_c(P')) 
\leq \deg P + \deg P' = r.$$
On the other hand, we have:
$$r = \dim_F \ker(\ev_c(PP')) \leq
\dim_F \ker(\ev_c(P)) + \dim_F \ker(\ev_c(P')).$$
All inequalities then need to be equalities, which concludes the
proof.
\end{proof}

\begin{corollary}
\label{cor:vanishpol}
Let $c$ be an unramified element of $K$.
Given a $F$-linear subspace $V$ of $K$, there exists a unique monic 
polynomial $P \in K[X;\theta,\delta]$ such that $\ker (\ev_c(P)) = V$ 
and $\deg P = \dim_F V$.
\end{corollary}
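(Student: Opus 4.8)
The plan is to build the monic polynomial $P$ of prescribed kernel directly from a basis of $V$, using the evaluation map $\ev_c$ and the fact (Proposition~\ref{prop:bounddeg}) that zeros are controlled by degree, then prove uniqueness by a standard degree comparison. Concretely, let $d = \dim_F V$ and pick an $F$-basis $v_1, \dots, v_d$ of $V$. I would construct a chain of monic Ore polynomials $P_0 = 1, P_1, \dots, P_d = P$ with $\deg P_i = i$ and $\ker(\ev_c(P_i)) \supseteq \mathrm{Span}_F(v_1, \dots, v_i)$, built so that at each stage the inclusion is an equality of dimensions. The key induction step: given $P_{i-1}$ with $\ker(\ev_c(P_{i-1})) = \mathrm{Span}_F(v_1,\dots,v_{i-1})$ (an $F$-subspace of dimension $i-1$), the element $w_i := \ev_c(P_{i-1})(v_i)$ is nonzero since $v_i \notin \ker(\ev_c(P_{i-1}))$; then set $P_i = \big(X - \tfrac{u(w_i)}{w_i}\big) P_{i-1}$ where $u = \delta + c\theta$, which by Eq.~\eqref{eq:eveuc} (or directly, since $\ev_c$ is a ring homomorphism and $\ev_c(X - u(w_i)/w_i)$ kills $w_i$) ensures $v_i \in \ker(\ev_c(P_i))$ while keeping $\deg P_i = i$. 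Since $\ker(\ev_c(P_i)) \supseteq \mathrm{Span}_F(v_1,\dots,v_i)$ has dimension $\geq i = \deg P_i$, Proposition~\ref{prop:bounddeg} forces equality, so $\ker(\ev_c(P_i))$ is exactly this span.

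At the end of the induction, $P = P_d$ is monic of degree $d$ with $\ker(\ev_c(P)) = V$. For uniqueness: suppose $P$ and $P'$ are both monic of degree $d$ with $\ker(\ev_c(P)) = \ker(\ev_c(P')) = V$. Then $\ev_c(P) - \ev_c(P') = \ev_c(P - P')$ also vanishes on $V$, and $P - P'$ has degree strictly less than $d$. If $P - P' \neq 0$, then by Proposition~\ref{prop:bounddeg} applied to the nonzero polynomial $P - P'$ we would get $\dim_F V = d \leq \dim_F \ker(\ev_c(P-P')) \leq \deg(P-P') < d$, a contradiction. Hence $P = P'$.

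The main obstacle, such as it is, lies in the inductive construction: one must verify that multiplying $P_{i-1}$ on the left by the degree-one factor $X - u(w_i)/w_i$ genuinely adjoins $v_i$ to the kernel without disturbing the inclusion of the earlier basis vectors. The inclusion of $\mathrm{Span}_F(v_1,\dots,v_{i-1})$ in $\ker(\ev_c(P_i))$ is automatic since $\ev_c(P_i) = \ev_c(X - u(w_i)/w_i) \circ \ev_c(P_{i-1})$ factors through $\ev_c(P_{i-1})$; the new point $v_i$ lands in the kernel precisely because $\ev_c\big(X - u(w_i)/w_i\big)(w_i) = u(w_i) - \tfrac{u(w_i)}{w_i}\, w_i = 0$. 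So the only genuine verification is that $w_i \neq 0$, which is exactly the statement that $v_i$ was chosen outside the span of $v_1, \dots, v_{i-1}$, i.e. outside $\ker(\ev_c(P_{i-1}))$ by the induction hypothesis. Everything then closes up cleanly via Proposition~\ref{prop:bounddeg}, which does all the heavy lifting. An alternative, more conceptual route would be to take $Q$ with $\ev_c^{-1}(I) = K[X;\theta,\delta]\,Q$ where $I$ is the left ideal of $\End_F(K)$ of maps vanishing on $V$, exactly as in the proof of Proposition~\ref{prop:bounddeg}; the same dimension count shows $\deg Q = \dim_F V$ and $\ker(\ev_c(Q)) = V$, and one takes $P = Q$ after rescaling to be monic — but the explicit recursive construction above is perhaps more useful for later computational purposes in the paper.
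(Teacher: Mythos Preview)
Your proof is correct. The paper, however, takes precisely the short ideal-theoretic route you describe at the end as the ``alternative, more conceptual route'': it simply defines $P$ as the monic generator of the left ideal $\ev_c^{-1}(I)$, where $I \subset \End_F(K)$ is the ideal of maps vanishing on $V$, and reads off $\deg P = \dim_F V$ from the dimension count already carried out in the proof of Proposition~\ref{prop:bounddeg}. Your primary argument instead constructs $P$ explicitly as a product of degree-one left factors by induction on a basis of $V$. The constructive version has the advantage of giving an actual algorithm and making the analogy with classical vanishing polynomials transparent; the paper's one-line version is shorter and gets uniqueness essentially for free, since any monic $P'$ with the stated properties lies in $\ev_c^{-1}(I)$, is therefore a right multiple of $P$, and must equal $P$ by degree. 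Your uniqueness argument via Proposition~\ref{prop:bounddeg} is in fact more explicit than what the paper writes down.
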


\begin{proof}
It suffices to take the Ore polynomial $P$ defined by
$\ev_c^{-1}(I) = K[X;\theta,\delta] P$ where $I$ denotes the
ideal of $\End_F(K)$ consisting of functions vanishing on $V$.
\end{proof}

We can extend the above results to multiple evaluations.

\begin{theorem}
\label{theo:multipleeval}
Let $c_1, \ldots, c_m$ be unramified elements of $K$ such that
the $\upsilon(c_i)'s$ are pairwise distinct. 
\begin{enumerate}
\item For all $P \in K[X;\theta,\delta]$, $P \ne 0$ :
\begin{equation*}
\sum_{i=1}^m \dim_F \ker(\ev_{c_i}(P)) \leq \deg P
\end{equation*}
and equality holds if and only if $P$ divides $\prod_{i=1}^m Z(X)-\upsilon(c_i)$.
\item Given $F$-linear subspaces $V_1, \dots, V_m$ of $K$, there exists 
a unique monic polynomial $P \in K[X;\theta,\delta]$ of degree
$\sum_{i=1}^m \dim_F V_i$ such that $\ev_{c_i}(P)$ vanishes on $V_i$.
\end{enumerate}
\end{theorem}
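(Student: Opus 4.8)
The plan is to reduce both statements to the single-evaluation results (Proposition~\ref{prop:bounddeg} and Corollary~\ref{cor:vanishpol}) using a Chinese-remainder-type argument made possible by the hypothesis that the $\upsilon(c_i)$ are pairwise distinct. The central observation is that, since $\upsilon(c_i) \in F$ by Corollary~\ref{cor:imnu}, the central Ore polynomials $Z(X) - \upsilon(c_i)$ lie in the commutative polynomial ring $F[Z(X)]$, where they reduce to $T - \upsilon(c_i)$ in the variable $T = Z(X)$. These are pairwise coprime in $F[T]$, hence pairwise coprime in the (commutative!) centre of $K[X;\theta,\delta]$, and a fortiori comaximal as two-sided ideals of $K[X;\theta,\delta]$. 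Writing $N_i = Z(X) - \upsilon(c_i)$ and $N = \prod_{i=1}^m N_i$ (the product is well defined and central), I would first establish the ring isomorphism
\begin{equation*}
K[X;\theta,\delta]/K[X;\theta,\delta]\,N \;\xrightarrow{\ \sim\ }\; \prod_{i=1}^m K[X;\theta,\delta]/K[X;\theta,\delta]\,N_i,
\end{equation*}
which by Proposition~\ref{prop:kerevc} identifies the right-hand side with $\prod_i \End_F(K)$, the product being realized by $(\ev_{c_1},\ldots,\ev_{c_m})$.

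For part~(1), I would argue as in the proof of Proposition~\ref{prop:bounddeg}: let $I_i \subseteq \End_F(K)$ be the left ideal of morphisms vanishing on $\ker(\ev_{c_i}(P))$, and pull back $\prod_i I_i$ along $(\ev_{c_1},\ldots,\ev_{c_m})$ to get a left ideal $K[X;\theta,\delta]\,Q$ of $K[X;\theta,\delta]$. Since each $N_i$ pulls back (via $\ev_{c_i}$) into this preimage and the $N_i$ are comaximal, $N = \prod_i N_i$ lies in $K[X;\theta,\delta]\,Q$, so $Q$ divides $N$; counting $F$-dimensions through the isomorphism above gives $\deg Q = \sum_i \dim_F \ker(\ev_{c_i}(P))$. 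As $P \in K[X;\theta,\delta]\,Q$, we get $\sum_i \dim_F \ker(\ev_{c_i}(P)) = \deg Q \le \deg P$. Equality forces $P = Q$ up to a unit, hence $P \mid N = \prod_i (Z(X) - \upsilon(c_i))$; conversely, if $P \mid N$ one runs the now-familiar "sum of codegrees" squeeze (apply the inequality to $P$ and to the complementary factor $N/P$, and use $\dim_F\ker(\ev_{c_i}(N)) = r$ for each $i$, which holds since $N_i \mid N$) to conclude all inequalities are equalities.

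For part~(2), existence is immediate from the surjectivity half of the isomorphism: choose $u_i \in \End_F(K)$ vanishing exactly on $V_i$ with $\dim_F \ker u_i = \dim_F V_i$ (e.g. a projector onto a complement of $V_i$ composed suitably, or directly the monic generator from Corollary~\ref{cor:vanishpol}), and let $P$ be the monic generator of the left ideal $(\ev_{c_1},\ldots,\ev_{c_m})^{-1}\big(\prod_i \End_F(K)\,u_i\big)$; the dimension count shows $\deg P = \sum_i \dim_F V_i$, and $\ev_{c_i}(P)$ vanishes on $V_i$ by construction. For uniqueness, suppose $P'$ is another monic solution of the same degree; then each $\ev_{c_i}(P')$ vanishes on $V_i$, so $\dim_F\ker(\ev_{c_i}(P')) \ge \dim_F V_i$, whence part~(1) gives $\sum_i \dim_F V_i \le \sum_i \dim_F\ker(\ev_{c_i}(P')) \le \deg P' = \sum_i \dim_F V_i$, forcing equalities everywhere; in particular $\ker(\ev_{c_i}(P')) = V_i$ exactly, and $P'$ divides $N$. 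Both $P$ and $P'$ are then monic of the same degree generating the same left ideal (the preimage of $\prod_i I_i$ with $I_i$ the vanishing ideal of $V_i$), so $P = P'$.

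The main obstacle is the noncommutative Chinese remainder step: one must check carefully that comaximality of the \emph{central} elements $N_i$ in the centre $F[Z(X)]$ propagates to the statement that $\prod_i K[X;\theta,\delta]\,N_i = K[X;\theta,\delta]\,N$ and that the natural map to the product of quotients is an isomorphism of left modules (indeed of rings). This is where it is essential that the $N_i$ are central — then $K[X;\theta,\delta]\,N_i$ is a two-sided ideal, $K[X;\theta,\delta]/K[X;\theta,\delta]\,N_i$ is a ring, and one can pick $e_i \in F[Z(X)]$ with $e_i \equiv 1 \pmod{N_i}$ and $e_i \equiv 0 \pmod{N_j}$ for $j \ne i$ using commutative CRT in $F[T]$, then transport them into $K[X;\theta,\delta]$; these $e_i$ provide the inverse map and the dimension bookkeeping. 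Everything else is a repetition of the single-evaluation arguments already carried out above.
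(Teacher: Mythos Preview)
Your proposal is correct and follows essentially the same route as the paper: introduce the combined evaluation $\varepsilon = (\ev_{c_1},\ldots,\ev_{c_m})$, use coprimality of the central polynomials $N_i = Z(X)-\upsilon(c_i)$ in $F[Z(X)]$ to identify $\ker\varepsilon$ with the ideal generated by $N=\prod_i N_i$ (equivalently, to get the CRT isomorphism onto $\End_F(K)^m$), and then rerun the arguments of Proposition~\ref{prop:bounddeg} and Corollary~\ref{cor:vanishpol} with $\varepsilon$ in place of a single $\ev_c$. The paper compresses this into a few lines and simply says the rest ``is similar''; your write-up spells out the CRT/comaximality step and the dimension bookkeeping more explicitly, but there is no substantive difference in strategy.
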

\begin{proof}
We consider the ring homomorphism:
$$\varepsilon : K[X;\theta,\delta] \rightarrow \End_F(K)^m, \quad 
P \mapsto \big(\ev_{c_1}(P), \dots, \ev_{c_m}(P)\big).$$
Let $P \in \ker \varepsilon$. By Proposition~\ref{prop:kerevc}, $P$
is a multiple of $Z(X)-\upsilon(c_i)$ for all $i$ . Since these polynomials 
are pairwise coprime, we deduce that $P$ is divisible by 
$\prod_{i=1}^m Z(X)-\upsilon(c_i)$. By comparing dimensions, we find that
$\varepsilon$ is surjective and its kernel is the principal ideal
generated by $\prod_{i=1}^m Z(X)-\upsilon(c_i)$.
With this imput, the proof is similar to the proofs of 
Proposition~\ref{prop:bounddeg} and Corollary~\ref{cor:vanishpol}.
\end{proof}

\subsection{Linearised Reed-Solomon codes}
\label{ssec:LRS}

We fix a positive integer $m$ together with a tuple $\underline{V} = 
(V_1, \dots, V_m)$ of $F$-linear subspaces of $K$. 
We set:
$$\Hom_F(\underline V, K) = 
\Hom_F(V_1,K) \times \cdots \times \Hom_F(V_m, K).$$
It is a vector space over $K$ of dimension
$\sum_{i=1}^m \dim_F V_i$. Following~\cite[Definition~25]{martinez},
we equip $\Hom_F(\underline V, K)$ with the sum-rank distance 
defined as follows.

\begin{definition}
The \emph{sum-rank weight} of
$\underline \varphi = \big(\varphi_1, \dots, \varphi_m\big) \in
\Hom_F(\underline V, K)$~is:
$$w_{\srk}(\underline{\varphi}) = \sum_{i=1}^m \dim_F \varphi_i(V_i)$$
The \emph{sum-rank distance} between
$\underline \varphi$ and $\underline \psi$ is
$d_\srk(\underline \varphi, \underline \psi) =
w_\srk(\underline \varphi - \underline \psi)$.
\end{definition}

Throughout this article, we will use the word \emph{code} to refer 
to a $K$-linear subspace of $\Hom_F(\underline V, K)$. By definition, 
the \emph{length} of a code $C$ sitting in $\Hom_F(\underline V, K)$ is 
$\sum_{i=1}^m \dim_F V_i$, 
its \emph{dimension} is $\dim_K C$ and
its \emph{minimal distance} is the minimal sum-rank weight of a 
nonzero element of $C$.

These three parameters are related by an analogue of the classical 
Singleton bound, which reads $k + d \leq n + 1$ in our setting
(see~\cite[Proposition~34]{martinez}).
Codes attaining this bound are called MSRD (for \emph{Maximal
Sum-Rank Distance}).

\begin{definition}
\label{def:LRS}
Let $k$ be an integer and 
$\underline{c} = (c_1, \dots, c_m)$ be a tuple of
unramified elements of $K$. We set:
$$\begin{array}{rcl}
\ev_{\underline c, \underline V} : 
K[X;\theta,\delta] & \rightarrow & \Hom_F(\underline V,K) \smallskip \\
P & \mapsto & 
  \big(\ev_{c_1}(P)_{|V_1},\, \ldots,\, \ev_{c_m}(P)_{|V_m}\big).
\end{array}$$
The \textit{linearised Reed-Solomon code} associated to $(k, 
\underline c, \underline V)$ is:
$$\LRS(k, \underline{c}, \underline{V}) = 
\ev_{\underline c, \underline V}\big(K[X;\theta,\delta]_{<k}\big)$$
where, by definition, $K[X;\theta,\delta]_{<k}$ denotes the subset of 
$K[X;\theta,\delta]$ of Ore polynomial of degree strictly less than
$k$.
\end{definition}

\begin{remark}
The linearized Reed-Solomon codes appear as a common generalization of 
Reed-Solomon codes on the one hand and Gabidulin codes on the other 
hand. Indeed, if we had been working with classical polynomials instead 
of Ore polynomials, we would have end up with a usual Reed-Solomon code,
while the case $m = 1$ reduces to Gabidulin codes.
\end{remark}

The following theorem gives the parameters of the linearized 
Reed-Solomon codes.

\begin{theorem}\label{theo::caracRSL}
Let $k$ be an integer, $\underline c = (c_1, \ldots, c_m)$ be
a tuple of unramified elements of $K$ and 
$\underline V = (V_1, \ldots, V_m)$ be a tuple of $F$-linear
subspaces of $K$. We set $n = \sum_{i=1}^m \dim_F V_i$.

If $k \leq n$ and the $\upsilon(c_i)$'s are pairwise distinct, 
the code $\LRS(k, \underline{c}, \underline{V})$ has length~$n$,
dimension~$k$ and minimal distance~$n-k+1$; in particular, it is
MSRD.
\end{theorem}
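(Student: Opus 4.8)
The plan is to deduce all three parameters from Theorem~\ref{theo:multipleeval}. The statement about the length is immediate: by definition $\LRS(k,\underline c,\underline V)$ sits inside $\Hom_F(\underline V,K)$, whose dimension over $K$ is $n = \sum_i \dim_F V_i$, so the length is $n$. For the dimension and the minimal distance, the key object is the evaluation map $\ev_{\underline c,\underline V}$ restricted to $K[X;\theta,\delta]_{<k}$, and the crucial point will be to understand its kernel.

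First I would compute the kernel of $\ev_{\underline c,\underline V}$ on the whole ring. An Ore polynomial $P$ lies in the kernel exactly when $\ev_{c_i}(P)$ vanishes on $V_i$ for every $i$. By Theorem~\ref{theo:multipleeval}(2) there is a unique monic polynomial $G$ of degree $\sum_i \dim_F V_i = n$ with this property, and the set of all such $P$ is the left ideal $K[X;\theta,\delta]\,G$ generated by it (this follows by the same principal-ideal argument as in the proof of Theorem~\ref{theo:multipleeval}: the preimage under the ring homomorphism $\varepsilon$ of the appropriate left ideal of $\End_F(K)^m$ is principal, generated by $G$). Now intersect with $K[X;\theta,\delta]_{<k}$: since every nonzero element of $K[X;\theta,\delta]\,G$ has degree at least $\deg G = n \geq k$, the intersection is zero. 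Hence $\ev_{\underline c,\underline V}$ is injective on $K[X;\theta,\delta]_{<k}$, which has $K$-dimension $k$, so $\dim_K \LRS(k,\underline c,\underline V) = k$.

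For the minimal distance, take a nonzero $P \in K[X;\theta,\delta]_{<k}$ and consider the weight of its image $\underline\varphi = \ev_{\underline c,\underline V}(P)$. By definition $w_\srk(\underline\varphi) = \sum_{i=1}^m \dim_F \ev_{c_i}(P)(V_i)$, and by rank–nullity applied to the $F$-linear map $\ev_{c_i}(P)_{|V_i}$ we get $\dim_F \ev_{c_i}(P)(V_i) = \dim_F V_i - \dim_F\bigl(\ker\ev_{c_i}(P) \cap V_i\bigr) \geq \dim_F V_i - \dim_F \ker\ev_{c_i}(P)$. Summing over $i$ and using the inequality of Theorem~\ref{theo:multipleeval}(1), namely $\sum_{i=1}^m \dim_F \ker(\ev_{c_i}(P)) \leq \deg P \leq k-1$, we obtain
\[
w_\srk(\underline\varphi) \;\geq\; \sum_{i=1}^m \dim_F V_i \;-\; \sum_{i=1}^m \dim_F\ker(\ev_{c_i}(P)) \;\geq\; n - (k-1) \;=\; n-k+1.
\]
This shows the minimal distance is at least $n-k+1$; combined with the Singleton bound $k+d \leq n+1$ it must equal $n-k+1$, so the code is MSRD. (Alternatively one can exhibit a codeword of weight exactly $n-k+1$ directly, but invoking the Singleton bound is cleaner.)

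The main obstacle is the first part of the kernel computation: one must be careful that the left ideal generated by $G$ really is the full kernel of $\ev_{\underline c,\underline V}$, and not just that $G$ lies in it. This is where we genuinely use that the $\upsilon(c_i)$ are pairwise distinct — it is exactly what makes the polynomials $Z(X)-\upsilon(c_i)$ pairwise coprime and hence makes $\varepsilon$ in Theorem~\ref{theo:multipleeval} surjective with principal kernel, which in turn forces the preimage of any left ideal to be principal of the expected degree. Everything else (rank–nullity, degree bookkeeping, the Singleton bound) is routine. I would also note in passing that the hypothesis $k \leq n$ is used both to make $\dim_K K[X;\theta,\delta]_{<k} = k$ meaningful as the code dimension and to ensure $n-k+1 \geq 1$ so that the minimal distance statement is not vacuous.
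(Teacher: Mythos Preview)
Your proof is correct and follows essentially the same approach as the paper: both use the rank--nullity theorem together with the inequality of Theorem~\ref{theo:multipleeval}(1) to bound the sum-rank weight from below by $n-k+1$, and then appeal to the Singleton bound. The only cosmetic difference is that you establish injectivity of $\ev_{\underline c,\underline V}$ on $K[X;\theta,\delta]_{<k}$ separately via the generator $G$, whereas the paper simply reads it off from the weight bound $w_\srk(\underline\varphi) \geq n-k+1 > 0$.
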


\begin{proof}
The fact that $\LRS(k, \underline{c}, \underline{V})$ has length~$n$
is obvious.
Let $P \in K[X;\theta,\delta]$ be a Ore polynomial of degree strictly
less than $k$. For $i \in \{1, \ldots, m\}$, set $f_i = \ev_{c_i}(P)$ 
and let $\varphi_i: V_i \to K$ be the restriction of $f_i$ to $V_i$.
From Theorem~\ref{theo:multipleeval}, we derive:
$$\sum_{i=1}^m \dim_F \ker \varphi_i \leq
\sum_{i=1}^m \dim_F \ker f_i \leq \deg P < k$$
the first inequality coming from the obvious inclusion $\ker \varphi_i
\subset \ker f_i$. By the rank-nullity theorem, we conclude that 
$w_{\srk}(\varphi_1, \ldots, \varphi_m) > n-k$, which concludes
the proof.
\end{proof}

\section{Reduced trace of Ore polynomials}
\label{sec:trd}

The aim of this section is to introduce and prove the main properties 
of the reduced trace map over the rings of Ore polynomials. This notion 
can be seen as an analogue of the usual trace over ring of matrices and, 
for this reason, it will play a central role when we will study duality 
in \S \ref{sec:duality}.

We keep the notations and hypothesis of \S \ref{sec:LRS}: the letter $K$ 
denotes a field equipped with an automorphism $\theta : K \to K$ and a 
$\theta$-derivation $\delta : K \to K$. We let $F$ be the subfield of 
$K$ consisting of elements $x$ such that $\theta(x) = x$ and $\delta(x) 
= 0$ and assume that the extension $K/F$ is finite.

In order to simplify notations, we set $\Aplus = K[X;\theta, \delta]$ 
and let $\Zplus$ be its centre. By Proposition~\ref{prop:centre}, we 
know that $\Zplus = F[Z(X)]$ for some Ore polynomial $Z(X) \in \Aplus$ 
of degree $s = [K:F]$. Moreover, $Z(X)$ can be chosen in a canonical way 
(see Remark~\ref{rem:ZX}).
We set in addition $\Cplus = K[Z(X)]$; it is a commutative subring
of $\Aplus$ containing $\Zplus$. Besides, $\Aplus$ appreas as a free
left-module of rank $s$ over $\Cplus$, a basis of it being 
given by $(1, X, \ldots, X^{s-1})$. We will refer to it as the
canonical basis of $\Aplus$ over $\Cplus$.

\begin{definition}
\label{def:trd}
Let $f \in \Aplus$. The \emph{reduced trace} of $f$, denoted by 
$\Trd(f)$, is the trace of the map $x \mapsto xf$ viewed as a 
$\Cplus$-linear endomorphism of $\Aplus$.
\end{definition}

\noindent
This construction defines a mapping $\Trd : \Aplus \to \Cplus$.
The reader should pay attention to the fact that $\Trd$ is \emph{not} 
$\Cplus$-linear because of noncommutativity; it is however 
$\Zplus$-linear.
Besides, it satisfies the classical trace relation $\Trd(fg) = 
\Trd(gf)$ for all $f, g \in \Aplus$. Another remarkable property
of $\Trd$ is that it assumes values in $\Zplus$; this result is not
obvious from Definition~\ref{def:trd} but is a consequence
of the explicit formulas we will obtain in \S \ref{ssec:explicittrd}
(see Propositions~\ref{prop:trdendo} and~\ref{prop:trddiff}).

\subsection{An explicit formula}
\label{ssec:explicittrd}

Although Definition~\ref{def:trd} is already rather explicit, it
is possible to simplify it further and end up with simple close
expressions for the reduced trace. The objective of this subsection 
is to derive such formulas.
Our first theorem in this direction addresses the case $\delta = 0$, 
which is the simplest one. In this situation, we recall that $F$ is 
the fixed subfield of $\theta$ and $Z(X) = X^s$. The integer $s$ is
exactly the order of $\theta$.

\begin{proposition}
\label{prop:trdendo}
We assume that $\delta = 0$. For $f = \sum_i a_i X^i \in \Aplus$,
we have:
$$\Trd(f) = \sum_i \tr_{K/F}(a_{si}) X^{si} =
\sum_i \tr_{K/F}(a_{si}) Z(X)^i$$
where $\tr_{K/F}$ is the trace map of $K$ over $F$.
\end{proposition}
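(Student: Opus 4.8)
The plan is to compute the trace of the $\Cplus$-linear endomorphism $\mu_f : x \mapsto xf$ of $\Aplus$ directly in the canonical basis $(1, X, \ldots, X^{s-1})$, exploiting the fact that in the case $\delta = 0$ the multiplication rule $Xa = \theta(a)X$ is homogeneous, so that left multiplication by a monomial permutes (up to twisting) the basis elements. First I would reduce to the case where $f$ is a single monomial $aX^j$: since $\Trd$ is additive (indeed $\Zplus$-linear), it suffices to treat $f = aX^j$ and sum. Then I would analyse the image $X^i \cdot aX^j = \theta^i(a)\, X^{i+j}$ and, using $Z(X) = X^s$, rewrite $X^{i+j}$ in the canonical basis: writing $i + j = qs + r$ with $0 \le r < s$ and $q \in \ZZ$ (with $q$ depending on $i$), we get $X^{i+j} = Z(X)^q X^r$, so that $\mu_f$ sends the basis vector $X^i$ to $\theta^i(a) Z(X)^q X^r$.

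Next I would read off the diagonal of the matrix of $\mu_f$. A basis vector $X^i$ contributes to the coefficient of $X^i$ in its own image exactly when $r = i$, i.e. when $j \equiv 0 \pmod s$; write $j = s\ell$. In that case $i + j = i + s\ell$, so $q = \ell$, and the image is $\theta^i(a)\, Z(X)^\ell X^i$, whence the diagonal entry (as an element of $\Cplus$) is $\theta^i(a)\, Z(X)^\ell$. Summing over $i = 0, \ldots, s-1$ gives
\begin{equation*}
\Trd(aX^{s\ell}) = \Big(\sum_{i=0}^{s-1} \theta^i(a)\Big)\, Z(X)^\ell = \tr_{K/F}(a)\, Z(X)^\ell,
\end{equation*}
using that $\tr_{K/F} = \sum_{i=0}^{s-1}\theta^i$ since $F$ is the fixed field of $\theta$ and $s$ is the order of $\theta$ (so $K/F$ is Galois with cyclic group generated by $\theta$). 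When $j$ is not a multiple of $s$, every diagonal entry is zero, so $\Trd(aX^j) = 0$. Finally, summing over all $i$ in $f = \sum_i a_i X^i$ and keeping only the terms with index divisible by $s$ yields $\Trd(f) = \sum_i \tr_{K/F}(a_{si})\, X^{si} = \sum_i \tr_{K/F}(a_{si})\, Z(X)^i$, as claimed; the second equality is just $X^{si} = Z(X)^i$.

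I do not expect a serious obstacle here; the only point requiring a little care is the bookkeeping of the ``carry'' $q$ when reducing $X^{i+j}$ modulo $Z(X) = X^s$, and the observation that only $j \equiv 0 \pmod s$ produces nonzero diagonal contributions — but this is immediate once one notices that $\mu_{aX^j}$ is, up to the scalars $\theta^i(a)$ and powers of $Z(X)$, the cyclic shift by $j$ on the index set $\ZZ/s\ZZ$, which has zero trace unless the shift is trivial. A minor sanity check worth recording is that the resulting expression indeed lands in $\Zplus = F[Z(X)]$ rather than merely in $\Cplus = K[Z(X)]$, which is visible because each coefficient $\tr_{K/F}(a_{si})$ belongs to $F$; this confirms the remark made after Definition~\ref{def:trd}.
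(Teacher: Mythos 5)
Your proof is correct and takes essentially the same approach as the paper: compute the matrix of right multiplication by a monomial $aX^j$ in the canonical basis $(1, X, \ldots, X^{s-1})$ over $\Cplus$ and observe that, because left-multiplying $X^i$ by $aX^j$ shifts the exponent by $j$ modulo $s$, the diagonal entries all vanish unless $s \mid j$, in which case they are $\theta^i(a) Z(X)^{j/s}$ and the trace is $\tr_{K/F}(a) Z(X)^{j/s}$. The only cosmetic difference is that the paper first uses $\Zplus$-linearity to reduce to exponents $j \in \{0, \ldots, s-1\}$ (so the power of $Z(X)$ never appears), whereas you treat arbitrary $j$ at once and track the carry $q$ explicitly; both are fine.
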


\begin{proof}
By $\Zplus$-linearity, it is enough to prove that 
$\Trd(a) = \tr_{K/F}(a)$ and $\Trd(a X^i) = 0$ for $a \in K$
and $i \in \{1, \ldots, s{-}1\}$.
For the first assertion, we observe that the matrix of the 
multiplication map $x \mapsto xa$ in the canonical basis is
diagonal and its diagonal entries are $a, \theta(a), \ldots,
\theta^{s-1}(a)$. Therefore its trace is $\tr_{K/F}(a)$.
Similarly, when $1 \leq i < s$, the matrix of the multiplication 
map $x \mapsto x aX^i$ has only nonzero entries at the position
$(u,v)$ with $v \equiv u + i \pmod s$.
In particular, its diagonal vanishes. Hence so does its trace.
\end{proof}

As usual, the case where $\theta \neq \id_K$ reduces to the previous
one using Hilbert twists. Precisely, writing $\delta = a \delta_0$
(see Proposition~\ref{prop:theta-derivations}), we find the formula:
$$\Trd\Big(\sum_i a_i (X{+}a)^i\Big) = 
\sum_i \tr_{K/F}(a_{si}) (X{+}a)^{si}.$$
We now come to the case where $\theta = \id_K$. In this situation,
we recall that $F$ is the field of constants of $K$ and that
$Z(X)$ is defined as the minimal polynomial of $\delta$; besides,
we know that $Z(X)$ is a linearized polynomial over $F$, \emph{i.e.} 
it takes the form:
\begin{equation}
\label{eq:ZX2}
Z(X) = X^{p^r} + z_{r-1} X^{p^{r-1}} + \cdots + 
z_1 X^p + z_0 X
\end{equation}
with all the coefficients $z_i$ in $F$. We then have $s = p^r$.
For convenience, we also set $z_r = 1$.

\begin{proposition}
\label{prop:trddiff}
For $a \in K$, we have:

\vspace{1em}

\noindent
$\begin{array}{@{}c@{\hspace{5em}}l}
& \Trd(a) = \Trd(aX) = \cdots = \Trd\big(aX^{p^r-2}\big) = 0 \smallskip \\
\text{and:}
& \Trd\big(aX^{p^r-1}\big) = \displaystyle \sum_{j=0}^r z_j \delta^{p^j-1}(a).
\end{array}$
\end{proposition}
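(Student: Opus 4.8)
The plan is to make \Trd\ completely explicit through a matrix computation in the canonical basis $(1,X,\ldots,X^{p^r-1})$ of $\Aplus$ over $\Cplus$, and then to read off the diagonal. Since $\Trd$ is $\Zplus$-linear and $Z(X)^i$ is central, it suffices to compute $\Trd(aX^i)$ for $a\in K$ and $0\le i\le p^r-1$. As right multiplication by any element of $\Aplus$ is $\Cplus$-linear (the $\Cplus$-action is on the left and $\Cplus$ is central), writing $M(g)$ for the operator $x\mapsto xg$ we get $M(aX^i)=M(X^i)\circ M(a)$, hence $\Trd(aX^i)=\tr\big(M(X^i)\,M(a)\big)$.

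The two factors are easy to describe. From the identity $X^k a=\sum_{l=0}^k\binom{k}{l}\delta^l(a)\,X^{k-l}$ in $K[X;\id_K,\delta]$ (immediate by induction on $k$), the matrix of $M(a)$ is lower triangular with all diagonal entries equal to $a$. The operator $M(X^i)$ sends $X^k$ to $X^{k+i}$, which one rewrites in the canonical basis by repeated use of the reduction rule $X^{p^r}=Z(X)-\sum_{j=0}^{r-1}z_jX^{p^j}$ (recall $z_r=1$). So the $(k,k)$-entry of $M(X^i)\,M(a)$ is the coefficient of $X^k$, after full expansion in the basis, of $X^k a X^i=\sum_{l=0}^k\binom{k}{l}\delta^l(a)\,X^{k+i-l}$, and $\Trd(aX^i)$ is the sum of these over $0\le k\le p^r-1$.

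The heart of the proof is the bookkeeping of these coefficients. A summand $X^{k+i-l}$ whose exponent is $\le p^r-1$ is already a basis vector and contributes to slot $k$ only when $l=i$; summing over $k$ and using the hockey-stick identity $\sum_{k=i}^{p^r-1}\binom{k}{i}=\binom{p^r}{i+1}$ yields a ``principal'' contribution $\binom{p^r}{i+1}\,\delta^i(a)$. A summand with exponent $\ge p^r$ must be reduced; the key facts here are that in the relevant range of exponents (bounded by $2p^r-2$) at most two reduction steps are ever needed, and that neither the $Z(X)X^{(\cdot)}$ terms of the first reduction nor any term produced by a second reduction can hit the diagonal, since that would require the summation index $l$ to be negative. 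What survives are the terms $-z_jX^{p^j+t}$ of a single reduction; collecting them and applying hockey-stick once more contributes $\sum_j\big(\binom{p^r}{m_j+1}-\binom{p^j}{m_j+1}\big)(-z_j)\,\delta^{m_j}(a)$, with $m_j=i+p^j-p^r$ and $j$ ranging over those indices in $\{0,\ldots,r-1\}$ for which $i+p^j\ge p^r$.

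It then remains to evaluate these binomials modulo $p$. Lucas' theorem gives $\binom{p^n}{m}\equiv0\pmod p$ for $0<m<p^n$, hence in particular $\binom{p^r}{p^j}\equiv0$ for $j<r$. This kills $\binom{p^r}{i+1}$ unless $i=p^r-1$, kills every $\binom{p^r}{m_j+1}$, and kills $\binom{p^j}{m_j+1}$ unless $m_j=p^j-1$, that is, again unless $i=p^r-1$. Therefore $\Trd(aX^i)=0$ for $i\le p^r-2$, while for $i=p^r-1$ one is left with $\delta^{p^r-1}(a)+\sum_{j=0}^{r-1}(0-1)(-z_j)\,\delta^{p^j-1}(a)=\sum_{j=0}^{r}z_j\,\delta^{p^j-1}(a)$, as asserted (and this lies in $\Zplus$, since $z_j\in F$ and $Z(\delta)=0$ force $\delta$ to annihilate it). The one genuinely delicate point is the middle step — tracking all reductions and index constraints carefully enough to be sure nothing else reaches the diagonal; the reduction to $\Trd(aX^i)$ and the closing binomial computation are routine.
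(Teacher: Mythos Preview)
Your proof is correct and follows essentially the same approach as the paper: both compute the diagonal entries of right multiplication by $aX^n$ by expanding $X^k a X^n$ via the Leibniz rule, reducing powers $X^{\ge p^r}$ using $X^{p^r}=Z(X)-\sum_j z_jX^{p^j}$, and then summing the resulting binomial coefficients with the hockey-stick identity and Lucas' theorem. The paper isolates the reduction bookkeeping in a separate lemma computing $\pi_i(X^{i+v})$, whereas you carry out the same analysis inline, but the substance of the two arguments is identical.
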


\begin{remark}
Thanks to $\Zplus$-linearity, the above formulas
are enough to compute the reduced trace of any Ore polynomial
$f \in \Aplus$. Besides, one immediately checks that the quantity
$\sum_{j=0}^r z_j \delta^{p^j-1}(a)$ is annihilated by $\delta$
and hence lies in~$F$. It follows from these observations that
$\Trd$ takes its values in $\Zplus = F[Z(X)]$, in accordance
with what we have announced in the introduction of \S \ref{sec:trd}.
\end{remark}

The rest of this subsection is devoted to the proof of 
Proposition~\ref{prop:trddiff}.
For a Ore polynomial $P$, we denote by $\pi_i(P)$ its coefficient in 
front of $X^i$ when $P$ is written in the canonical $\Cplus$-basis 
$\{1, X, \dots, X^{p^r-1}\}$ of $\Aplus$.
This defines a $\Cplus$-linear map $\pi_i : \Aplus \to \Cplus$.

\begin{lemma}\label{lem:trace reduite}
For $0 \leq i < p^r$ and $-i \leq v < p^r$, we have:
$$\begin{array}{r@{\hspace{0.5ex}}l@{\qquad}l}
\pi_i(X^{i+v}) & = 1 & \text{if } v = 0 \\
& = -z_j & \text{if } v = p^r-p^j \text{ and } i \geq p^j \\
& = 0 & \text{otherwise.}
\end{array}$$
\end{lemma}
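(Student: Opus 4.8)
The plan is to prove the formula by induction on the exponent $N = i + v$, using the defining relation
\begin{equation*}
X^{p^r} = Z(X) - \sum_{j=0}^{r-1} z_j X^{p^j}
\end{equation*}
(which comes from~\eqref{eq:ZX2}) to lower degrees, together with the fact that $Z(X)$ is central in $\Aplus$.

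For the base case $N < p^r$, the monomial $X^N$ is itself one of the basis vectors $1, X, \dots, X^{p^r-1}$, so $\pi_i(X^N)$ equals $1$ when $N = i$, that is $v = 0$, and $0$ otherwise; and the middle case $v = p^r - p^j$ with $i \geq p^j$ cannot occur here, since it would force $N = i + p^r - p^j \geq p^r$. For the inductive step, assume $N \geq p^r$. As $v < p^r$ we have $N < i + p^r < 2p^r$, so $w := N - p^r$ satisfies $0 \leq w < i$. Writing $X^N = X^w X^{p^r}$ and moving the central element $Z(X)$ to the front gives
\begin{equation*}
X^N = Z(X)\, X^w - \sum_{j=0}^{r-1} z_j\, X^{w + p^j},
\end{equation*}
and since $\pi_i$ is $\Cplus$-linear with $\pi_i(X^w) = 0$ (because $w < i$), we obtain $\pi_i(X^N) = -\sum_{j=0}^{r-1} z_j\, \pi_i(X^{w+p^j})$. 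Each exponent $w + p^j$ is nonnegative, strictly smaller than $N$, and still lies in the admissible range $[0, i+p^r)$, so the induction hypothesis applies to it.

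The decisive observation is that, for such an exponent, only the first case of the formula can contribute: an equality $w + p^j = i + p^r - p^{j'}$ with $i \geq p^{j'}$ would give $w = i + (p^r - p^j - p^{j'})$, which is impossible because $w < i$ while $p^j + p^{j'} \leq p^r$ for all $j, j' \leq r-1$ (here $p \geq 2$ is used). Consequently $\pi_i(X^{w+p^j}) = 1$ if $w + p^j = i$ and $0$ otherwise, whence $\pi_i(X^N) = -z_j$ exactly when $w = i - p^j$ for some $j \in \{0, \dots, r-1\}$ --- equivalently $v = p^r - p^j$ with $i \geq p^j$ --- and $\pi_i(X^N) = 0$ in every other case. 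This is the claimed formula.

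I expect the main obstacle to be the bookkeeping in the recursion: reducing $X^N$ could in principle cascade through several applications of the relation above and create spurious contributions to the coefficient of $X^i$. The inequality $p^j + p^{j'} \leq p^r$ for $j, j' < r$ is precisely what rules this out, keeping the second-order terms away from the middle case and letting the induction close cleanly.
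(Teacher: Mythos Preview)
Your proof is correct and follows essentially the same route as the paper's: both arguments expand $X^{N}$ using $X^{p^r} = Z(X) - \sum_{j<r} z_j X^{p^j}$, kill the $Z(X)X^w$ term via $w<i$, and then control the remaining exponents $w+p^j$. The paper organizes this as a three-case split on the size of $i+v$ (below $p^r$; between $p^r$ and $2p^r-p^{r-1}$; above), while you run a clean induction on $N$; your inequality $p^j+p^{j'}\le p^r$ for $j,j'<r$ plays exactly the role of the paper's bound $p^j+k < 2p^r - p^{r-1}$, namely preventing the reduction from feeding back into the middle case.
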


\begin{proof}
We begin by noticing that, under the condition $v = p^r - p^j$, the
fact that $i \geq p^j$ is equivalent to $i + v \geq p^r$.
If $i+v < p^r$ then $X^{i+v}$ is already written in the canonical basis
and the result follows. On the contrary, if $i + v \geq p^r$, we write
$i+v = p^r + k$ and:
\begin{equation}
\label{eq:Xiv}
X^{i+v} = X^{p^r}X^k =
Z(X)X^k - \sum_{j=0}^{r-1} z_jX^{p^j+k}.
\end{equation}
If $i+v < 2p^r -p^{r-1}$, then all the exposants $p^j+k$ are less than
$p^r$ and the writing above is the decomposition of $X^{i+v}$ in the 
canonical basis. Hence the lemma follows in this case. 

Finally, when $i + v \geq 2p^r -p^{r-1}$, applying $\pi_i$ to
Eq.~\eqref{eq:Xiv}, we get:
$$\pi_i(X^{i+v}) = Z(X) \pi_i(X^k) - \sum_{j=0}^{r-1} z_j \pi_i(X^{p^j+k}).$$
Observing that
$p^j + k \leq p^{r-1} + k < p^{r+1} + p^r \leq 2p^r - p^{r-1}$
whenever $0 \leq j < r$, the lemma follows from what we have done
previously.
\end{proof}

Let $n \in \{0, \ldots, p^r{-}1\}$.
By definition, the reduced trace of $a X^n$ is given by:
$$\Trd(aX^n) = \sum_{i=0}^{p^r-1} \pi_i(X^iaX^n).$$
Moving all the $X$ to the right and writing $v = n-j$, we obtain
the formula:
\begin{align*}
\Trd(aX^n) 
 & = \sum_{i=0}^{p^r-1} \sum_{v=n-i}^n \binom{i}{n{-}v} \delta^{n-v}(a) \cdot \pi_i(X^{i+v}) \\
 & = \sum_{i=0}^{p^r-1} \,\, \sum_{v=-i}^{p^r-1} \, \binom{i}{n{-}v} \delta^{n-v}(a) \cdot \pi_i(X^{i+v})
\end{align*}
the last equality being true because the binomial coefficients 
$\binom i {n-v}$ vanish when $v$ is outside the range $[n{-}i, n]$.
It follows from Lemma~\ref{lem:trace reduite} that only the terms with
$v = 0$ and $v = p^r - p^j$ contribute to the sum.
Precisely, the contribution of the summands corresponding to $v = 0$ 
is:
$$\begin{array}{r@{\hspace{0.5ex}}ll}
\displaystyle
\sum_{i=0}^{p^r-1} \binom{i}{n} \delta^n(a) = \binom{p^r}{n{+}1} \delta^n(a)
 & = 0 & \text{if } n < p^r - 1 \\
 & = \delta^{p^r-1}(a) & \text{if } n = p^r - 1.
\end{array}$$
Similarly,
the contribution of the summands coming from $v = p^r - p^j$ is:
$$C_j = -z_j \sum_{i=p^j}^{p^r-1}  \binom{i}{n{-}v} \delta^{n-v}(a) = 
-z_j\left(\binom{p^r}{n{-}v{+}1} - \binom{p^j}{n{-}v{+}1}\right) \delta^{n-v}(a).$$
When $n < p^r - 1$, \emph{i.e.} $n - v + 1 < p^j$, the binomial
coefficients $\binom{p^r}{n-v+1}$ and $\binom{p^j}{n-v+1}$ both vanish,
implying that $C_j = 0$ as well. On the contrary, when $n = p^r - 1$, we
find $C_j = z_j \delta^{p^j-1}(a)$.

Putting all the contributions together, we finally end up with the 
formula of Proposition~\ref{prop:trddiff}.

\subsection{Reduced trace and evaluation}
\label{ssec:trdeval}

We recall that we have introduced evaluation maps in \S
\ref{ssec:evalOre}. Precisely, for all unramified elements $c$ of 
$K$, we have defined a ring homomorphism $\ev_c : \Aplus \to \End_F(K)$
taking a Ore polynomial $f(X)$ to $f(\delta + c\theta)$.
We also recall that we have defined $\upsilon(c)$ as the remainder
of the right Euclidean division of $Z(X)$ by $X{-}c$. 
By Corollary~\ref{cor:imnu}, we know that $\upsilon(c) \in F$.

\begin{theorem}
\label{theo:trdeval}
For any unramified element $c \in K$, the following diagram is 
commutative:
$$\xymatrix @C=5em {
\Aplus \ar[r]^-{\ev_c} \ar[d]_-{\Trd} & \End_F(K) \ar[d]^-{\tr} \\
\Zplus \ar[r]^-{Z(X) \mapsto \upsilon(c)} & F}$$
where $\tr$ denotes the usual trace map over $\End_F(K)$.
\end{theorem}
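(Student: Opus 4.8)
The plan is to reduce the statement to the two explicit trace formulas established in \S\ref{ssec:explicittrd}, using $\Zplus$-linearity on both sides together with the fact that $\ev_c$ sends $Z(X)$ to $\upsilon(c)\cdot\id_K$. Indeed, both $\Trd$ and $\ev_c$ followed by $\tr$ are $\Zplus$-linear maps $\Aplus \to F$ (for $\tr\circ\ev_c$ one uses that $\ev_c(Z(X)) = Z(\delta{+}c\theta) = \upsilon(c)\cdot\id_K$, which is exactly Eq.~\eqref{eq:ZXc} in the differential case and the analogous Hilbert-twisted computation otherwise, so that $\ev_c(g\cdot f) = \ev_c(g)\cdot\ev_c(f)$ is multiplication by a scalar in $\End_F(K)$ when $g \in \Zplus$). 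Hence it suffices to check commutativity on the canonical $\Cplus$-basis elements $aX^i$ with $a \in K$ and $0 \le i < s$, and in fact only on the images of these under the projection to $F$, i.e. it is enough to compare $\Trd(aX^i)$ evaluated at $Z(X)\mapsto\upsilon(c)$ against $\tr\big(\ev_c(aX^i)\big)$.

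\textbf{The differential case.} Suppose first $\theta = \id_K$, so $Z(X)$ is the linearized polynomial of Eq.~\eqref{eq:ZX2} and $\ev_c(aX^i) = a\cdot(\delta + c\cdot\id_K)^i$. By Proposition~\ref{prop:trddiff}, $\Trd(aX^i) = 0$ for $i < p^r-1$ and $\Trd(aX^{p^r-1}) = \sum_{j=0}^r z_j\delta^{p^j-1}(a) = \upsilon(c') - (\text{terms depending on }c)$; more precisely, comparing with Eq.~\eqref{eq:upsilon} we must show $\tr\big(a\cdot(\delta{+}c\cdot\id_K)^i\big) = 0$ for $i < p^r-1$ and that $\tr\big(a\cdot(\delta{+}c\cdot\id_K)^{p^r-1}\big)$ matches $\sum_{j=0}^r z_j\delta^{p^j-1}(a)$. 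The cleanest route is to expand $(\delta{+}c\cdot\id_K)^i$ and reduce to computing $\tr(a\delta^i)$ as an $F$-linear endomorphism of $K$. For $i=0$ this is $\tr_{K/F}$ of multiplication by $a$; for $0 < i < p^r$ one argues that $a\delta^i$ has trace $0$. This last point is the crux: it follows from the fact that $\delta$ itself has trace $0$ on $K$ (a derivation of a finite extension is traceless, e.g. because $\delta = \ad$-like / because its characteristic polynomial is $Z(X)$ which has zero coefficient in degree $p^r-1$ when $r\ge 1$), combined with a twisting argument — but more robustly, one can observe that $\ev_c^{-1}$ of the scalar endomorphisms is $\Cplus + $ (kernel), and push the computation through Lemma~\ref{lem:trace reduite}. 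Honestly, the slick way: since both sides are $\Zplus$-linear and we already know $\Trd$ takes values in $\Zplus$, apply the commutative diagram's desired identity after multiplying by a high power of $Z(X)$ is unnecessary — just verify directly that $x \mapsto x\cdot aX^i$ on $\Aplus$ and $\ev_c$ transport compatibly.

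\textbf{The endomorphic case and the main obstacle.} When $\delta = 0$, $Z(X) = X^s$ with $s = \ord(\theta)$, $\upsilon = N_{K/F}$, and $\ev_c(aX^i) = a\cdot(c\theta)^i$. By Proposition~\ref{prop:trdendo}, $\Trd(aX^i) = \tr_{K/F}(a)$ if $i=0$ and $0$ if $0 < i < s$. On the other side, $(c\theta)^i = c\,\theta(c)\cdots\theta^{i-1}(c)\cdot\theta^i$, so we need $\tr\big(a\cdot(\text{scalar})\cdot\theta^i\big) = 0$ for $0 < i < s$ and $= \tr_{K/F}(a)$ for $i=0$; the $i=0$ case is immediate, and for $0<i<s$ the endomorphism $b\cdot\theta^i$ of $K$ has all diagonal entries zero in any $K$-adapted basis — precisely the computation already carried out in the proof of Proposition~\ref{prop:trdendo}. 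Finally, we also need $\tr\big(a\cdot(c\theta)^s\big) = \tr_{K/F}(N_{K/F}(c)\cdot a)$, matching $\Trd(aX^s)$ evaluated at $X^s \mapsto \upsilon(c) = N_{K/F}(c)$; this holds since $(c\theta)^s = N_{K/F}(c)\cdot\id_K$. The case $\theta \ne \id_K$, $\delta \ne 0$ reduces to $\delta = 0$ via the Hilbert twist $X \mapsto X+a$ with $\delta = a\delta_0$, using the already-noted compatibility of $\ev_c$ with Hilbert twists (the square with $\ev_c$ and $\ev_{c-a}$) together with Proposition~\ref{prop:hilbert}. The main obstacle is the bookkeeping in the differential case: correctly matching the binomial-coefficient expansion of $\tr\big(a(\delta+c\id_K)^{p^r-1}\big)$ against $\sum_{j}z_j\delta^{p^j-1}(a)$, which amounts to re-deriving Eq.~\eqref{eq:upsilon}-type identities; I expect this to mirror the contribution computations at the end of \S\ref{ssec:explicittrd} almost verbatim, with $\ev_c$ playing the role that $\pi_i(X^i\cdot-)$ played there.
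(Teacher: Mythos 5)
Your plan — reduce by $\mathcal Z^+$-linearity to the basis elements $aX^i$, $0\le i<s$, $a\in K$, and then compare $\Trd(aX^i)$ (sent through $Z(X)\mapsto\upsilon(c)$) with $\tr(\ev_c(aX^i))$ directly — is a legitimate alternative route, but it is not what the paper does. The paper's proof is more abstract: after observing that $\ev_c$ descends to an isomorphism $\alpha\colon \Aplus/I\Aplus\to\End_F(K)$ (with $I=(Z(X)-\upsilon(c))$) and that $\Trd$ descends to $\beta\colon\Aplus/I\Aplus\to F$, it proves a lemma characterizing $\tr$ up to a scalar by the relation $\varphi(uv)=\varphi(vu)$, deduces $\Trd(f)\equiv\lambda\cdot\tr(\ev_c(f))\pmod I$ for a single $\lambda\in F$, and then pins down $\lambda=1$ by evaluating at a single well-chosen $f$ (an $a$ of nonzero $\tr_{K/F}$ when $\theta\neq\id_K$, and $a(X-c)^{p^r-1}$ with $\tau(a)=1$ otherwise). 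That argument entirely sidesteps the binomial bookkeeping you are worried about.

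Within your own route, the endomorphic case is essentially fine (though the assertion that $\mu_b\theta^i\in\End_F(K)$ has zero trace for $0<i<s$ deserves a real justification, e.g.\ via a normal basis). The differential case, however, has a genuine gap, and in fact an internal contradiction. You assert ``for $0<i<p^r$ one argues that $a\delta^i$ has trace $0$.'' If that were true, then since $\tr_{K/F}$ also vanishes identically on the purely inseparable extension $K/F$ (so the $i=0$ term contributes nothing), expanding $(\delta+c\cdot\id_K)^{p^r-1}$ would give $\tr\big(a(\delta+c\cdot\id_K)^{p^r-1}\big)=0$ for all $a$ — contradicting what you correctly say you must prove, namely that this trace equals $\sum_{j}z_j\delta^{p^j-1}(a)=\tau(a)$, which is a surjection onto $F$ (Lemma~\ref{lem:kerimtau}). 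What is actually needed is $\tr(a\delta^i)=0$ for $0\le i\le p^r-2$ together with the nontrivial identity $\tr(a\delta^{p^r-1})=\tau(a)$; the second identity is precisely the point, and your write-up explicitly defers it (``I expect this to mirror\ldots''). As it stands, the differential half of the proposal both states a false intermediate claim and leaves the crucial computation undone.
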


The rest of this subsection is devoted to the proof of
Theorem \ref{theo:trdeval}.
Let $I$ be the ideal of $\Zplus$ generated by $Z(X) - \upsilon(c)$.
It follows from Proposition~\ref{prop:kerevc}
that $\ev_c$ induces an isomorphism of rings:
$$\alpha : \Aplus/I\Aplus \stackrel\sim\longrightarrow \End_F(K).$$
On the other hand, noticing that $\Trd$ acts on the central element
$Z(X)-\upsilon(c)$ by multiplication by $[K:F]$, we find that $\Trd$ 
induces a map 
$$\beta : \Aplus/I\Aplus \longrightarrow \Zplus/I \simeq F$$
the identification between $\Zplus/I$ and $F$ being induced by the
map $Z(X) \mapsto \upsilon(c)$. After these observations, the 
theorem reduces to proving that $\beta = \tr \circ \alpha$. 
For this, we rely on the following classical characterization of 
the trace map (which we reprove for completeness).

\begin{lemma}
Let $\varphi : \End_F(K) \to F$ be a $F$-linear map such that
$\varphi(uv) = \varphi(vu)$ for all $u, v \in \End_F(K)$. Then,
there exists $\lambda \in F$ such that $\varphi = \lambda \cdot
\tr$.
\end{lemma}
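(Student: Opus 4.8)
The plan is to exploit the classical fact that $\End_F(K) \simeq M_s(F)$ is a simple $F$-algebra (indeed $K/F$ is finite of degree $s$, and $\End_F(K)$ is the ring of $s\times s$ matrices over $F$ once we fix an $F$-basis of $K$), so that its space of \emph{trace-like} linear forms is one-dimensional. Concretely, first I would reduce to matrices: fix an $F$-basis of $K$ to identify $\End_F(K)$ with $M_s(F)$, and write $\varphi$ in terms of the elementary matrices $E_{ij}$. The hypothesis $\varphi(uv) = \varphi(vu)$ applied to well-chosen pairs of elementary matrices is the engine of the argument.

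The key computation is the following. For indices $i,j,k,\ell$ one has $E_{ij} E_{k\ell} = \delta_{jk}\, E_{i\ell}$ (Kronecker delta). Taking $u = E_{ij}$ and $v = E_{jk}$ with $i \neq k$ gives $\varphi(E_{ik}) = \varphi(E_{ij} E_{jk}) = \varphi(E_{jk} E_{ij}) = \varphi(0) = 0$, so $\varphi$ vanishes on all off-diagonal elementary matrices. Next, taking $u = E_{ij}$ and $v = E_{ji}$ gives $\varphi(E_{ii}) = \varphi(E_{ij} E_{ji}) = \varphi(E_{ji} E_{ij}) = \varphi(E_{jj})$, so $\varphi$ takes the same value $\lambda \in F$ on every diagonal elementary matrix $E_{11}, \dots, E_{ss}$. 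Since the $E_{ij}$ form an $F$-basis of $M_s(F)$ and $\varphi$ is $F$-linear, we conclude that for any $u = \sum_{i,j} u_{ij} E_{ij}$ we have $\varphi(u) = \lambda \sum_i u_{ii} = \lambda \cdot \tr(u)$, which is exactly the claim.

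This argument is entirely standard and presents no real obstacle; the only point requiring a word of care is the identification $\End_F(K) \simeq M_s(F)$, which is immediate once one recalls that $K$ is an $F$-vector space of finite dimension $s = [K:F]$, and that the trace of an $F$-linear endomorphism of $K$ is computed as the matrix trace in any $F$-basis. One should also note that the statement as phrased produces $\lambda \in F$ rather than merely in some extension, which is automatic here since both $\varphi$ and $\tr$ take values in $F$ and $\tr$ is surjective (e.g. $\tr(\id_K) = s \cdot 1$, and more simply $\tr(E_{11}) = 1$), forcing $\lambda = \varphi(E_{11}) \in F$. With this lemma in hand, the proof of Theorem~\ref{theo:trdeval} is then completed by checking that $\beta \circ \alpha^{-1}$ satisfies the hypothesis of the lemma — it is $F$-linear as a composite of $F$-linear maps and inherits the trace relation $\Trd(fg) = \Trd(gf)$ from $\Aplus$ — so that $\beta \circ \alpha^{-1} = \lambda \cdot \tr$ for some $\lambda \in F$, and one finally pins down $\lambda = 1$ by evaluating both sides at a single convenient element, for instance the image of $1 \in \Aplus$, for which $\Trd(1) = s$ and $\tr(\alpha(1)) = \tr(\id_K) = s$.
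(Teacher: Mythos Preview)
Your proof of the lemma itself is correct and essentially identical to the paper's: both fix an $F$-basis of $K$ to reduce to $M_s(F)$ and then use commutation relations among the elementary matrices $E_{ij}$ to show that $\varphi$ vanishes off the diagonal and takes a common value $\lambda$ on all the $E_{ii}$. The only cosmetic difference is the specific pairs chosen (the paper uses $E_{ij}E_{jj}$ versus $E_{jj}E_{ij}$ and appeals to conjugacy for the diagonal; you use $E_{ij}E_{jk}$ versus $E_{jk}E_{ij}$ and $E_{ij}E_{ji}$ versus $E_{ji}E_{ij}$).

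One caution about your final paragraph, which goes beyond the lemma: your suggestion to pin down $\lambda = 1$ in Theorem~\ref{theo:trdeval} by evaluating at $1 \in \Aplus$ does not work in general. In the differential case $\theta = \id_K$, Proposition~\ref{prop:trddiff} gives $\Trd(1) = 0$ (since $0 < p^r - 1$), and even when $\theta \neq \id_K$ one has $\Trd(1) = \tr_{K/F}(1) = s$, which vanishes in $F$ whenever $p \mid s$. This is exactly why the paper splits into two cases, using $f = a$ with $\tr_{K/F}(a) \neq 0$ when $\theta \neq \id_K$, and a more delicate choice $f = a(X{-}c)^{p^r-1}$ with $\tau(a) = 1$ when $\theta = \id_K$.
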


\begin{proof}
Fixing a basis, we can assume that the domain of $\varphi$ is
$M_s(F)$. For $1 \leq i,j \leq s$, let $E_{ij}$ be the matrix
whose unique nonzero entry is located at position $(i,j)$ and
is equal to $1$. If $i \neq j$, we have the relations
$E_{ij} E_{jj} = E_{ij}$ and $E_{jj} E_{ij} = 0$. Therefore
applying $\varphi$, we get $\varphi(E_{ij}) = 0$.
On the other hand, the fact that the matrices $E_{ii}$ and
$E_{jj}$ are conjugated implies that $\varphi(E_{ii}) = 
\varphi(E_{jj})$. By $F$-linearity, we deduce that $\varphi$
must be a scalar multiple of the trace map.
\end{proof}

Applying the previous lemma with $\varphi = \beta \circ \alpha^{-1}$,
we conclude that there exists $\lambda \in F$ with the property that:
\begin{equation}
\label{eq:trdeval}
\Trd(f) \equiv \lambda \cdot \tr \big(\ev_c(f)\big) \pmod I
\end{equation}
for all Ore polynomial $f \in \Aplus$. We have to prove that
$\lambda = 1$.
When $\theta \neq \id_K$, we pick an element $a \in K$ whose trace
over $F$ does not vanish.
Substituting $f = a$ in Eq.~\eqref{eq:trdeval} and noticing 
that $\ev_c(a) : K \to K$ is the multiplication by $a$, we find
$\tr_{K/F}(a) = \lambda \cdot \tr_{K/F}(a)$. Hence $\lambda = 1$
as wanted.

We now consider the case where $\theta = \id_K$. The field $F$ is then 
the subfield of constants of $\delta$ and the polynomial $Z(X)$ is now 
given by Eq.~\eqref{eq:ZX2}.
In accordance with Proposition~\ref{prop:trddiff}, we set
$\tau(a) = \sum_{i=0}^r z_i  \delta^{p^i-1}(a)$
for $a \in K$.
This defines a function $\tau$ which can be considered as a differential 
analogue of the trace map\footnote{Note that, in the differential 
setting, the extension $K/F$ is purely inseparable, so the usual trace 
map $\tr_{K/F}$ vanishes.}. The following lemma summarizes the main
properties of $\tau$.

\begin{lemma}
\label{lem:kerimtau}
The function $\tau$ is $F$-linear and maps surjectively $K$ onto~$F$.
Moreover $\ker\tau = \im\delta$.
\end{lemma}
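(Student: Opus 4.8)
The plan is to establish the three claims about $\tau$ — $F$-linearity, the identity $\ker\tau = \im\delta$, and surjectivity onto $F$ — essentially in that order, exploiting that $Z(X)$ is the \emph{minimal} polynomial of $\delta$.

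First I would record $F$-linearity: since each $\delta^{p^i-1}$ is additive and $F$-linear (as $F$ consists of constants of $\delta$, so $\delta$ commutes with multiplication by elements of $F$ and $\delta(F) = 0$), and the $z_i$ lie in $F$, the map $\tau = \sum_i z_i\delta^{p^i-1}$ is a sum of $F$-linear maps, hence $F$-linear. I would also remark that $\tau$ does take values in $F$: applying $\delta$ to $\tau(a)$ and using that $\delta$ is additive together with $\delta\circ\delta^{p^i-1} = \delta^{p^i}$, one gets $\delta(\tau(a)) = \sum_i z_i\delta^{p^i}(a) = Z(\delta)(a) = 0$ because $Z(X)$ annihilates $\delta$; so $\tau(a) \in F$ for all $a \in K$. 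This same computation is the heart of the inclusion $\im\delta \subseteq \ker\tau$: for $a = \delta(b)$ we have $\tau(\delta(b)) = \sum_i z_i\delta^{p^i}(b) = Z(\delta)(b) = 0$.

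Next, the reverse inclusion $\ker\tau \subseteq \im\delta$ together with the dimension bookkeeping. The key facts are: $\delta$ is an $F$-linear endomorphism of the $F$-vector space $K$, which has dimension $p^r = \deg Z(X) = [K:F]$; and the minimal polynomial of $\delta$ is $Z(X)$, which has nonzero constant-coefficient term $z_0 X$ with $z_0 \in F$ — wait, more precisely $Z(X) = X^{p^r} + \cdots + z_0 X$ has no constant term but \emph{is} divisible by $X$, so $0$ is an eigenvalue of $\delta$. Concretely $\ker\delta = F$ (an element killed by $\delta$ and fixed by $\theta=\id_K$ is by definition in $F$), so $\dim_F\ker\delta = 1$ and hence $\dim_F\im\delta = p^r - 1$ by rank–nullity. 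On the other hand $\tau$ is a nonzero $F$-linear map $K \to F$ (nonzero because otherwise $\sum_i z_i\delta^{p^i-1}$ would vanish identically, contradicting minimality of $Z(X)$ — the $\delta^{p^i-1}$ for $i=0,\dots,r$ need not be the obstruction here, so I must be careful: the precise statement is that $\sum_{i=0}^r z_i X^{p^i-1}$ is a polynomial of degree $p^r - 1 < p^r$, so it cannot annihilate $\delta$, whence $\tau \neq 0$). A nonzero linear functional has kernel of dimension $p^r - 1$. Since we have already shown $\im\delta \subseteq \ker\tau$ and both have $F$-dimension $p^r - 1$, they are equal; and since $\tau$ is nonzero with target the $1$-dimensional space $F$, it is surjective.

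The main obstacle — and the only place requiring genuine care rather than bookkeeping — is justifying that $\tau$ is nonzero, i.e. that the linearized polynomial $\sum_{i=0}^r z_i X^{p^i - 1}$ does not annihilate $\delta$. The clean argument is the degree count above: this polynomial has degree exactly $p^r - 1$ (its leading term is $z_r X^{p^r-1} = X^{p^r-1}$ since $z_r = 1$), which is strictly smaller than $\deg Z(X) = p^r$, so by minimality of $Z(X)$ it cannot be the zero operator on $K$. One should double-check that the terms $z_i X^{p^i-1}$ are genuinely monomials of pairwise distinct degrees $p^i - 1$ so that no cancellation of the leading term occurs — this is immediate since $i \mapsto p^i - 1$ is strictly increasing. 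With nonvanishing of $\tau$ in hand, surjectivity and the kernel description follow from the dimension count as above, completing the proof.
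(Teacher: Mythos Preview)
Your proof is correct and follows essentially the same route as the paper's: $F$-linearity by inspection, $\im\delta \subseteq \ker\tau$ from $\tau\circ\delta = Z(\delta) = 0$, nonvanishing of $\tau$ by minimality of $Z(X)$ (since $\sum_i z_i X^{p^i-1}$ has degree $p^r-1 < p^r$), and then a dimension count using $\ker\delta = F$ to conclude both equality of $\ker\tau$ with $\im\delta$ and surjectivity onto $F$. Your write-up is in fact more careful than the paper's, which contains a typo stating the inclusion as $\ker\tau \subset \im\delta$ rather than $\im\delta \subset \ker\tau$.
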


\begin{proof}
The fact that $\tau$ is $F$-linear is a straightforward verification.
Similarly, we check that the composite $\delta \circ \tau$ vanishes.
Therefore $\tau$ takes its values in $F$.
Thanks to linearity, the surjectivity of $\tau$ will follow if we 
prove that $\tau$ is nonzero. But the vanishing of $\tau$ would mean
that $\delta$ is annihilated by a polynomial of degree $p^r{-}1$, 
which contradicts the definition of $Z(X)$.
It remains to prove that $\ker\tau = \im\delta$. For this, we observe
that $\tau \circ \delta = 0$ and hence that $\ker\tau \subset \im
\delta$. The equality follows by comparing dimensions (over $F$).
\end{proof}

We are now ready to complete the proof of Theorem~\ref{theo:trdeval}.
We fix an element $a \in K$ with $\tau(a) = 1$.
Substituting $f = a (X{-}c)^{p^r -1}$ in Eq.~\eqref{eq:trdeval}, we 
find:
\begin{equation}
\label{eq:trdevaldiff}
\Trd\big(a{\cdot}(X{-}c)^{p^r -1}\big) = \lambda \cdot 
\tr\big(a \delta^{p^r - 1}\big).
\end{equation}
Noticing that $a{\cdot}(X{-}c)^{p^r -1}$ is the sum of $a X^{p^r-1}$
and of terms of smaller degrees, we deduce from 
Proposition~\ref{prop:trddiff} that the reduced trace of
$a{\cdot}(X{-}c)^{p^r -1}$ is $\tau(a) = 1$. On the other hand, 
we can write $\delta^{p^r - 1} = \tau - \sum_{i=0}^{r-1} z_i 
\delta^{p^i-1}$ and get:
\begin{align*}
\lambda \cdot \tr\big(a \delta^{p^r - 1}\big)
 & = \lambda \cdot \tr(a\tau) - 
     \sum_{i=0}^{r-1} \lambda \cdot \tr(a z_i \delta^{p^i - 1}) \\
 & = \lambda \cdot \tr(a\tau) -
     \sum_{i=0}^{r-1} \Trd\big(a z_i (X{-}c)^{p^i - 1}\big)
   = \lambda \cdot \tr(a\tau).
\end{align*}
In order to compute the trace of $a\tau$, we consider a $F$-basis
$(b_1, \ldots, b_{p^r-1})$ of $\ker\tau$. The family 
$(b_1, \ldots, b_{p^r-1}, a)$ is then a $F$-basis of $K$ in which
the matrix of $a\tau$ has all entries equal to $0$ except the one in
the bottom right corner which is $1$. Hence the trace of $a \tau$
is $1$.
Plugging the values we found in both sides of 
Eq.~\eqref{eq:trdevaldiff}, we end up with $\lambda = 1$, which 
concludes the proof.

\section{Residues of Ore rational functions}
\label{sec:residue}

Another important ingredient in the task of determining the duals
of linearized Reed-Solomon codes is the extension of the notion of
residues to Ore polynomials. This extension was already achieved
in~\cite{caruso} in the case where the derivation $\delta$ is zero.
In this section, we address the complementary case $\theta = \id_K$.
By Hilbert's reduction (see Propositions~\ref{prop:hilbert},
\ref{prop:theta-derivations} and the
subsequent discussion), this will cover all cases.

Throughout this section, we then assume that $\theta = \id_K$. In other 
words, we work with a field $K$ equipped with a derivation $\delta : K 
\to K$. We denote by $F$ the subfield of constants and we assume that 
$K/F$ is a finite extension.
It follows from the fact that $F$ contains all $p$-th powers that $K/F$ 
is purely inseparable and hence has degree $p^r$ for some integer~$r$.
As in the previous sections, we denote by $Z(X)$ the minimal polynomial 
of $\delta$ over $K$; it takes the form:
$$Z(X) = X^{p^r} + z_{r-1} X^{p^{r-1}} + \cdots + 
z_1 X^p + z_0 X$$
with all the coefficients $z_i$ in $F$. For convenience, we also
define $z_r = 1$. 
We set $\Aplus = K[X;\id_K,\delta]$ and define the commutative
subrings $\Cplus = K[Z(X)]$ and $\Zplus = F[Z(X)]$. The latter is
the centre of $\Aplus$.

\subsection{Preliminaries}

\subsubsection{Differential trace and differential norm}

In \S \ref{ssec:evalOre} and \S \ref{ssec:trdeval}, we have introduced 
two functions $\upsilon : K \to F$ and $\tau : K \to F$ which, roughly 
speaking, play the role of the norm map and the trace map respectively 
in the differential setting.
For future use, it will be convenient to extend those two functions 
to $\Cplus$. For this, we first extend $\delta$ to a derivation of
$\Cplus$ by letting it act coefficientwise, \emph{i.e.}
$$\delta\Bigg(\sum_{i=0}^d a_i Z(X)^i\Bigg) =
\sum_{i=0}^d \delta(a_i) Z(X)^i.$$
One checks that $\delta$ continues to satisfy the Leibniz rule and
that it takes its values of $\Zplus$.

\begin{definition}
For $C \in \Cplus$, we set:
$$\tau(C) 
 = \sum_{i=0}^r z_i\delta^{p^i-1}(C) 
\quad \text{and} \quad
\upsilon(C) = 
 \sum_{i=0}^r \sum_{j=0}^i \left(z_i\delta^{p^j-1}(C) \right)^{p^{i-j}}.$$
\end{definition}

The above definition gives rise to two functions $\tau : \Cplus 
\to \Zplus$ and $\upsilon : \Cplus \to \Zplus$ that we call the 
\emph{differential trace map} and the \emph{differential norm map} 
respectively. We observe that $\tau$ is $\Zplus$-linear and that 
$\upsilon$ is additive. In addition, the next lemma shows that the 
differential trace is somehow the derivative of the differential norm 
as in the classical setting.

\begin{lemma}
\label{lem:diffupsilon}
For $\varepsilon \in \Zplus$ and $C \in \Cplus$, we have 
$\upsilon( \varepsilon C) \equiv \varepsilon \: \tau(C) 
\pmod{\varepsilon^2}$.
\end{lemma}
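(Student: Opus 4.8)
The plan is to unwind both definitions and exploit two elementary facts: that $\varepsilon$ is annihilated by the coefficientwise extension of $\delta$ to $\Cplus$, and that raising to a $p$-th power is a ring homomorphism on the commutative ring $\Cplus$, which has characteristic~$p$.

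First I would observe that, since $\varepsilon \in \Zplus = F[Z(X)]$, all the coefficients of $\varepsilon$ written in powers of $Z(X)$ lie in the field of constants $F$; hence $\delta(\varepsilon) = 0$. Combining this with the Leibniz rule on $\Cplus$, a straightforward induction on $k$ gives $\delta^k(\varepsilon C) = \varepsilon\,\delta^k(C)$ for every $k \geq 0$ and every $C \in \Cplus$ (at each inductive step one uses $\delta(\varepsilon) = 0$ to discard the term $\delta(\varepsilon)\,\delta^{k-1}(C)$). Plugging this into the definition of $\upsilon$ and using that the $p^{i-j}$-th power map is multiplicative on $\Cplus$, each summand becomes $\big(z_i \varepsilon\,\delta^{p^j-1}(C)\big)^{p^{i-j}} = \varepsilon^{p^{i-j}}\big(z_i\,\delta^{p^j-1}(C)\big)^{p^{i-j}}$, so that
$$\upsilon(\varepsilon C) = \sum_{i=0}^r \sum_{j=0}^i \varepsilon^{p^{i-j}}\big(z_i\,\delta^{p^j-1}(C)\big)^{p^{i-j}}.$$

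It remains to reduce this expression modulo $\varepsilon^2$. Whenever $i > j$ one has $p^{i-j} \geq p \geq 2$, so $\varepsilon^{p^{i-j}}$ is divisible by $\varepsilon^2$ and the corresponding term vanishes modulo $\varepsilon^2$. Only the diagonal terms $i = j$ survive, each contributing $\varepsilon \cdot z_i\,\delta^{p^i-1}(C)$; summing over $i$ from $0$ to $r$ yields $\varepsilon \sum_{i=0}^r z_i\,\delta^{p^i-1}(C) = \varepsilon\,\tau(C)$, which is exactly the claimed congruence.

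This argument is short and I do not anticipate a genuine obstacle. The only points deserving a little care are the identity $\delta^k(\varepsilon C) = \varepsilon\,\delta^k(C)$ (one must remember to invoke $\delta(\varepsilon)=0$), the fact that all manipulations take place inside the commutative characteristic-$p$ ring $\Cplus$ so that Frobenius is a ring homomorphism there, and the bookkeeping that precisely the terms with $i = j$ remain after the reduction.
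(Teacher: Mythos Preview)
Your proof is correct and follows essentially the same route as the paper's: expand $\upsilon(\varepsilon C)$ from the definition, pull $\varepsilon$ through each $\delta^{p^j-1}$ using $\delta(\varepsilon)=0$, and then observe that modulo $\varepsilon^2$ only the diagonal terms $i=j$ survive, yielding $\varepsilon\,\tau(C)$. The paper's version is simply a terser rendition of exactly these steps.
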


\begin{proof}
By definition
$$\upsilon(\varepsilon C)
= \sum_{i=0}^r \sum_{j=0}^i \left(z_i\delta^{p^j-1}(\varepsilon C) \right)^{p^{i-j}}
= \sum_{i=0}^r \sum_{j=0}^i \left(z_i\varepsilon \delta^{p^j-1}(C) \right)^{p^{i-j}}$$
the second equality being correct since $\varepsilon$ is in $\Zplus$ by
assumption.
We observe that only the terms with $i = j$ survive modulo 
$\varepsilon^2$, which gives the lemma.
\end{proof}

Besides, the next proposition shows that the differential norm is 
closely related to the computations in the noncommutative ring 
$\Aplus$.

\begin{proposition}
\label{prop:ZXC}
For $C \in \Cplus$, the identity
$Z(X + C) = Z(X) + \upsilon(C)$
holds in~$\Aplus$.
\end{proposition}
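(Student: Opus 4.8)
The plan is to reduce the statement to a known identity about Ore polynomials, namely the relation $Z(X-c) = Z(X) - \upsilon(c)$ for scalars $c \in K$ established in Eq.~\eqref{eq:ZXc}, and then bootstrap from the scalar case to the case of an arbitrary central element $C \in \Cplus$. First I would observe that the map $X \mapsto X+C$ makes sense as a ring homomorphism $\Aplus \to \Aplus$: since $C$ lies in the centre $\Zplus \subseteq \Cplus$, the element $X+C$ still satisfies $(X+C)a = a(X+C) + \delta(a)$ for every $a \in K$ (using that $C$ commutes with $a$ and that $\delta(C)$, while nonzero in general, does not enter this particular relation because $a \in K$ is being moved past $X$, not past $C$). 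Wait — one must be careful here: $C$ is not a constant, so $X+C$ does \emph{not} commute with elements of $K$ in the naive way. The correct statement is that substitution $X \mapsto X+C$ is a well-defined additive, multiplicative self-map of $\Aplus$ \emph{as a set of expressions}, and what we really want is simply to evaluate the polynomial $Z$ at the element $X+C$ of the ring $\Aplus$; this is unambiguous because $Z$ has coefficients in $F$, which is central.

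The key step is then to compute $Z(X+C)$ directly using the explicit shape of $Z$. Since $Z(Y) = \sum_{i=0}^r z_i Y^{p^i}$ with $z_i \in F$ and $z_r = 1$, and since we are in characteristic $p$, I would expand
\[
Z(X+C) = \sum_{i=0}^r z_i (X+C)^{p^i}.
\]
The heart of the matter is a formula for $(X+C)^{p^i}$ inside $\Aplus$. Here I would invoke (or quickly reprove) the ``additive'' behaviour of $p$-th powers in the Ore setting: because $C$ is central and we are in characteristic $p$, one expects an identity of the form $(X+C)^{p} = X^p + C^p + (\text{correction involving }\delta^{p-1}C)$, and more generally a Jacobson-type formula. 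In fact the cleanest route is to use that $\ad_X = [X, -]$ acts on $\Cplus$ as the derivation $\delta$ (coefficientwise), together with Jacobson's formula for $(X+C)^{p^i}$ in a ring of characteristic $p$, which gives $(X+C)^{p^i} = X^{p^i} + C^{p^i} + \sum (\text{iterated brackets})$; the iterated brackets all reduce to expressions in $\delta^{k}(C)$ with $k < p^i$, hence land in $\Cplus$, and collecting them across all $i$ with the weights $z_i$ produces precisely the double sum defining $\upsilon(C)$. This is exactly the computation underlying Eq.~\eqref{eq:upsilon} for scalars, now run with $C$ in place of $c$.

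The main obstacle, and the step I would spend the most care on, is justifying the passage from the scalar identity \eqref{eq:ZXc} to the central-element identity — i.e.\ making rigorous that the formal manipulations (Jacobson's formula, $\ad_X = \delta$ on $\Cplus$) are valid and that the ``error terms'' genuinely assemble into $\upsilon(C)$ and nothing else. A slicker alternative, which I would actually prefer to present, avoids Jacobson entirely: both sides of $Z(X+C) = Z(X) + \upsilon(C)$ are polynomial in $C$, so it suffices to check the identity after specializing, or to argue by a density/universality trick — e.g.\ first prove it when $C = \varepsilon$ is ``infinitesimal'' using Lemma~\ref{lem:diffupsilon} (the Leibniz-type relation $\upsilon(\varepsilon C) \equiv \varepsilon\tau(C) \pmod{\varepsilon^2}$) to handle the linear term, then check that the higher-order terms match by the Frobenius-additivity of $Z$ and the additivity of $\upsilon$. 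Concretely: write $C = c_1 + c_2 Z(X) + \cdots$; since $Z(X)$ is itself central and $Z(X+Z(X)) $ can be analyzed by the same method, an induction on $\deg_{Z(X)} C$ combined with the scalar case \eqref{eq:ZXc} and the additive identities $Z(A+B+\cdots)$-expansion should close the argument. I would check the base case $C \in K$ is literally \eqref{eq:ZXc}, then do the inductive step using that $\upsilon$ is additive and that $Z$ is $\mathbb{F}_p$-linear in its argument.
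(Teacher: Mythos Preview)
Your main approach --- expanding $Z(X+C) = \sum_i z_i (X+C)^{p^i}$ and computing $(X+C)^{p^i}$ via Jacobson's $p$-th power formula, using that $[X,-]$ acts on $\Cplus$ as the coefficientwise derivation $\delta$ --- is correct, and it is exactly what the paper does: the paper's entire proof is the one-line citation to Jacobson's 1937 paper (specifically Eqs.~(30) and~(35) there), which are precisely the identities you propose to invoke.

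Two remarks. First, your opening slip ``$C$ lies in the centre $\Zplus \subseteq \Cplus$'' is backwards (the inclusion is $\Zplus \subseteq \Cplus$, and a general $C \in \Cplus$ is \emph{not} central); you catch this yourself, but it would be cleaner to begin directly from the correct relation $[X,C] = \delta(C)$ for $C \in \Cplus$. Second, your ``slicker alternative'' by induction on $\deg_{Z(X)} C$ has a genuine gap: writing $C = C' + a_d Z(X)^d$ and hoping to use the $\mathbb F_p$-linearity of $Z$ to peel off the top term would require $X+C'$ and $a_d Z(X)^d$ to commute (so that $(A+B)^{p^i} = A^{p^i} + B^{p^i}$), but $[X,\, a_d Z(X)^d] = \delta(a_d)\,Z(X)^d$, which is nonzero unless $a_d \in F$. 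The cross-terms therefore survive and you are forced back into the Jacobson computation anyway. Drop the alternative and keep the direct route.
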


\begin{proof}
This is a direct consequence of \cite[No~12]{jacobson} (see in
particular Eq.~(30) and Eq.~(35)).
\end{proof}

\subsubsection{The fraction field of $\Aplus$}

In the commutative setting, residues have poor interest if we are 
restricting ourselves to polynomials and do not move to the field
of rational functions. In the Ore setting, the same is true.
However, since $\Aplus$ is a noncommutative ring, defining its
field of fractions is not as easy as usual. This can however be 
achieved (see for instance \cite[\S 0.10]{cohn}): 
using Ore condition, one proves
that there exists a unique skew field $\mathcal A$
containing $\Aplus$ for which the following universal property holds:
for any noncommutative ring $\mathfrak{A}$ and any homomorphism of
rings $\phi : \Aplus \rightarrow \mathfrak{A}$ such that $\phi(x)$ 
is invertible for all $x\in \Aplus$, $x \neq 0$, there exists a unique 
morphism of rings
$\psi : \mathcal{A} \rightarrow \mathfrak{A}$ making the 
following diagram commutative:
\begin{equation}
\label{diag:fracA}
\xymatrix @C=5em {
\Aplus \ar[d] \ar[r]^-{\phi} & \mathfrak A \\
\mathcal A \ar[ru]_-{\psi} }
\end{equation}

\noindent
Such a ring $\mathcal A$ is called the fraction field of $\Aplus$. 
In our particular setting, it turns out that one has a rather simple
description of $\mathcal A$.

\begin{proposition}
\label{prop:fractionfield}
The fraction field of $\Aplus$ is 
$\calA = \Frac(\Zplus) \otimes_{\Zplus} \Aplus$.
\end{proposition}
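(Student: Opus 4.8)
The plan is to show that $\calA := \Frac(\Zplus) \otimes_{\Zplus} \Aplus$ is a skew field and that it satisfies the universal property characterizing the fraction field of $\Aplus$. Since $\Zplus = F[Z(X)]$ is a polynomial ring in one variable over a field, $\Frac(\Zplus)$ is its field of fractions, and the natural map $\Aplus \to \calA$, $f \mapsto 1 \otimes f$, is injective because $\Aplus$ is a free (hence torsion-free) $\Zplus$-module (of rank $s^2$, by Proposition~\ref{prop:centre} together with the decomposition over $\Cplus$). So the first task is: \emph{$\calA$ is a division ring.} For this I would argue that $\calA$ is a finite-dimensional algebra over the field $E := \Frac(\Zplus)$ — indeed it is free of rank $s^2$ over $E$, with basis $(X^i Z(X)^{?})$... more precisely the image of any $\Zplus$-basis of $\Aplus$ — and that it has no zero divisors. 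A finite-dimensional algebra over a field with no zero divisors is automatically a division ring (every nonzero element, acting by left multiplication, is an injective $E$-linear endomorphism of a finite-dimensional space, hence bijective, hence invertible). The absence of zero divisors in $\calA$ follows from the absence of zero divisors in $\Aplus$: if $(1\otimes f)(1\otimes g) = 0$ in $\calA$ with $f,g \neq 0$, then clearing the (central) denominators — elements of $\Zplus\setminus\{0\}$, which are nonzerodivisors in $\Aplus$ — gives a relation $\frac1{d}\otimes fg = 0$ with $fg \neq 0$ in $\Aplus$ (as $\Aplus$ is a domain), contradicting injectivity of $\Aplus \hookrightarrow \calA$. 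This is the main step, and the only subtlety is bookkeeping that localizing at the central multiplicative set $\Zplus \setminus \{0\}$ is well-behaved — which it is, precisely because the set is central.

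Next I would verify the universal property. Let $\phi : \Aplus \to \mathfrak{A}$ be a ring homomorphism sending every nonzero element of $\Aplus$ to an invertible element of $\mathfrak{A}$. In particular $\phi$ is invertible on every nonzero element of $\Zplus$, so the images $\phi(d)$ for $d \in \Zplus \setminus \{0\}$ are invertible and — being images of central elements under a ring homomorphism — they commute with everything in $\im\phi$. Hence $\phi$ extends (uniquely) to $\psi : \calA \to \mathfrak{A}$ by the formula $\psi\!\left(d^{-1} \otimes f\right) = \phi(d)^{-1}\phi(f)$; one checks this is well-defined (independence of the chosen common denominator, using centrality) and multiplicative. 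Uniqueness of $\psi$ is immediate since $\calA$ is generated as a ring by $\im(\Aplus \to \calA)$ together with inverses of nonzero central elements.

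Finally, to conclude that $\calA$ is \emph{the} fraction field of $\Aplus$, I would invoke the uniqueness (up to unique isomorphism) of the object satisfying the universal property displayed in diagram~\eqref{diag:fracA}, whose existence was recalled just above: since $\calA$ is a skew field containing $\Aplus$ and satisfying exactly that universal property, it must be canonically isomorphic to the fraction field $\mathcal{A}$ of $\Aplus$. The main obstacle, as noted, is the first step — proving $\calA$ is a division ring — and the key insight making it routine is that $\Aplus$ is module-finite over its centre $\Zplus$, so after inverting the centre one obtains a finite-dimensional algebra over a field, where ``domain'' upgrades automatically to ``division ring''; equivalently, $\Aplus$ already satisfies the (left and right) Ore condition with respect to $\Zplus\setminus\{0\}$ for purely dimension-theoretic reasons, and that is all the localization needs.
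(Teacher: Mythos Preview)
Your proof is correct and shares the same overall architecture as the paper's: first show that $\calA = \Frac(\Zplus)\otimes_{\Zplus}\Aplus$ is a skew field, then verify the universal property. The verification of the universal property is essentially identical in both. For the skew-field step, however, you take a different route: you observe that $\calA$ is a finite-dimensional algebra over the field $E = \Frac(\Zplus)$ and has no zero divisors (inherited from $\Aplus$ via the central localization), and invoke the standard fact that such an algebra is automatically a division ring. The paper instead first proves that every nonzero $P\in\Aplus$ has a nonzero two-sided multiple in $\Zplus$ --- found by a linear-dependence argument in the finite-dimensional $F$-space $\Aplus/P\Aplus$ --- and then uses this central multiple to write down an explicit inverse of $D^{-1}\otimes P$. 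Your argument is more conceptual and arguably cleaner; the paper's is more constructive and has the minor advantage of exhibiting the inverse concretely. Both ultimately rest on the same fact, namely that $\Aplus$ is module-finite over its centre, so the difference is one of packaging rather than substance. One small presentational point: when arguing that $\calA$ has no zero divisors, it is clearer to start from a general product $(d_1^{-1}\otimes f_1)(d_2^{-1}\otimes f_2)=0$ rather than from $(1\otimes f)(1\otimes g)=0$, since the phrase ``clearing denominators'' otherwise reads oddly.
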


\begin{proof}
We first claim that any Ore polynomial $P \in \Aplus$ has a nonzero
left and a right multiple in $\Zplus$. 
Indeed, observe that the quotient $\Aplus/P\Aplus$ is a finite 
dimensional vector space over $F$. Therefore there exists a 
nontrivial relation of linear dependance of the form:
$$\sum_{i=0}^n a_iZ(X)^i \in P\Aplus \quad (a_i \in F).$$ 
Thus, there
exists $Q \in \Aplus$ with the property that $PQ \in \Zplus$.
Besides, since $PQ$ is central, we deduce that $QPQ = PQQ$ and,
simplifying by $Q$ on the right, we find that $P$ and $Q$ commute.

We are now ready to prove that $\Frac(\Zplus) \otimes_{\Zplus} \Aplus$
is a skew field.
Indeed, reducing to the same denominator, we remark that any
nonzero
element of $\Frac(\Zplus) \otimes_{\Zplus} \Aplus$ can be written as
$D^{-1} \otimes P$ where $D \in \Zplus$, $P \in \Aplus$ and both of
them do not vanish.
By the first part of the proof, there exists $Q \in \Aplus$
such that $PQ = QP \in \Zplus$. Letting $N = PQ$, we check that
$N^{-1} \otimes QD$ is a multiplicative inverse of $D^{-1} \otimes P$.

We consider a noncommutative ring $\mathfrak A$ together with a
ring homomorphism $\varphi : \Aplus \to \mathfrak A$ such that
$\varphi(P)$ is invertible for all $P \in \Aplus$, $P \neq 0$. If
$\psi : \mathcal A \to \mathfrak A$ is an extension
of $\varphi$, it must satisfy:
\begin{equation}
\label{eq:psifracA}
\psi\big(D^{-1} \otimes P\big) = \varphi(D)^{-1} \cdot \varphi(P).
\end{equation}
This proves that, if such an extension exists, it is unique. On the
other hand, using that $\Zplus$ is central in $\Aplus$, one checks that
the formula \eqref{eq:psifracA} determines a well-defined ring
homomorphism $\mathcal A \to \mathfrak A$ making the diagram
\eqref{diag:fracA} commutative.
\end{proof}

\subsection{Taylor expansions}

The main ingredient of the theory of differential residues is a
notion of Taylor expansion for elements of $\Aplus$ extending
the one we are familiar with in the commutative case.

\subsubsection{Existence of Taylor expansions}

We consider an element $z \in F$ and set $N = Z(X) - z \in \Zplus$.
The usual Taylor expansion yields an isomorphism of $K$-algebras:
$$\begin{array}{rcl}
\displaystyle
\varprojlim_{m > 0} \Cplus/N^m\Cplus & 
\stackrel\sim\longrightarrow & K\croT \vspace{-1ex}\\
f(Z(X)) & \mapsto & f(z) + f'(z) T + \cdots +
\frac{f^{(n)}(z)}{n!} T^n + \cdots
\end{array}$$
which is uniquely determined by the fact that it maps $N$ to $T$ and
it induces the identify after quotienting out by $N$ on the left
and by $T$ on the right. The next theorem tells that this
isomorphism extends to $\Aplus$.

\begin{theorem}
\label{theo:taylorexist}
With the above notations, there exists an isomorphism of $K$-algebras:
$$\varprojlim_{m > 0} \Aplus/N^m\Aplus 
\stackrel\sim\longrightarrow (\Aplus/N\Aplus)\croT$$
sending $N$ to $T$ and inducing the identity when we quotient out by 
$N$ on the left and by $T$ on the right.
\end{theorem}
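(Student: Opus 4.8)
The plan is to construct the isomorphism by hand, mimicking the commutative Taylor expansion but keeping track of the noncommutativity. Write $B = \Aplus/N\Aplus$; this is an $F$-algebra of finite dimension (namely $p^r \cdot [\Cplus/N\Cplus : F]$ by Proposition~\ref{prop:centre}), and the target $B\croT$ is the ring of power series in the central variable $T$ over $B$. First I would observe that since $N = Z(X) - z$ lies in the centre $\Zplus$, the two-sided ideals $N^m\Aplus = \Aplus N^m$ coincide, so $\Aplus/N^m\Aplus$ is an honest quotient ring and the inverse limit $\widehat{\Aplus} = \varprojlim_m \Aplus/N^m\Aplus$ makes sense as an $F$-algebra. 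The element $N$ is central in each quotient, hence central in $\widehat{\Aplus}$, and one gets a canonical $F$-algebra map $B\croT \to \widehat{\Aplus}$ sending $T \mapsto N$ by the universal property of the power series ring (using that $N$ is central, topologically nilpotent, and that $B \simeq \widehat{\Aplus}/N\widehat{\Aplus}$ lifts). The whole content of the theorem is that this map is an isomorphism — in the commutative case this is standard ($N$-adic completeness), and here the same argument works because $N$ is central.

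The key steps, in order: (1) verify $N^m\Aplus = \Aplus N^m$ and set up $\widehat{\Aplus}$; (2) show $\Aplus$ is a free module over $\Cplus$ with basis $1, X, \dots, X^{p^r-1}$ (already recorded in the excerpt), hence $\Aplus/N^m\Aplus$ is free over $\Cplus/N^m\Cplus$ with the same basis, and passing to the limit $\widehat{\Aplus}$ is free of rank $p^r$ over $\widehat{\Cplus} = \varprojlim \Cplus/N^m\Cplus$; (3) invoke the commutative Taylor isomorphism $\widehat{\Cplus} \xrightarrow{\sim} K\croT$ from the displayed formula preceding the theorem, which sends $N \mapsto T$; (4) tensor up: $\widehat{\Aplus} = \widehat{\Cplus}\text{-span of } 1,X,\dots,X^{p^r-1} \cong K\croT\text{-span of the same}$, and check that the multiplication is the right one, i.e. that $Xa = \delta(a) + \theta(a)X = \delta(a) + aX$ (here $\theta = \id$) continues to hold, which it does because the relation is $\Cplus$-linear and survives completion; (5) identify $K\croT$-span of $1, X, \dots, X^{p^r-1}$ with $B\croT$ by noting $B = \Aplus/N\Aplus$ is $K$-span of the same basis (here $N \equiv 0$ so $Z(X) \equiv z$, i.e. $\widehat{\Cplus}/T \cong K$) and that a power series with coefficients in $B$ is exactly a $K\croT$-combination of the $X^i$. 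Finally check the normalization: the composite sends $N \mapsto T$ and reduces to the identity on $B$ modulo $(N, T)$, and uniqueness follows from the universal property of $B\croT$ as a complete filtered ring topologically generated over $B$ by a central $T$.

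The main obstacle is making step (4)–(5) precise: one must be careful that the ``coefficients'' of the Taylor expansion live in $B = \Aplus/N\Aplus$ and not in $K$, because $N \equiv 0$ modulo $N$ does not make $Z(X)$ equal to a scalar inside $\Aplus/N\Aplus$ unless one also works modulo the ideal generated by $T$ on the series side — the statement is precisely that the ``constant term'' of the series is the image in $\Aplus/N\Aplus$, which is genuinely a noncommutative ring (it is $\End_F(K)$-like, compare Proposition~\ref{prop:kerevc} when $z = \upsilon(c)$). So the clean way is: do not try to write an explicit formula for the expansion of $X$ itself, but rather establish the isomorphism structurally via freeness over the centre, reducing everything to the already-known commutative statement about $\widehat{\Cplus} \cong K\croT$. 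The commutation relation $Xa = aX + \delta(a)$ is $\Cplus$-linear in the obvious sense and therefore imposes the same relation on the completion; together with $\Cplus$-freeness this pins down the ring structure on both sides and forces the map to be a ring isomorphism, not merely a module isomorphism. I expect no serious arithmetic to be needed — the proof is a bookkeeping argument around centrality of $N$ and finite freeness.
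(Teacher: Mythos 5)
Your proposal has a genuine gap at precisely the point you flag as ``the main obstacle,'' and the issue is not resolvable by bookkeeping: the module-level identification you set up is not a ring isomorphism, and fixing it requires the arithmetic that you say should be unnecessary.

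Concretely, writing $\widehat{\Aplus} = \varprojlim_m \Aplus/N^m\Aplus$ and $\widehat{\Cplus} \cong K\croT$ via the commutative Taylor expansion (so $N \mapsto T$), your steps (2)--(3) correctly show that $\widehat{\Aplus}$ is a free $K\croT$-module on the basis $1, X, \dots, X^{p^r-1}$, and $B\croT$ with $B = \Aplus/N\Aplus$ is a free $K\croT$-module on the ``same'' basis. But the obvious $K\croT$-linear bijection between them is \emph{not} multiplicative. The relation satisfied by $X$ is different on the two sides: in $\widehat{\Aplus}$ one has $Z(X) = z + N$, which under $N \mapsto T$ reads $X^{p^r} + z_{r-1}X^{p^{r-1}} + \cdots + z_0 X = z + T$; whereas in $B\croT$, the image of $X$ lies in $B$ and there $Z(X) = z$ with no $T$ term. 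The commutation rule $Xa = aX + \delta(a)$ with $a \in K$ does \emph{not} pin down the ring structure, because it is silent about what $X^{p^r}$ is in terms of the basis; it is precisely the relation $Z(X) = z + T$ versus $Z(X) = z$ that distinguishes the two rings. Equivalently, your step (1) quietly asserts that ``$B \simeq \widehat{\Aplus}/N\widehat{\Aplus}$ lifts'' to a subring of $\widehat{\Aplus}$; this lifting is exactly the nontrivial content of the theorem and does not come for free from centrality and completeness (compare: $\mathbb{F}_p = (\mathbb{Z}/p^2)/(p)$ does not lift to a subring of $\mathbb{Z}/p^2$).

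The paper's proof is devoted entirely to constructing this lift. It does not send $X$ to $X$, but to a perturbed element $X + a\zeta$ with $\zeta \in \widehat{\Zplus}$ chosen so that $Z(X + a\zeta) = z$ exactly (equivalently $N(X+a\zeta) = 0$), where $a \in K$ has $\tau(a) = 1$. The inductive construction of $\zeta$ relies on Lemma~\ref{lem:diffupsilon} (the differential trace is the linearization of the differential norm) and Proposition~\ref{prop:ZXC} ($Z(X+C) = Z(X) + \upsilon(C)$), together with the surjectivity of $\tau$ from Lemma~\ref{lem:kerimtau}. That is genuine arithmetic specific to the differential setting, not just centrality and finite freeness. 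If you want to salvage a ``structural'' proof, you would need to invoke a lifting theorem for central simple algebras over the complete local ring $\widehat{\Zplus} \cong F\croT$ (which ultimately boils down to the same kind of inductive deformation argument), and even then you would need the specific form of the generator $X + a\zeta$ to get the normalization ``induces the identity mod $(N,T)$.'' So the obstacle you anticipated is real and is the whole proof.
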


\begin{proof}
Throughout the proof, we fix an element $a \in K$ such as $\tau(a) = 
1$; such an element exists thanks to Lemma~\ref{lem:kerimtau}.
We are going to construct by induction a sequence $(\zeta_m)_{m > 0}$
of elements of $\Zplus$ such that $\zeta_1 = 0$ and, for $m > 0$,
\begin{itemize}
\item $\zeta_{m+1} \equiv \zeta_m \pmod {N^m}$,
\item $N(X + a \zeta_m) \in N^m \Aplus$.
\end{itemize}
We suppose that the sequence has been built until the index $m$.
The next term $\zeta_{m+1}$ is of the form $\zeta_m + N^m P$ for 
some polynomial $P \in \Zplus$. Besides it satisfies our requirements 
if and only if $N(X + a \zeta_{m+1}) \in N^{m+1} \Aplus$.
Relying on Lemma~\ref{lem:diffupsilon} and Proposition~\ref{prop:ZXC},
we carry out the following computation:
\begin{align*}
N(X + a \zeta_{m+1}) 
& = N(X) + \upsilon(a \zeta_{m+1})
  = N(X) + \upsilon(a \zeta_m) + \upsilon(a N^m P) \\
& \equiv N(X + a \zeta_m) + N^m P \pmod{N^{m+1}}
\end{align*}
given that $\tau(a) = 1$ and $N^m P \in \Zplus$.
Since $N(X + a \zeta_m)$ is divisible by $N^m$
thanks to our induction hypothesis, one can choose $P$ in order to
ensure that $N(X + a\zeta_{m+1}) \equiv 0 \pmod{N^{m+1}}$. This completes
the construction of $\zeta_{m+1}$.

We now set:
$$\zeta = (\zeta_m)_{m > 0} \in \varprojlim_{m > 0} \Zplus/N^m\Zplus.$$
Passing to the limit, we find $N(X + a\zeta) = 0$.
This property allows us to define a morphism of $K$-algebras:
$$\iota : \Aplus/N\Aplus 
\longrightarrow \varprojlim_{m>0} \Aplus/N^m\Aplus, \quad X \mapsto X+a\zeta.$$
Furthermore, as $C \equiv 0 \pmod N$, $\iota$ reduces to the identity 
map modulo $N$. By sending $T$ to $N$, we can extend $\iota$ to a 
second morphism :
$$\rho : (\Aplus/N\Aplus)\croT 
\longrightarrow \varprojlim_{m>0} \Aplus/N^m\Aplus.$$
This morphism reduces to the identity when we quotient out by $T$ on 
the left and by $N$ on the right. It is moreover bijective since its
domain and codomain are both
separated and complete (for the $T$-adic and $N$-adic topology 
respectively). Its inverse then satisfies all the requirements of the
theorem.
\end{proof}

\subsubsection{Unicity of Taylor expansions}

Unfortunately, unlike the commutative case, an isomorphism satisfying
the conditions of Theorem~\ref{theo:taylorexist} is not unique in general.
For this reason, it is convenient to introduce the following definition.

\begin{definition}
Keeping the above notations, an isomorphism
$$\varprojlim_{m > 0} \Aplus/N^m\Aplus \longrightarrow (\Aplus/N\Aplus)\croT$$
is called \emph{$z$-admissible} (or simply \emph{admissible} if there
is no risk of confusion) if it maps $N$ to $T$ and it induces the
identity after quotienting out by $N$ on the left and by $T$ on the
right.
\end{definition}

\begin{proposition}
\label{prop:taylorunique}
Let
$$\tau_1, \tau_2 :
\varprojlim_{m > 0} \Aplus/N^m\Aplus 
\stackrel\sim\longrightarrow (\Aplus/N\Aplus)\croT$$
be two admissible isomorphisms. 
Then there exists $V \in (\Aplus/N\Aplus)\croT$ with $V 
\equiv 1 \pmod T$ such that $\tau_1(f) = V^{-1}\tau_2(f) V$ for all
$f \in \varprojlim_{m > 0} \Aplus/N^m\Aplus$.
\end{proposition}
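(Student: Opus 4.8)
The plan is to compare the two admissible isomorphisms by transporting everything to the common target $R := (\Aplus/N\Aplus)\croT$. Set $\sigma = \tau_1 \circ \tau_2^{-1}$, a $K$-algebra automorphism of $R$. By construction $\sigma$ fixes $T$ (both $\tau_i$ send $N$ to $T$) and induces the identity modulo $T$ (both induce the identity after quotienting by $T$ on the right). So the whole problem reduces to showing: \emph{any $K$-algebra automorphism $\sigma$ of $R$ that fixes $T$ and reduces to the identity modulo $T$ is an inner automorphism $x \mapsto V^{-1} x V$ for some $V \in R$ with $V \equiv 1 \pmod T$.} Once this is proved, we recover the proposition with the same $V$, since $\tau_1(f) = \sigma(\tau_2(f)) = V^{-1}\tau_2(f)V$.

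To prove this reduced statement, I would build $V$ by successive approximation in the $T$-adic filtration. Write $B = \Aplus/N\Aplus$, so $R = B\croT$ and $R/T^{m}R = \Aplus/N^m\Aplus$ (using Theorem~\ref{theo:taylorexist}, or just the definitions). Suppose inductively we have found $V_m \in R$, $V_m \equiv 1 \pmod T$, with $\sigma(x) \equiv V_m^{-1} x V_m \pmod{T^m}$ for all $x$. Replacing $\sigma$ by $x \mapsto V_m \sigma(x) V_m^{-1}$, we may assume $\sigma \equiv \id \pmod{T^m}$ and seek a correction $W = 1 + T^m w$ with $w \in B$ (a coset representative) so that conjugation by $W$ kills the $T^m$-term of $\sigma - \id$. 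The obstruction to the inductive step is exactly a derivation: the map $D : B \to B$ defined by $\sigma(b) \equiv b + T^m D(b) \pmod{T^{m+1}}$ (for $b \in B$ lifted into $R$) is an $F$-linear derivation of $B$ that vanishes on $T$ — but $T$ is central, so that is automatic — and the condition that conjugation by $1+T^m w$ cancels $D$ is precisely $D(b) = bw - wb = [b,w]$, i.e. $D$ must be the inner derivation $\ad(w)$. Thus the key algebraic input is:

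\begin{lemma*}
Every $F$-linear derivation of $B = \Aplus/N\Aplus$ is inner.
\end{lemma*}

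\noindent
This is where the structure of $B$ is used essentially. By Proposition~\ref{prop:kerevc} (applied after a Hilbert twist, choosing an unramified $c$ with $\upsilon(c) = z$, which exists since $\upsilon$ is surjective onto $F$ by Lemma~\ref{lem:kerimtau}), $B$ is isomorphic to $\End_F(K)$, a central simple $F$-algebra; indeed it is a matrix algebra over $F$. The classical Skolem–Noether / first Hochschild cohomology computation then gives that $H^1$ of a central simple algebra with coefficients in itself vanishes, i.e. all derivations are inner — alternatively one gives the three-line direct argument with matrix units as in the lemma preceding Theorem~\ref{theo:trdeval}. So at each stage $w$ exists; moreover $w$ is determined modulo the centre of $B$, which is $F$, so one can (and I would) simply fix a choice.

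Finally I would check convergence: the partial products $V_m$ satisfy $V_{m+1} \equiv V_m \pmod{T^m}$, hence converge $T$-adically in the complete ring $R$ to some $V$ with $V \equiv 1 \pmod T$, and by construction $\sigma(x) = V^{-1} x V$ holds modulo $T^m$ for every $m$, hence exactly, since $R$ is separated. Unwinding the reduction gives the stated $V$ for the original pair $\tau_1, \tau_2$. The main obstacle — and the only genuinely nontrivial point — is the vanishing of derivations of $B$; everything else is a formal $T$-adic successive-approximation argument. I would emphasize in the write-up that this derivation-vanishing is exactly what fails to be as clean in the noncommutative world as one might hope, which is precisely why uniqueness holds only up to inner conjugation rather than on the nose.
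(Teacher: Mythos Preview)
Your strategy is sound and genuinely different from the paper's. The paper first proves directly that $B := \Aplus/N\Aplus$ is a central simple $F$-algebra and then invokes a Skolem--Noether--type result from~\cite{BHKV} applied to $\sigma_{|B}$ to produce the conjugating element in $B\croT$ in one stroke, normalizing it afterwards so that it is $\equiv 1 \pmod T$. You instead run a $T$-adic successive approximation, reducing each step to the innerness of a derivation of $B$. Both routes ultimately hinge on the central simplicity of $B$; yours is more self-contained (no external citation), while the paper's is shorter.

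There is, however, a real gap in your justification of the key lemma. You claim $B \simeq \End_F(K)$ by choosing an unramified $c \in K$ with $\upsilon(c) = z$, citing Lemma~\ref{lem:kerimtau}. But that lemma concerns the surjectivity of $\tau$, not of $\upsilon$; and $\upsilon:K\to F$ need \emph{not} be surjective---equivalently, $Z(X)-z$ may have no degree-one right factor in $\Aplus$, in which case $B$ is a nonsplit central simple $F$-algebra (possibly a division ring) rather than a matrix algebra. The reference to a Hilbert twist is also out of place here, since Section~\ref{sec:residue} already assumes $\theta=\id_K$.

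The fix is easy and you essentially gesture at it yourself: drop the unjustified claim that $B$ is a full matrix algebra, establish instead that $B$ is central simple over $F$ (the paper gives a short direct argument for this), and then appeal to the standard fact that every $F$-derivation of a finite-dimensional central simple $F$-algebra is inner (vanishing of Hochschild $H^1$, equivalently separability of $B$). With that correction, your inductive step and the $T$-adic convergence argument go through as written.
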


\begin{proof}
For simplicity, we write $R = \Aplus/N\Aplus$.
We first claim that $R$ is a simple central algebra over $\Zplus/
N\Zplus \simeq F$. Indeed, it is central because the formation of
centres commutes with the tensor product. In order to prove that
it is simple, let $I$ be a nonzero two-sided ideal of $R$. Since 
it is in particular a right ideal, there exists a monic divisor $P$ 
of $N$ 
such that $I = P \Aplus/N\Aplus$. Observe that the commutator $PX - XP$ 
lies in $I$ and has degree strictly less that $\deg P$. Hence it has to 
vanish, meaning that $PX = XP$ in $R$. Similarly, we prove that $Pa = 
aP$ for all $a \in K$. Therefore $P$ is central in $R$, which shows
that
$P \in \Zplus/N\Zplus$. Since the latter is a field, we deduce that
$P$ is invertible in $R$ and finally that $I = R$. Hence $R$ is 
simple.

Define $\tau = \tau_1 \circ \tau_2^{-1}$; it is an automorphism
of $R\croT$ which takes $T$ to itself and is congruent 
to the identity modulo $T$. Since $R$ is simple central, it follows
from \cite[Theorem~9.1]{BHKV} (applied with $\varphi = \tau_{|R}$) 
that there exists an invertible
element $c \in R\croT$ such that $\tau(x) = c^{-1} x c$ for
all $x \in R$. In fact, the latter equality holds more generally 
for any $x \in R\croT$ given that $\tau(T) = T$.
Finally, the fact that $\tau$ is congruent to the identity modulo 
$T$ indicates that $c_0 = c \text{ mod } T$ must be central in $R$. 
The proposition then holds for $V = c_0^{-1} c$.
\end{proof}

%\begin{corollary}
%Given a Ore polynomial $P \in \Aplus$, 
%the $T$-adic valuation of $\tau_a(P)$,
%and the first nonzero coefficient of $\tau_a(P)$
%do not depend on the choice of an admissible isomorphism $\tau_a$:
%\end{corollary}

%\begin{proof}
%If $\tau_{N,1}$, $\tau_{N,2}$ are two admissible isomorphisms the proposition \ref{Unicite} gives us a $V \in (K[X;\delta]/NK[X;\delta])\croT$ with $V \equiv 0 \pmod{T}$ such that :
%$$\tau_{N,1}(f) = V^{-1}\tau_{N,2}(f)V$$
%then, the result follows.
%\end{proof}

\subsection{Construction of skew residues}

We recall that we have constructed earlier the fraction field of 
$\Aplus$ (see Proposition~\ref{prop:fractionfield}); in what follows, 
we will denote it by $\calA$. Similarly, we set $\calC = \Frac(\Cplus)$ 
and $\calZ = \Frac(\Zplus)$.
Besides, as before, we consider an element $z \in F$ and set 
$N = Z(X)-z \in \Zplus$. We choose an admissible isomorphism
$\tau_z$ and consider the compositum:
$$\TS_z : \Aplus \longrightarrow 
\varprojlim_{m > 0} \Aplus/N^m\Aplus 
\stackrel{\tau_z}\longrightarrow (\Aplus/N\Aplus)\croT$$
where the first map is induced by the canonical projections
$\Aplus \to \Aplus/N^m\Aplus$.

\begin{lemma}
\label{lem:extendTSa}
For any $f \in \calZ$, the series $\TS_z(f)$ is invertible in
$(\Aplus/N\Aplus)\parT$.
\end{lemma}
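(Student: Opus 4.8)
The statement to prove is Lemma~\ref{lem:extendTSa}: for $f \in \calZ = \Frac(\Zplus)$, the series $\TS_z(f)$ is invertible in $(\Aplus/N\Aplus)\parT$. The natural approach is to reduce to the case $f \in \Zplus$, then analyze the image under $\TS_z$ of a non-zero central Ore polynomial. Write $f = D_1^{-1} D_2$ with $D_1, D_2 \in \Zplus$ non-zero; since $\TS_z$ is a ring homomorphism (it is a composite of ring homomorphisms), $\TS_z(f) = \TS_z(D_1)^{-1}\TS_z(D_2)$, so it suffices to check that $\TS_z(D)$ is invertible in $(\Aplus/N\Aplus)\parT$ for every non-zero $D \in \Zplus$. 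Here I use that the target $(\Aplus/N\Aplus)\parT$ is indeed a skew field: $\Aplus/N\Aplus$ is a (finite-dimensional) simple central $F$-algebra by the argument in the proof of Proposition~\ref{prop:taylorunique}, hence a central simple algebra over $F$, so its ring of Laurent series is a skew field and it is enough to show $\TS_z(D) \neq 0$.

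First I would recall that $\Zplus = F[Z(X)]$ is a genuine polynomial ring in the one variable $Z(X)$, and that $N = Z(X) - z$. Writing $D$ in the form $D = \sum_{i=0}^{d} a_i N^i$ with $a_i \in F$ (possible since $F[Z(X)] = F[N]$), the admissibility of $\tau_z$ — namely that it sends $N$ to $T$ and induces the identity modulo $(N$ on the left, $T$ on the right$)$ — forces $\TS_z(D) = \sum_{i=0}^{d} a_i T^i$ inside $(\Aplus/N\Aplus)\croT$, because each $a_i$ is a constant in $F$ fixed by the relevant quotient maps. Thus $\TS_z(D)$ is the image of $D$ under the classical Taylor expansion $\Zplus \hookrightarrow F\croT$, which is injective on $\Zplus \setminus \{0\}$; so $\TS_z(D)$ is a non-zero element of $F\croT \subset (\Aplus/N\Aplus)\croT$, hence non-zero, hence invertible in the Laurent series skew field.

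The one point that needs a little care is the very first reduction: I must know that $\TS_z$ extends to $\calZ$ at all, i.e.\ that the formula $D_1^{-1}D_2 \mapsto \TS_z(D_1)^{-1}\TS_z(D_2)$ is well-defined — but this follows immediately once we know $\TS_z(D)$ is invertible for non-zero central $D$, by the universal property of $\calZ = \Frac(\Zplus)$ (every non-zero element of $\Zplus$ maps to a unit), together with the fact that $\Zplus$ is central so there is no left/right ambiguity. So the logical order is: (1) observe $(\Aplus/N\Aplus)\parT$ is a skew field using simplicity and centrality of $R = \Aplus/N\Aplus$ over $F$; (2) compute $\TS_z$ on $\Zplus$ explicitly via admissibility, identifying it with the ordinary $N$-adic Taylor expansion $F[N] \to F\croT$; (3) conclude $\TS_z(D) \neq 0$ for $D \in \Zplus \setminus \{0\}$, hence invertible; (4) extend multiplicatively to $\calZ$ and read off invertibility of $\TS_z(f)$.

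The main (minor) obstacle is being precise in step~(2) that admissibility really pins down $\TS_z$ on all of $\Zplus$, not just the leading behaviour: the key is that the coefficients $a_i$ lie in the constant field $F$, on which both the ``mod $N$ on the left'' and ``mod $T$ on the right'' reductions act as the identity, so $\TS_z(a_i) = a_i$ and $\TS_z(N) = T$ suffice by multiplicativity and continuity (completeness of both sides for the $N$-adic, resp. $T$-adic, topology). Everything else is formal.
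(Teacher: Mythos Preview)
Your overall route is sound, but there is a genuine error in step~(1): the ring $(\Aplus/N\Aplus)\parT$ is \emph{not} a skew field in general. The quotient $R = \Aplus/N\Aplus$ is indeed a central simple $F$-algebra, but by Artin--Wedderburn it is isomorphic to some $M_n(D)$, and for the values of $z$ that actually arise in this paper (namely $z = \upsilon(c)$ for an unramified $c$) Proposition~\ref{prop:kerevc} gives $R \cong \End_F(K) \cong M_s(F)$ with $s = [K:F] > 1$. Then $R\parT \cong M_s(F\parT)$, which has zero divisors (e.g.\ $E_{12}^2 = 0$), so it is certainly not a division ring. Consequently the implication ``$\TS_z(D) \neq 0$, hence invertible'' does not follow from the reason you give.

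Fortunately your own computation already contains the fix: you show that for $D \in \Zplus \setminus \{0\}$ one has $\TS_z(D) = \sum_i a_i T^i \in F\croT$, and a nonzero element of $F\croT$ is invertible in the (genuine) field $F\parT \subset (\Aplus/N\Aplus)\parT$. So delete the skew-field claim entirely and invoke invertibility in $F\parT$ instead; the rest of your argument (including the reduction to $\Zplus$ and the use of $K$-linearity of $\tau_z$ to get $\tau_z(a_i) = a_i$) is correct.

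For comparison, the paper argues differently: it reduces to $f$ monic irreducible in $\Zplus$, treats $f = N$ directly (it maps to $T$), and for $f \neq N$ uses coprimality with $N$ to see that $f$ is already invertible in each $\Zplus/N^m\Zplus$, hence in the inverse limit, hence in $(\Aplus/N\Aplus)\croT$. The paper's approach avoids any explicit computation of $\TS_z$ on $\Zplus$ and any appeal to the structure of $R$, while your (corrected) approach has the advantage of explicitly identifying $\TS_z$ on $\Zplus$ with the classical $N$-adic expansion $F[N] \hookrightarrow F\croT$, which is conceptually clean and reusable later.
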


\begin{proof}
It is enough to prove the lemma when $f$ is monic and irreducible
in~$\Zplus$. It is clear when $f = N$ because $\TS_a$ maps $N$ to
$T$, which is by definition invertible in $(\Aplus/N\Aplus)\parT$.
On the other hand, if $f$ is different from $N$, it must be coprime
with $N$ by irreducibility. It is then invertible in each quotient
$\Zplus/N^m\Zplus$ and thus it is also a unit
in each $\Aplus/N^m\Aplus$. Passing to the limit, we find that $f$
is invertible in $\varprojlim_{m > 0} \Aplus/
N^m\Aplus$; it is then also in $(\Aplus/N\Aplus)\croT$
given that $\tau_a$ is an isomorphism.
\end{proof}

Combining Proposition~\ref{prop:fractionfield} and 
Lemma~\ref{lem:extendTSa}, we find that $\TS_z$ uniquely extends
to a ring homomorphism $\calA \to (\Aplus/N\Aplus)\parT$
that, in a slight abuse of notations, we continue to denote by 
$\TS_z$.

It turns out that the previous construction extends to elements 
lying in extensions of $F$. Precisely, let $\Fs$ denote a fixed 
separable closure of $F$ and set $\Ks = \Fs \otimes_F K$.
Since $K/F$ is purely inseparable, it is linearly disjoint from $\Fs$, 
implying that $\Ks$ is a field. Moreover, the derivation $\delta : K 
\to K$ extends uniquely by $\Fs$-linearity to a derivation of $\Ks$ 
whose field of constant is $\Fs$ and minimal polynomial is still 
$Z(X)$. In what follows, we continue to call $\delta$ this extension. 
We define $\Asplus = \Ks[X; \delta]$ and $\calAs = \Frac(\Asplus)$.
Applying what we have done previously with $K$ replaced by $\Ks$ (and 
$F$ replaced by $\Fs$ accordingly), we end up with a ring homomorphism:
$$\TS_z : \calA \to \big(\Asplus/(Z(X){-}a)\Asplus\big)\parT$$
for any $z \in \Fs$.
The series $\TS_z(f)$ is called the \emph{Taylor expansion} of $f$ 
around $a$.

We insist on the fact that it does depend on a choice of 
the admissible isomorphism $\tau_z$.
However, from Proposition~\ref{prop:taylorunique}, we derive that
two different choices of $\tau_z$ lead to two mappings $\TS_z$
which are conjugated by an element congruent to $1$ modulo~$T$.
In particular, the two following quantities are defined without
ambiguity:
\begin{itemize}
\item the \emph{order of vanishing} of $f$ at $z$, 
denoted by $\ord_z(f)$, defined as the $T$-adic valuation of $\TS_z(f)$,
\item the \emph{principal part} of $f$ at $z$, denoted by 
$\mathcal{P}_z(f)$, defined as the coefficient of $T^{\ord_z(f)}$ in the 
series $\TS_z(f)$.
\end{itemize}

We are now ready to define skew residues.

\begin{definition}
Given $f \in \calA$ and $z \in \Fs$, the \emph{skew residue} 
of $f$ at $z$, denoted by $\sres_z(f)$, is the coefficient of $T^{-1}$ 
in the series $\TS_z(f)$.
\end{definition}

Again, we insist on the fact that skew residues do depend on the choice 
of an admissible isomorphism. However, they are defined without 
ambiguity when the Ore function $f$ has at most a \emph{simple pole}
at the point $z$ we are looking at, \emph{i.e.} if $\ord_z(f) \geq -1$.
We shall see in \S\ref{ssec:trdsres} below that some quantities related to $\sres_a(f)$
are also well-defined in full generality.

\subsection{Reduced traces of skew residues}
\label{ssec:trdsres}

We recall that we have introduced in \S \ref{sec:trd} the reduced 
trace map $\Trd : \Aplus \to \Zplus$ and that we have given an
explicit formula for it in Proposition~\ref{prop:trddiff}.
Using Proposition~\ref{prop:fractionfield}, we extend by 
$\calZ$-linearity the map $\Trd$ to a mapping $\calA \to \calZ$ 
(we recall that $\calZ = \Frac(\Zplus)$) and continue to call $\Trd$ 
this extension.

\begin{proposition}
\label{prop:commTrdsres}
For all $f \in \calA$ and all $z \in \Fs$, we have:
$$\Trd\big(\sres_z(f)\big) = \res_z\big(\Trd(f)\:dZ(X)\big)$$
where $\res_z(\omega)$ denotes the residue at $z$ of the 
differential form $\omega$.
\end{proposition}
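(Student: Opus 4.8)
The plan is to reduce the statement to a computation at the level of Taylor expansions and then match coefficients. First I would fix $z \in \Fs$ and put $N = Z(X) - z$. Since both sides of the claimed identity are $\calZ$-linear in $f$ (the reduced trace is $\calZ$-linear by construction, and the classical residue $\res_z$ is $\calZ$-linear as well once we note $dZ(X)$ plays the role of the uniformizing differential), and since $\calA = \calZ \otimes_{\Zplus} \Aplus$ by Proposition~\ref{prop:fractionfield}, it suffices to prove the formula for $f \in \Aplus$ and then, after clearing denominators from $\calZ$, for $f$ of the form $N^{-\ell} g$ with $g \in \Aplus$. Actually the cleanest route is: work inside $(\Aplus/N\Aplus)\parT$ via an admissible isomorphism $\TS_z$, observe that $\Trd$ commutes with $\TS_z$ in the appropriate sense, and read off the coefficient of $T^{-1}$.

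The key step is a compatibility lemma: the reduced trace map $\Trd : \Aplus \to \Zplus$ induces, after completing along $N$, a map $\varprojlim_m \Aplus/N^m\Aplus \to \varprojlim_m \Zplus/N^m\Zplus$, and under the admissible isomorphism $\TS_z$ on the source and the classical Taylor expansion $K\croT \hookleftarrow \varprojlim_m \Cplus/N^m\Cplus$ restricted to $\Zplus$ (landing in $F\croT$, or $\Fs\croT$ over the separable closure) on the target, the square
\begin{equation*}
\xymatrix @C=4em {
\varprojlim_m \Aplus/N^m\Aplus \ar[r]^-{\TS_z} \ar[d]_-{\Trd} &
(\Aplus/N\Aplus)\croT \ar[d]^-{\overline{\Trd}} \\
\varprojlim_m \Zplus/N^m\Zplus \ar[r]^-{\sim} & \Fs\croT
}
\end{equation*}
commutes, where $\overline{\Trd}$ is the reduced trace of the simple central $\Fs$-algebra $\Aplus/N\Aplus$ (which is well-defined and independent of the admissible isomorphism because, by Proposition~\ref{prop:taylorunique}, two choices of $\TS_z$ differ by an inner automorphism, and reduced traces are invariant under conjugation). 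To see the square commutes it is enough to check it modulo each power of $T$, i.e.\ to check that $\Trd$ on $\Aplus/N^m\Aplus$ is compatible with the reduced trace on $\Aplus/N\Aplus$; this follows from Definition~\ref{def:trd} together with the fact that $\Aplus/N^m\Aplus$ is free over $\Cplus/N^m\Cplus$ on the canonical basis $(1, X, \dots, X^{s-1})$, so the trace of multiplication computed over $\Cplus$ is compatible with reduction modulo $N^m$ and, in the limit, with $\overline{\Trd}$ composed with the Taylor map.

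Granting the lemma, the proof finishes quickly. Extend $\TS_z$ to $\calA \to (\Aplus/N\Aplus)\parT$ as in Lemma~\ref{lem:extendTSa} and to $\calZ \to \Fs\parT$ classically; both extensions are ring homomorphisms, so the commuting square passes to the fraction fields. Now $\sres_z(f)$ is by definition the coefficient of $T^{-1}$ in $\TS_z(f) \in (\Aplus/N\Aplus)\parT$; applying $\overline{\Trd}$ coefficientwise gives $\overline{\Trd}(\sres_z(f))$ as the coefficient of $T^{-1}$ in $\overline{\Trd}(\TS_z(f))$. By the square, $\overline{\Trd}(\TS_z(f))$ equals the classical Taylor/Laurent expansion of $\Trd(f) \in \calZ$ in the variable $T = N = Z(X) - z$, and the coefficient of $T^{-1}$ there is exactly $\res_z(\Trd(f)\,dZ(X))$ by the very definition of the residue of a differential form with respect to the local parameter $Z(X) - z$. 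It remains only to identify $\overline{\Trd} \circ \sres_z$ with $\Trd \circ \sres_z$, i.e.\ that the reduced trace of $\Aplus$ restricted to the residue (which a priori lives in $\Aplus/N\Aplus$) agrees with $\overline{\Trd}$; this is immediate from the lemma applied in degree one, since $\Trd : \Aplus \to \Zplus$ reduces mod $N$ to $\overline{\Trd} : \Aplus/N\Aplus \to \Zplus/N\Zplus = F \subseteq \Fs$. This completes the proof.

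The main obstacle I expect is the compatibility lemma, and specifically justifying that the reduced trace descends correctly to the quotients $\Aplus/N^m\Aplus$ and matches the reduced trace of the simple central algebra $\Aplus/N\Aplus$ in the limit; the subtlety is that $\Trd$ is not $\Cplus$-linear (only $\Zplus$-linear), so one must be careful to compute the trace of the multiplication operator over $\Cplus$ — not over $\Zplus$ — and track how reduction modulo $N$ (respectively $N^m$) interacts with the freeness of $\Aplus$ over $\Cplus$. Once the bookkeeping with the canonical basis is set up and one invokes the conjugation-invariance of reduced traces to handle the non-uniqueness of the admissible isomorphism, the rest is formal.
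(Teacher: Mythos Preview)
Your proposal is correct and follows essentially the same route as the paper: reduce the identity to the commutation of $\Trd$ with the Taylor expansion $\TS_z$, use $\calZ$-linearity to reduce to $f \in \Aplus$, and then verify the commutation by computing the trace of the right-multiplication operator in the canonical basis $(1,X,\dots,X^{s-1})$ over $\Cplus$ (resp.\ $\Csplus/N\Csplus$). The paper phrases the key step slightly more directly---it observes that, since $\TS_z$ is a ring isomorphism on the completion, the matrix of multiplication by $\TS_z(f)$ in one basis is conjugate to $\TS_z$ applied entrywise to the matrix of multiplication by $f$ in the other, hence the traces agree---whereas you unpack this as a compatibility check modulo each $N^m$; but the content is the same, and your remark that conjugation-invariance of the reduced trace handles the non-uniqueness of the admissible isomorphism is a useful observation that the paper leaves implicit.
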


\begin{proof}
Write $N = Z(X) - z$. The proposition will follow if you prove the
commutativity of the following diagram:
$$\xymatrix @C=1.5ex {
  \calA \ar[rrrr]^-{\TS_z} \ar[d]_-{\Trd} 
&&&& \big(\Asplus/N\Asplus\big)\parT \ar[d]^-{\Trd} \\
  \calZ \ar[rrrr]^-{\TS_z}
&&&& \big(\Zsplus/N\Zsplus\big)\parT \ar@{=}[r] & \Fs\parT. }$$
By $\calZ$-linearity, it is enough to consider the case where
$f \in \Aplus$. We equip $\Aplus$ (resp. $\Asplus/N\Asplus$) 
with its canonical basis  $(1, X, \ldots, X^{p^r-1})$ over $\Cplus$
(resp. $\Csplus/N\Csplus$).
Let $M$ be the matrix of the map $x \mapsto xf$ acting on $\Aplus$
and similarly, let $N$ be the matrix of the map $x \mapsto x\cdot
\TS_z(f)$ acting on $\big(\Asplus/N\Asplus\big)\croT$.
By definition $\Trd(f)$ is the trace of $M$ while $\Trd\circ
\TS_z(f)$ is the trace of $N$. On the other hand, from the fact
that $\TS_z$ is a ring homomorphism, we deduce that the matrices
$\TS_z(M)$ and $N$ and conjugated so, they have the same trace.
Hence $\Trd\circ\TS_z(f) = \TS_z\circ\Trd(f)$ and we have proved
our claim.
\end{proof}

\begin{remark}
Proposition~\ref{prop:commTrdsres} can be refined as follows.
Let $\sigma_0 : \Aplus \to \Cplus$ be the $\Cplus$-linear form
defined by $\sigma_0(X^i) = 0$ for $0 \leq i < p^r - 1$ and
$\sigma_0(X^{p^r-1}) = 1$. Extending scalars, we find that 
$\sigma_0$ induces mappings $\calA \to \calC$ and 
$(\Asplus/N\Asplus)\parT \to (\Csplus/N\Csplus)\parT$. With these
notations, one can prove that:
$$\sigma_0\big(\sres_z(f)\big) = \res_z\big(\sigma_0(f)\:dZ(X)\big)$$
This latter statement gives back Proposition~\ref{prop:commTrdsres}
after applying $\tau$ on both sides; it then indeed appears as a
refinement of the proposition.
\end{remark}

Proposition~\ref{prop:commTrdsres} admits several interesting 
corollaries. For example, it shows that the reduced trace of 
$\sres_z(f)$ is canonical in the sense that it does not depend on a 
choice of an admissible isomorphism $\tau_z$.
Besides, one has a noncommutative analogue of the residue formula
given by the next theorem.

\begin{theorem}[Residue formula]
\label{theo:residueformula}
Let $f = \frac P D \in \calA$ with $P \in \Aplus$ and $D \in \Zplus$,
$D \neq 0$. If $\deg P \leq \deg D - 2$, we have:
$$\sum_{z \in F^s} \Trd\big(\sres_z(f)\big) = 0.$$
\end{theorem}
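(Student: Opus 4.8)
The plan is to reduce the noncommutative residue formula to the classical residue formula for differential forms on $\mathbb P^1$, using Proposition~\ref{prop:commTrdsres} as the bridge. First I would apply that proposition termwise: for each $z \in \Fs$ we have $\Trd\big(\sres_z(f)\big) = \res_z\big(\Trd(f)\,dZ(X)\big)$, so that
$$\sum_{z \in \Fs} \Trd\big(\sres_z(f)\big) = \sum_{z \in \Fs} \res_z\big(\Trd(f)\,dZ(X)\big).$$
Thus the statement becomes a statement purely about the commutative rational function $\Trd(f) \in \calZ = \Frac(\Zplus) = \Frac(F[Z(X)])$, namely that the sum of its residues against the one-form $dZ(X)$, taken over all $z$ in the separable closure $\Fs$, vanishes. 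Writing $t = Z(X)$ as a coordinate, $\calZ$ is just the rational function field $F(t)$, the element $\Trd(f)$ is an honest rational function $g(t) \in F(t)$, and $dZ(X) = dt$; the points $z \in \Fs$ correspond to the finite closed points of $\mathbb A^1_F$ (or, after base change, to the $\Fs$-rational points, which is how residues of $g\,dt$ are naturally indexed here).

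The classical residue formula on $\mathbb P^1$ asserts that the sum of residues of any rational differential form over \emph{all} points, including $\infty$, is zero. So the content I must extract is that the residue at infinity of $g(t)\,dt$ vanishes, which will hold precisely because of the degree hypothesis $\deg P \le \deg D - 2$. Here the key point is a degree estimate for $\Trd$: since $\Trd : \Aplus \to \Zplus$ sends a Ore polynomial of $X$-degree $e$ into $\Zplus = F[Z(X)]$ with $Z(X)$-degree at most $\lfloor e/p^r\rfloor$ (this is visible from Propositions~\ref{prop:trdendo}–\ref{prop:trddiff}, which express $\Trd$ via the coefficients of $X^{si}$), we get $\deg_t \Trd(P) \le \deg_X P / p^r$ while $D \in \Zplus$ has $\deg_t D = \deg_X D / p^r$. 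Hence $g = \Trd(f) = \Trd(P)/D$ satisfies $\deg_t(\text{numerator}) \le \deg_t(\text{denominator}) - 1$ as a function of $t$ — in fact we need $\le \deg_t D - 2$ only if $dZ(X)$ contributes, but since $\omega = g\,dt$ with $t$ a global coordinate, $g$ vanishing to order $\ge 1$ at $t = \infty$ already forces $\omega$ to have no residue at infinity (a form $g\,dt$ with $\ord_\infty g \ge 1$ is regular at infinity). One should double-check the exact bound: writing $t = s^{-1}$ near infinity, $dt = -s^{-2}\,ds$, so $g\,dt$ is regular at $\infty$ as soon as $\ord_\infty(g) \ge -1$, i.e. $\deg_t g \le 1$; but a residue at $\infty$ can only be nonzero when $\ord_\infty(g\,dt) = -1$, i.e. when $\deg_t g = 1$ exactly. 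The hypothesis $\deg P \le \deg D - 2$ gives $\deg_t g \le \deg_t D - 2 < \deg_t D - 1$, so after converting we find $\ord_\infty(g\,dt) \ge 0$, hence $\res_\infty(g\,dt) = 0$. Then the classical residue theorem over $\mathbb P^1_{\Fs}$ gives $\sum_{z \in \Fs}\res_z(g\,dt) = -\res_\infty(g\,dt) = 0$, which is exactly what we want.

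The main obstacle, I expect, is not any of this but making rigorous the bookkeeping of \emph{which} residues we are summing over and in what field: the residues $\res_z$ for $z \in \Fs$ are residues of $\Trd(f)\,dZ(X)$ viewed as a form over $\mathbb A^1_{\Fs} = \operatorname{Spec}\Fs[Z(X)]$, and the classical residue formula applies over the \emph{algebraic closure} — but since $g \in F(t)$ and $F^s$ is separably closed while the relevant extension $K/F$ is purely inseparable (so no new residues appear from inseparable points when we pass to $\overline F$), summing over $\Fs$ suffices; the only missing point is $\infty$, which we handled by the degree bound. One must also confirm that $\Trd(f)$ genuinely has the shape $\Trd(P)/D$ with $D \in \Zplus$, which is immediate since $\Trd$ is $\calZ$-linear and $D$ is central, so $\Trd(P/D) = \Trd(P)/D$. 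Assembling: apply Proposition~\ref{prop:commTrdsres}, use $\calZ$-linearity to write $\Trd(f) = \Trd(P)/D$, bound $\deg_t$ via Propositions~\ref{prop:trdendo}–\ref{prop:trddiff}, deduce regularity of $\Trd(f)\,dZ(X)$ at infinity, and invoke the classical residue formula on $\mathbb P^1_{\Fs}$.
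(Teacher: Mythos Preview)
Your approach is exactly the paper's: apply Proposition~\ref{prop:commTrdsres} termwise to reduce the sum to $\sum_{z \in \Fs} \res_z\big(\Trd(f)\,dZ(X)\big)$, then invoke the classical residue formula on $\mathbb P^1$, handling the point at infinity via the degree hypothesis and noting (as the paper also does) that residues at inseparable points vanish.

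The one place where your argument, as written, does not close is the degree estimate at infinity. The bound you state, $\deg_t \Trd(P) \le \lfloor \deg_X P / p^r \rfloor$, is not sharp enough: with $\deg_X P \le dp^r - 2$ (where $d = \deg_t D$) it yields only $\deg_t \Trd(P) \le d-1$, hence $\ord_\infty g \ge 1$ and $\ord_\infty(g\,dt) \ge -1$, which does \emph{not} exclude a nonzero residue at $\infty$ (for instance $(c/t)\,dt$ has residue $-c$ there). Your later claim ``$\deg_t g \le \deg_t D - 2$'' is therefore not justified by the bound you announced. What actually saves the computation is the stronger content of Proposition~\ref{prop:trddiff}: since the theorem lives in the differential setting $\theta = \id_K$ (Proposition~\ref{prop:trdendo} is not relevant here), one has $\Trd(aX^i) = 0$ for all $i < p^r - 1$. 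Writing $P = \sum_{i=0}^{p^r-1} C_i X^i$ with $C_i \in \Cplus$, only $C_{p^r-1}$ contributes to $\Trd(P)$; the constraint $\deg_X(C_{p^r-1} X^{p^r-1}) \le dp^r - 2$ forces $\deg_{Z(X)} C_{p^r-1} \le d - 2$, so $\deg_t \Trd(P) \le d - 2$, $\ord_\infty g \ge 2$, and $\omega = g\,dZ(X)$ is genuinely regular at $\infty$. With this correction your proof goes through and coincides with the paper's.
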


\begin{proof}
Write $g = \Trd(f) = \frac{\Trd(P)} D \in \calZ$ and define the
differential form $\omega = g {\cdot} dZ(X)$.
Applying Proposition~\ref{prop:commTrdsres} to each summand, we obtain:
\begin{equation}
\label{eq:resformula}
\sum_{z \in F^s} \Trd\big(\sres_z(f)\big) =
\sum_{z \in F^s} \res_z(\omega).
\end{equation}
By the condition on the degrees, the differential form $\omega$
has no pole at infinity. It may have poles at
inseparable points but the corresponding residues all vanish.
From the classical residue formula, we then deduce that the
right hand side of Eq.~\eqref{eq:resformula} vanishes. The left
hand side then vanishes as well, establishing the theorem.
\end{proof}

\section{Duality over Ore polynomial rings}
\label{sec:duality}

In this section, we discuss duality over Ore rings and its relation
with evaluation morphisms and residue maps.

We keep the framework of the previous sections:
we consider a field $K$ equipped with
a ring homomorphism $\theta : K \rightarrow K$ and a
$\theta$-derivation $\delta : K \rightarrow K$.
We denote by $F$ the subfield of $K$ consisting of elements $a \in K$
such that $\theta(a) = a$ and $\delta(a) = 0$ and we assume as always
that $K/F$ is a finite extension. We set $\Aplus = K[X; \theta,
\delta]$.

We recall that the centre $\Zplus$ of $\Aplus$ is the ring of
univariate polynomials over $F$ in a distinguished element $Z(X)$.
When $\theta = \id_K$, this special central polynomial $Z(X)$ takes
the form
\begin{equation}
\label{eq:ZX3}
Z(X) = X^{p^r} + z_{r-1} X^{p^{r-1}} + \cdots + 
z_1 X^p + z_0 X \quad (z_i \in F)
\end{equation}
(see Eq.~\eqref{eq:ZX}). On the contrary, when $\theta \neq \id_K$,
we have $Z(X) = (X{+}a)^s$ where $s = [K:F]$ is the order of $\theta$
and $a$ is some element of $K$. If $\theta = \id_K$, we define $s = 
p^r$. In both cases, we then have $s = \deg Z = [K:F]$.

\subsection{Two perfect pairings}
\label{ssec:pairing}

To begin with, we recall the definition of duality in the
context of the sum-rank metric. Our presentation differs slightly
from the most usual one in the sense we do not work with matrices
and transposes but instead with pairings and adjoints.

When $\theta = \id_K$, we define $\tau : K \to F$ as in \S
\ref{ssec:trdeval} by the formula:
$$\tau(f) = \sum_{i=0}^r z_i \delta^{p^i-1}(f).$$
On the contrary, when $\theta \neq \id_K$, we let $\tau$ denote
the trace map of $K/F$.

\begin{proposition}
The pairing $K \times K \to F$, $(f,g) \mapsto \langle f,g\rangle_K = 
\tau(fg)$ is $F$-bilinear and nondegenerate.
\end{proposition}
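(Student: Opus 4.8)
The plan is to split into the two cases according to whether $\theta = \id_K$ or $\theta \neq \id_K$, since the function $\tau$ is defined differently in each. Bilinearity is immediate in both cases: $\tau$ is additive, and $F$-bilinearity of $(f,g) \mapsto \tau(fg)$ follows from the fact that $\tau$ is $F$-linear (established in Lemma~\ref{lem:kerimtau} when $\theta = \id_K$, and clear for the trace map $\tr_{K/F}$ otherwise) together with $F$-bilinearity of the multiplication $K \times K \to K$. So the content is entirely in nondegeneracy.

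For nondegeneracy, I would argue as follows. Suppose $g \in K$ is such that $\tau(fg) = 0$ for all $f \in K$. If $g \neq 0$, then as $f$ ranges over $K$ so does $fg$ (multiplication by a nonzero element is a bijection of $K$), hence $\tau$ vanishes identically on $K$. In the case $\theta \neq \id_K$, this contradicts the nondegeneracy of the trace form of the finite \emph{separable}\footnote{When $\theta \neq \id_K$, the extension $K/F$ is separable; indeed $F$ is the fixed field of the finite cyclic group generated by $\theta$, so $K/F$ is Galois.} extension $K/F$ — equivalently, the trace map $\tr_{K/F}$ is not identically zero. In the case $\theta = \id_K$, the vanishing of $\tau$ on $K$ directly contradicts Lemma~\ref{lem:kerimtau}, which asserts that $\tau$ maps $K$ surjectively onto $F$ (in particular $\tau \neq 0$). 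Either way we reach a contradiction, so $g = 0$, which is precisely nondegeneracy.

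The only mild subtlety — and the step I would flag as requiring a word of care rather than real difficulty — is the separability claim in the case $\theta \neq \id_K$, used to invoke nondegeneracy of the ordinary trace form. This can either be cited (it is standard that the fixed field of a finite group of automorphisms gives a Galois, hence separable, extension) or bypassed: one can instead appeal to the explicit description of $\upsilon = N_{K/F}$ and $\tau = \tr_{K/F}$ already used in the paper, together with the fact proved earlier (e.g.\ in the proof of Theorem~\ref{theo:trdeval}, where an element $a \in K$ with $\tr_{K/F}(a) \neq 0$ is chosen) that $\tr_{K/F}$ is nonzero. Since both cases have essentially already been handled in passing earlier in the paper, the proof is short.
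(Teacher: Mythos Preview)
Your proposal is correct and follows essentially the same route as the paper: split on whether $\theta = \id_K$, reduce nondegeneracy to the nonvanishing of $\tau$, and in the $\theta \neq \id_K$ case invoke separability of $K/F$ to conclude that $\tr_{K/F}$ is nonzero. The only point where the paper is slightly more careful is the claim that $F$ equals the fixed field of $\theta$ (not just contained in it): since $F$ is defined by the two conditions $\theta(x)=x$ and $\delta(x)=0$, the paper uses Proposition~\ref{prop:theta-derivations} to observe that $\delta$ is a scalar multiple of $\theta - \id_K$ and hence automatically vanishes on $K^\theta$, giving $F = K^\theta$; your footnote asserts this identification directly, but you correctly flag it as the step needing care.
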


\begin{proof}
Bilinearity follows from the linearity of $\tau$.
When $\theta = \id_K$, nondegeneracy follows from the fact that $\tau : 
K \to F$ is surjective (see Lemma~\ref{lem:kerimtau}).

When $\theta \neq \id_K$, we claim that $F$ is the subfield of $K$ 
fixed by $\theta$. Indeed, by Proposition~\ref{prop:theta-derivations}, 
we know that $\delta$ is a scalar multiple of $\delta_0 = \theta - 
\id_K$. Therefore $\delta$ vanishes on the subfield fixed by $\theta$ 
and our claim is proved.
It follows that the extension $K/F$ is separable, implying eventually
that $\tau$ is nondegenerate.
\end{proof}

From now on, we endow $K$ with the bilinear pairing 
$\langle f,g\rangle_K = \tau(fg)$. If $V$ is a 
$F$-linear subspace of $K$, we recall that the \emph{orthogonal}
of $V$, denoted by $V^\perp$, is defined as the set of all $f \in K$ 
such that $\langle f,g\rangle_K = 0$ for all $g \in V$.
Similarly, if $\varphi : K \to K$ is a $F$-linear map, we define
the \emph{adjoint} of $\varphi$, denoted by $\varphi^\star$, as
the endomorphism of $K$ determined by the adjunction rule:
$$\langle \varphi^\star(f), g \rangle_K = \langle f, \varphi(g)\rangle_K$$
which is required to hold true for all $f, g \in K$.
It is important to notice that for all $a \in K$,
the multiplication by $a$ (\emph{i.e.} the mapping $\mu_a : K \to K$, 
$x \mapsto ax$) is self-adjoint (\emph{i.e.} $\mu_a^\star = \mu_a$).

It is well-known that $\ker(\varphi^\star) = \im(\varphi)^\perp$ and 
$\im(\varphi^\star) = \ker(\varphi)^\perp$. From these equalities, it 
follows that $\varphi$ vanishes on some $F$-linear subspace $V$ of $K$ 
if and only if $\varphi^\star$ takes its values in $V^\perp$. 
The adjoint construction then induces a bijection between 
$\Hom_F(K/V,K)$ and $\Hom_F(K,V^\perp)$. Replacing $V$ by $V^\perp$, we 
find that is also induces an isomorphism $\Hom_F(K/V^\perp,K) 
\stackrel\sim\rightarrow \Hom_F(K,V)$.

\begin{lemma}
\label{lem:pairing}
For any $F$-linear subspace $V$ of $K$, the mapping :
$$\begin{array}{rcl}
\Hom_F(K/V^\perp,K) \times \Hom_F(V,K) & \longrightarrow & F \smallskip \\
(\varphi, \psi) & \mapsto & 
\langle \varphi, \psi \rangle = \tr(\varphi^\star \circ \psi)
\end{array}$$
(where $\tr$ denotes the trace map)
is a perfect $F$-bilinear pairing.
\end{lemma}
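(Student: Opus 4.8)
The goal is to show that the pairing $\langle \varphi, \psi \rangle = \tr(\varphi^\star \circ \psi)$ on $\Hom_F(K/V^\perp, K) \times \Hom_F(V, K)$ is a perfect $F$-bilinear pairing. $F$-bilinearity is immediate from the $F$-linearity of the trace and of the adjoint construction $\varphi \mapsto \varphi^\star$. Since both spaces have the same finite $F$-dimension — indeed $\dim_F \Hom_F(K/V^\perp, K) = [K:F] \cdot \dim_F(K/V^\perp) = [K:F] \cdot \dim_F V = \dim_F \Hom_F(V, K)$, using $\dim_F V^\perp = [K:F] - \dim_F V$ — it suffices to check that the pairing is nondegenerate on one side, say that $\langle \varphi, \psi \rangle = 0$ for all $\psi \in \Hom_F(V, K)$ forces $\varphi = 0$.

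The cleanest route is to reduce to the already-established nondegeneracy of the trace form on $\End_F(K)$. Recall from the discussion preceding the lemma that the adjoint construction induces an isomorphism $\Hom_F(K/V^\perp, K) \stackrel\sim\to \Hom_F(K, V)$, $\varphi \mapsto \varphi^\star$, and similarly (applying $\star$ on the other factor, or rather composing) we may view elements of $\Hom_F(V, K)$ as endomorphisms of $K$ that vanish on $V^\perp$ after precomposing with the projection — more precisely, $\psi \in \Hom_F(V,K)$ can be extended (non-canonically) but what matters is the following: the composite $\varphi^\star \circ \psi$ makes sense because $\varphi^\star$ is defined on all of $K$ and $\psi$ maps into $K$; here $\psi$ is regarded as a map $V \to K$ but we may precompose $\varphi^\star \circ \psi$ only on the domain $V$, so $\varphi^\star \circ \psi$ is really an element of $\Hom_F(V, K)$, not of $\End_F(K)$. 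To take its trace we should instead interpret things as follows: $\varphi \in \Hom_F(K/V^\perp, K)$ means $\varphi$ is an endomorphism of $K$ vanishing on $V^\perp$; then $\varphi^\star \in \End_F(K)$ has image in $V$ (by the equalities $\im(\varphi^\star) = \ker(\varphi)^\perp \subseteq (V^\perp)^\perp = V$), and $\psi \in \Hom_F(V, K)$ extended by zero on a complement, or better, $\psi \circ \varphi^\star$ is a well-defined element of $\End_F(K)$ since $\varphi^\star$ lands in $V = \dom(\psi)$. So $\langle \varphi, \psi\rangle = \tr(\varphi^\star \circ \psi) = \tr(\psi \circ \varphi^\star)$ by the trace identity, and the latter is the trace of a genuine endomorphism of $K$.

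\textbf{Key steps.} First, set up the dimension count above to reduce perfectness to left-nondegeneracy. Second, fix the precise interpretation of $\varphi^\star \circ \psi$ (equivalently $\psi \circ \varphi^\star$) as an endomorphism of $K$, using $\im(\varphi^\star) \subseteq V$. Third, suppose $\tr(\psi \circ \varphi^\star) = 0$ for all $\psi \in \Hom_F(V, K)$; since every endomorphism of $K$ supported on $V$ (i.e.\ vanishing on a fixed complement of $V$, with image anywhere) arises as such a $\psi$ composed with the projection, and since $\varphi^\star$ already has image in $V$, the map $u \mapsto \psi \circ u$ ranges — as $\psi$ varies — over a large enough family that $\tr(\psi \circ \varphi^\star) = 0$ for all of them forces $\varphi^\star = 0$. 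Concretely: if $\varphi^\star \neq 0$, pick $v \in V$ with $w := \varphi^\star(v) \neq 0$; choose $\psi \in \Hom_F(V, K)$ with $\psi(w') = 0$ for $w'$ in a complement of $\langle w\rangle$ in... — this is fiddly, so instead invoke nondegeneracy of the trace form on $\End_F(K)$ directly: the space $\{\psi \circ (\text{incl}_V) : \psi \in \Hom_F(V,K)\}$, viewed inside $\End_F(K)$ via extension by zero on a fixed complement, together with the constraint that $\varphi^\star$ has image in $V$, means $\tr(\psi\varphi^\star)$ only sees the ``block'' of $\varphi^\star$ into $V$, which is all of it; nondegeneracy of $\tr$ on $\Hom_F(K, V) \times \Hom_F(V, K)$ (a standard fact, itself a block of the perfect trace pairing on $\End_F(K)$) then gives $\varphi^\star = 0$, hence $\varphi = 0$.

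\textbf{Main obstacle.} The only real subtlety is bookkeeping: making rigorous the identification of $\Hom_F(K/V^\perp, K)$ with a subspace of $\End_F(K)$ (endomorphisms vanishing on $V^\perp$), of $\Hom_F(V,K)$ with another subspace (after choosing a complement, or working with the pairing $\Hom_F(K,V) \times \Hom_F(V,K) \to F$ directly), and checking that $\tr(\varphi^\star \circ \psi)$ computed in these terms matches the claimed formula and is symmetric enough to transfer nondegeneracy from one factor to the other. Once the identifications are pinned down, the result is just the statement that the trace pairing $\End_F(K) \times \End_F(K) \to F$ restricts to a perfect pairing between the subspace $\{u : u(V^\perp) = 0\} \cong \Hom_F(K/V^\perp, K)$ and... — actually the correct partner is $\Hom_F(V, K) \cong \{u : \ker u \supseteq \text{(complement)}\}$ — the adjoint swap $\varphi \leftrightarrow \varphi^\star$ converts the ``vanishes on $V^\perp$'' condition into ``image in $V$'', and the perfect pairing $\Hom_F(K, V) \times \Hom_F(V, K) \to F$ via $\tr$ is elementary linear algebra. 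I expect no genuine difficulty beyond this, and would keep the written proof short, citing the nondegeneracy of $\langle\,,\,\rangle_K$ and of the trace form, plus the dimension equality, as the substance.
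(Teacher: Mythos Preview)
Your proposal is correct and follows essentially the same route as the paper: both arguments use the dimension equality $\dim_F \Hom_F(K/V^\perp,K) = \dim_F \Hom_F(V,K)$ to reduce perfectness to one-sided nondegeneracy, then invoke the adjoint isomorphism $\Hom_F(K/V^\perp,K) \xrightarrow{\sim} \Hom_F(K,V)$, $\varphi \mapsto \varphi^\star$, to reduce to the standard perfect trace pairing between $\Hom_F(K,V)$ and $\Hom_F(V,K)$. The only cosmetic differences are that the paper checks nondegeneracy on the $\psi$-side rather than the $\varphi$-side, and it spells out the last step with elementary matrices (choosing bases so that $\psi$ and $\varphi$ become rectangular matrices $M \in \mathcal{M}_{s,d}$ and $N \in \mathcal{M}_{d,s}$, then taking $N$ with a single nonzero entry) where you appeal to the result as elementary linear algebra.
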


\begin{proof}
As $\Hom_F(K/V^\perp,K)$ and $\Hom_F(V,K)$ have the same dimension, it is 
enough to establish the following property: if $\psi \in \Hom_F(V,K)$ 
is such that $\tr(\varphi^\star \circ \psi) = 0$ for all $\varphi \in \Hom_F(K/V^\perp,K)$, 
then $\psi = 0$. Given that the adjunction induces an isomorphism 
between $\Hom_F(K/V^\perp,K)$ and $\Hom_F(K, V)$, it is enough to 
prove that $\psi = 0$ if $\tr(\varphi \circ \psi) = 
0$ for all $\varphi \in \Hom_F(K,V)$.
Taking basis and writing $s = [K{:}F]$, $d = \dim_F V$, 
we are reduced to check that if a matrix
$M \in \mathcal{M}_{s,d}$
satisfies $\tr(NM) = 0$ for all $N \in \mathcal{M}_{d,s}$, then
$M$ is zero. This finally follows from the observation that $\tr(NM)$ is 
the $(i,j)$ coefficient of $M$ when $N$ is the matrix with all entries 
equal to $0$ except the one in position $(j,i)$ which is equal to $1$.
\end{proof}

\begin{remark}
An important particular case occurs when $V = K$; in this situation
$K/V^\perp$ is equal to $K$ as well and the bilinear form 
$\langle -, - \rangle$ of Lemma~\ref{lem:pairing} defines a perfect 
pairing over $\End_F(K)$.
\end{remark}

If $C$ is a $F$-linear subspace of $\Hom_F(V,K)$ (resp. of
$\Hom_F(K/V^\perp,K)$), we will denote by $C^\perp$ its orthogonal in
$\Hom_F(K/V^\perp,K)$ (resp. in $\Hom_F(V,K)$). Since our pairing is
nondegenrate, we always have $(C^\perp)^\perp = C$ and the following
equality of dimensions:
\begin{equation}
\label{eq:dimperp}
\dim_F C + \dim_F C^\perp = \dim_F \Hom_F(V,K) =
[K:F] \cdot \dim_F V.
\end{equation}
Besides, it is worth noticing that both $\Hom_F(K/V^\perp,K)$ and
$\Hom_F(V,K)$ are endowed with a natural structure of $K$-linear
vector spaces since $K$ acts on the codomains.
The next lemma ensures that $K$-linearity is preserved under duality.

\begin{lemma}
\label{lem:perpKlin}
If $C$ is a $K$-linear subspace of $\Hom_F(V,K)$, 
then $C^\perp$ is a $K$-linear subspace of $\Hom_F(K/V^\perp,K)$.
Moreover, we have:
$$\dim_K C + \dim_K C^\perp = \dim_F V.$$
\end{lemma}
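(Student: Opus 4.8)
The plan is to deduce both assertions from the perfect pairing of Lemma~\ref{lem:pairing}, being careful about how the $K$-action interacts with the adjunction. First I would make the $K$-action explicit: for $a \in K$ and $\psi \in \Hom_F(V,K)$, the scalar multiple $a \cdot \psi$ is the map $v \mapsto a \psi(v)$, i.e. $a \cdot \psi = \mu_a \circ \psi$; similarly $a \cdot \varphi = \mu_a \circ \varphi$ on $\Hom_F(K/V^\perp,K)$. The key computation is then to track $\langle a\cdot\varphi, \psi\rangle$ and $\langle \varphi, a\cdot\psi\rangle$ through the formula $\langle\varphi,\psi\rangle = \tr(\varphi^\star\circ\psi)$. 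Using that $\mu_a$ is self-adjoint (noted just before the lemma) we get $(\mu_a\circ\varphi)^\star = \varphi^\star\circ\mu_a^\star = \varphi^\star\circ\mu_a$, hence
$$
\langle a\cdot\varphi,\psi\rangle = \tr(\varphi^\star\circ\mu_a\circ\psi) = \tr(\varphi^\star\circ(a\cdot\psi)) = \langle\varphi,a\cdot\psi\rangle .
$$
So the pairing is ``$K$-balanced'' in the sense that moving a scalar from one side to the other does not change the value.

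From this balancing relation the $K$-linearity of $C^\perp$ is immediate: if $C$ is $K$-linear and $\varphi \in C^\perp$, then for every $a\in K$ and every $\psi \in C$ we have $\langle a\cdot\varphi,\psi\rangle = \langle\varphi,a\cdot\psi\rangle = 0$ since $a\cdot\psi \in C$; hence $a\cdot\varphi \in C^\perp$. (Additive closure of $C^\perp$ is clear.) This gives the first assertion.

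For the dimension formula I would combine the $F$-dimension identity~\eqref{eq:dimperp}, namely $\dim_F C + \dim_F C^\perp = [K:F]\cdot\dim_F V$, with the relations $\dim_F C = [K:F]\cdot\dim_K C$ and $\dim_F C^\perp = [K:F]\cdot\dim_K C^\perp$, which hold because $C$ and $C^\perp$ are now known to be $K$-vector spaces. Dividing through by $[K:F]$ yields $\dim_K C + \dim_K C^\perp = \dim_F V$, as claimed. I do not anticipate a serious obstacle here; the only point requiring a little care is the self-adjointness bookkeeping for $\mu_a$ and making sure the adjoint in the definition of $\langle-,-\rangle$ is taken with respect to the correct copy of the pairing $\langle-,-\rangle_K$ on $K$, so that the identity $(\mu_a\circ\varphi)^\star = \varphi^\star\circ\mu_a$ is legitimate. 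Everything else is routine linear algebra over the field extension $K/F$.
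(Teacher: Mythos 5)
Your proof is correct and follows essentially the same route as the paper: you establish $K$-linearity of $C^\perp$ via the self-adjointness of $\mu_a$ (which makes the pairing $K$-balanced), and you deduce the dimension formula from Eq.~\eqref{eq:dimperp} by dividing by $[K:F]$, using the $K$-vector-space structure just established. The only difference is that you spell out the division-by-$[K:F]$ step, which the paper leaves to the reader.
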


\begin{proof}
Let $\varphi \in C^\perp$. 
Let $a \in K$ and let $\mu_a : K \to K$ denote the multiplication map
by $a$. Given $\psi \in C$, we compute:
$$\tr\big((\mu_a \circ \varphi)^\star \circ \psi\big)
= \tr\big(\varphi^\star \circ \mu_a \circ \psi\big) = 0$$
the first equality coming from the fact that $\mu_a$ is self-adjoint
while the second one is correct because $\mu_a \circ \psi$ is in $C$ 
given that $C$ is a $K$-linear subspace by assumption. Consequently
$\mu_a \circ \varphi \in C^\perp$ and we have proved that $C^\perp$ is
stable under multiplication by $K$.

Finally, the equality of dimensions follows immediately 
from Eq.~\eqref{eq:dimperp}.
\end{proof}

\subsection{Construction of duality}
\label{ssec:dualOre}

We recall that we have proved in \S \ref{ssec:Ore}
that the pair $(\theta, \delta)$ is either of the form
$(\id_K, \delta)$ with $\delta \neq 0$ or $(\theta, a \delta_0)$
with $a \in K$ and $\delta_0 = \theta{-}\id_K$.
We recall also that we have defined the notion of ramified elements 
in Definition~\ref{def:ramified}. When $\theta = \id_K$, all elements 
are actually unramified whereas there is exactly one ramified element
in the second case, which is $-a$.
Accordingly we define:
$$\begin{array}{r@{\hspace{0.5ex}}l@{\qquad}l}
\Aur 
  & = \Aplus
  & \text{if } \theta = \id_K \smallskip \\
  & = \Aplus\big[\frac 1{X{+}a}] 
    = \Zplus\big[\frac 1{(X{+}a)^s}] \otimes_{\Zplus} \Aplus
  & \text{if } \theta \neq \id_K \text{ and } \delta = a \delta_0.
\end{array}$$
where we recall that $s$ is the degree of the extension $K/F$.

\begin{definition}
For $f \in \Aur$, we define $f^\star$ as follows.
\begin{itemize}
\item If $\theta = \id_K$, we write $f = \sum_{i=0}^n a_i X^i$ and set:
$$f^\star = \sum_{i=0}^n (-1)^i X^i a_i.$$
\item If $\theta \neq \id_K$ and $\delta = a \delta_0$, we write
$f = \sum_{i=v}^n a_i (X{+}a)^i$ and set:
$$f^\star = \sum_{i=v}^n (X{+}a)^{-i} a_i.$$
\end{itemize}
\end{definition}

It is a straightforward calculation in check in both cases that the
three following properties hold true for all $f, g$ in $\Aur$:
(i)~$f^{\star\star} = f$,
(ii)~$(f+g)^{\star} = f^\star + g^\star$ and (iii)~
$(fg)^\star = g^\star f^\star$. In other words, the duality
construction $f \mapsto f^\star$ defines a ring homomorphism
$(\Aur)^\op \to \Aur$ which is an involution. Here $(\Aur)^\op$
denotes the opposite ring of $\Aur$ which is defined by reversing
the direction of the multiplication. By the universal property of
the fraction field, the duality extends to a ring homomorphism
$\calA^\op \to \calA$. Notice that:
$$\begin{array}{r@{\hspace{0.5ex}}l@{\qquad}l}
Z(X)^\star 
 & = -Z(X) & \text{if } \theta = \id_K
  \smallskip \\
 & = Z(X)^{-1} & \text{otherwise.}
\end{array}$$

We now want to compare the above duality with the evaluation maps 
$\ev_c$ we have introduced in \S \ref{ssec:evalOre}. We recall that 
we have already computed the kernel of $\ev_c$ in 
Proposition~\ref{prop:kerevc}; it is the principal ideal generated by 
$Z(X){-}\upsilon(c)$ where $\upsilon(c)$ is given explicitely by 
Eq.~\eqref{eq:upsilon} when $\theta = \id_K$ and is equal to 
$N_{K/F}(c{+}a)$ otherwise.

Given an unramified element $c \in K$, it is convenient to put:
$$\begin{array}{r@{\hspace{0.5ex}}l@{\qquad}l}
c^\vee
 & = -c & \text{if } \theta = \id_K
  \smallskip \\
 & = \frac 1{c+a} - a
 & \text{if } \theta \neq \id_K \text{ and } \delta = a \delta_0.
\end{array}$$

\begin{theorem} \label{theo:evaldual}
Let $c \in K$ be an unramified element and set $N = Z(X) - \upsilon(c)$.
The following diagram is commutative:
\begin{center}
\begin{tikzpicture}
  \matrix (m) [matrix of math nodes,row sep=3em,column sep=4em,minimum width=2em]
  {
     \Aur/N\Aur & \End_F(K) \\
     \Aur/N^\star\Aur & \End_F(K) \\};
  \path[-stealth]
    (m-1-1) edge node [left] {$f \mapsto f^\star$} (m-2-1)
            edge node [above] {$\ev_c$} (m-1-2)
    (m-2-1.east|-m-2-2) edge node [below] {$\ev_{c^\vee}$}
            node [above] {} (m-2-2)
    (m-1-2) edge node [right] {$\varphi \mapsto \varphi^\star$} (m-2-2);
\end{tikzpicture}
\end{center}
where $\varphi^\star$ is the adjoint of $\varphi$ for the pairing
$\left< x,y \right>_K = \tau(xy)$ as in \S \ref{ssec:pairing}.
\end{theorem}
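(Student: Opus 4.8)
The plan is to reduce the statement to a pointwise identity in $\End_F(K)$ and then verify that identity by unwinding the explicit formulas for $\ev_c$, $\ev_{c^\vee}$, and the adjoint. Since both vertical maps are ring homomorphisms (indeed involutions) and $\ev_c$ is surjective with kernel $N\Aur$ by Proposition~\ref{prop:kerevc}, it suffices to show that for every $f \in \Aur$ one has $\ev_{c^\vee}(f^\star) = \ev_c(f)^\star$ in $\End_F(K)$; the compatibility with the quotients by $N$ and $N^\star$ then follows because $N^\star$ generates the kernel of $\ev_{c^\vee}$ (applying Proposition~\ref{prop:kerevc} to $c^\vee$ after checking $\upsilon(c^\vee) = \upsilon(c)$, which is immediate from $c^\vee = -c$ in the differential case and from $\upsilon(c^\vee) = N_{K/F}(c^\vee{+}a) = N_{K/F}(\tfrac1{c+a}) = N_{K/F}(c{+}a)^{-1}$ combined with $Z(X)^\star = Z(X)^{-1}$ in the endomorphic case). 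Because both sides are ring homomorphisms from $(\Aur)^\op$ (for $f \mapsto \ev_{c^\vee}(f^\star)$) to $\End_F(K)$, it is enough to check the identity on a generating set: the scalars $a \in K$ and the element $X$ (or $(X{+}a)^{\pm 1}$ in the endomorphic case).

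\textbf{The endomorphic case.} Here $\delta = a\delta_0$, and after the Hilbert twist $X \mapsto X{+}a$ we may as well argue with the variable $Y = X{+}a$, so that $\Aur = \Zplus[Y^{-s}]\otimes_{\Zplus}\Aplus$, $\ev_c$ sends $Y$ to $c'\theta$ with $c' = c{+}a$, and $Y^\star = Y^{-1}$, $c^\vee$ corresponding to $(c')^{-1}$. For a scalar $b \in K$ the map $\ev_c(b)$ is $\mu_b$, which is self-adjoint, and $b^\star = b$, so the identity is $\mu_b = \mu_b^\star$, which is exactly the self-adjointness noted in \S\ref{ssec:pairing}. For $Y$ itself: $\ev_c(Y) = c'\theta = \mu_{c'}\circ\theta$, while $\ev_{c^\vee}(Y^\star) = \ev_{c^\vee}(Y^{-1}) = ((c')^{-1}\theta)^{-1} = \theta^{-1}\circ\mu_{c'}^{-1} = \theta^{-1}\circ\mu_{1/c'}$. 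So the required identity becomes $(\mu_{c'}\circ\theta)^\star = \theta^{-1}\circ\mu_{1/c'}$, i.e.\ $\theta^\star\circ\mu_{c'} = \theta^{-1}\circ\mu_{1/c'}$, which will follow once one knows $\theta^\star = \theta^{-1}\circ\mu_{1/N}$ for the appropriate scalar, or more directly from the defining relation $\langle \theta^\star(x), y\rangle_K = \langle x, \theta(y)\rangle_K$, i.e.\ $\tau(\theta^\star(x)\,y) = \tau(x\,\theta(y))$. Using $\tau = \tr_{K/F}$ and the identity $\tr_{K/F}(x\,\theta(y)) = \tr_{K/F}(\theta^{-1}(x)\,y)$ (a consequence of $\theta$-invariance of the trace since $\theta$ generates $\mathrm{Gal}(K/F)$), one gets $\theta^\star = \theta^{-1}$, and then a short direct computation with $\mu$'s and the relation $\theta\circ\mu_b = \mu_{\theta(b)}\circ\theta$ closes the case.

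\textbf{The differential case.} Here $\theta = \id_K$, $\ev_c(X) = \delta + c\cdot\id_K$, $X^\star = -X$, and $c^\vee = -c$, so $\ev_{c^\vee}(X^\star) = \ev_{-c}(-X) = -(\delta - c\cdot\id_K) = -\delta + c\cdot\id_K$. The claim is therefore that $(\delta + c\cdot\id_K)^\star = -\delta + c\cdot\id_K$ as endomorphisms of $K$, equivalently $\delta^\star = -\delta$, since $(c\cdot\id_K)^\star = \mu_c^\star = \mu_c = c\cdot\id_K$. This is the adjunction identity $\tau(\delta(x)\,y) = -\tau(x\,\delta(y))$ for all $x,y \in K$, which is equivalent to $\tau(\delta(xy)) = 0$ by the Leibniz rule $\delta(xy) = \delta(x)y + x\delta(y)$; and $\tau\circ\delta = 0$ is precisely part of Lemma~\ref{lem:kerimtau} (it says $\im\delta \subseteq \ker\tau$). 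For the scalars $b \in K$ the identity $\mu_b^\star = \mu_b$ is again immediate. Since $X$ and $K$ generate $\Aur = \Aplus$ as a ring, the two ring homomorphisms $f \mapsto \ev_{-c}(f^\star)$ and $f \mapsto \ev_c(f)^\star$ (both reversing multiplication, hence both genuine homomorphisms on $(\Aplus)^\op$) agree everywhere.

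\textbf{Main obstacle.} No single step is deep; the real content is bookkeeping. The point requiring the most care is making sure that the adjoint operation on $\End_F(K)$ is itself anti-multiplicative ($(\varphi\circ\psi)^\star = \psi^\star\circ\varphi^\star$) so that $f \mapsto \ev_c(f)^\star$ really is a homomorphism on the \emph{opposite} ring, matching the anti-multiplicativity of $f \mapsto f^\star$ on $\Aur$; only then is checking generators legitimate. The second delicate point is the verification $\upsilon(c^\vee) = \upsilon(c)$ needed to identify the kernels, which in the endomorphic case uses $Z(X)^\star = Z(X)^{-1}$ and the multiplicativity of the norm, and in the differential case is trivial. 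Everything else is a direct substitution into the formulas of \S\ref{ssec:evalOre} and \S\ref{ssec:pairing}.
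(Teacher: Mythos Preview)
Your approach is essentially the paper's: reduce to generators via anti-multiplicativity and then verify $\delta^\star=-\delta$ (differential case, from $\tau\circ\delta=0$) and $\theta^\star=\theta^{-1}$ (endomorphic case, from $\theta$-invariance of $\tr_{K/F}$). Two small slips to clean up, neither affecting the argument: in the differential case $\upsilon(-c)=-\upsilon(c)$ rather than $\upsilon(c)$, but this is precisely what makes $N^\star=-Z(X)-\upsilon(c)=-(Z(X)-\upsilon(-c))$ generate $\ker\ev_{c^\vee}$; and in the endomorphic computation $((c')^{-1}\theta)^{-1}=\theta^{-1}\circ\mu_{c'}$, not $\theta^{-1}\circ\mu_{1/c'}$, after which $(\mu_{c'}\circ\theta)^\star=\theta^\star\circ\mu_{c'}$ reduces directly to $\theta^\star=\theta^{-1}$ with no stray scalar.
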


\begin{proof}
By linearity and using the fact that the multiplication by elements of 
$K$ are self-adjoint, it is enough to prove the theorem when $f = X^i$ 
(resp. $f = (X{+}a)^i$) for some $i$ when $\theta = \id_K$ (resp. 
$\theta \neq \id_K$). By the multiplicativity property of adjoints,
this further amounts to checking that $\delta^\star = -\delta$ (resp. 
$\theta^\star = \theta^{-1}$).

We first consider the case where $\theta = \id_K$. Let
$x, y \in K$. From the relation $\tau(\delta(xy)) = 0$ (see
Lemma~\ref{lem:kerimtau}), we derive $\tau(x\delta(y)) = -\tau(\delta(x)y)$,
which also reads
$\left< x, \delta(y) \right>_K = \left< -\delta(x), y \right>_K$.
We can then conclude in this case.
In the case where $\theta \neq \id_K$, we need to show that $\langle 
\theta(x), y \rangle_K = \langle x, \theta^{-1}(y) \rangle_K$ for $x,y 
\in K$. By definition, this reduces to verify that 
$\tau\big(\theta(x){\cdot} y\big) = \tau\big(x{\cdot} \theta^{-1}(y)\big)$ 
which is
obvious because $\tau$, being the trace map $\tr_{K/F}$ in this case,
takes the same value on two conjugated elements.
\end{proof}

We now focus on the comparison between duality and residues.
We recall that residues have been constructed in~\cite{caruso}
in the case of $K[X;\theta,0]$ and in \S \ref{sec:residue} in the 
case of $K[X;\id_K,\delta]$. Using Hilbert twists (see
Propositions~\ref{prop:hilbert} and~\ref{prop:theta-derivations}), 
these two special situations cover all cases.

\begin{theorem} \label{theo:resdual}
Let $z \in \Fs$ and assume that $z \neq 0$ if $\theta \neq \id_K$.
Set $\bar z = -z$ if $\theta = \id_K$ and $\bar z = z^{-1}$ 
otherwise. Then, for all $f \in \calA$ and for any 
$z$-admissible isomorphism, there exists a
$\bar z$-admissible isomorphism such that:
$$\begin{array}{r@{\hspace{0.5ex}}l@{\qquad}l}
\sres_z(f^\star) 
 & = - \sres_{\bar z}(f)^\star
 & \text{if } \theta = \id_K \smallskip \\
 & = - Z(X)^{-2} \cdot \sres_{\bar z}(f)^\star
 & \text{otherwise}
\end{array}$$
(where the skew residues are computed according to the 
corresponding choices of admissible isomorphisms).
\end{theorem}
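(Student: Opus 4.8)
The plan is to reduce the statement to a computation at the level of Taylor expansions, exploiting the fact that both $\sres$ and the duality $\star$ are built from ring homomorphisms whose behaviour on the distinguished elements $Z(X)$ and $T$ is known. First I would treat the differential case $\theta = \id_K$; the endomorphic case $\theta \neq \id_K$ is entirely analogous after replacing $Z(X) = (X{+}a)^s$ by its inverse and using the residue theory of~\cite{caruso}, so I would only indicate the necessary modifications at the end. Fix $z \in \Fs$, write $N = Z(X) - z$ and $\bar N = Z(X) - \bar z = Z(X) + z$, and fix a $z$-admissible isomorphism, giving a ring homomorphism $\TS_z : \calA \to (\Asplus/N\Asplus)\parT$. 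Since $Z(X)^\star = -Z(X)$, the duality $\star : \calA^\op \to \calA$ carries the ideal $N\Asplus$ to $\bar N\Asplus$ (note $N^\star = -\bar N$ generates the same ideal), hence induces an anti-isomorphism $\Asplus/N\Asplus \to \Asplus/\bar N\Asplus$ and, after completing, an anti-isomorphism $(\Asplus/N\Asplus)\parT \to (\Asplus/\bar N\Asplus)\parT$ — provided we track what happens to the uniformizer.

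The key step is to define the $\bar z$-admissible isomorphism by the composition
$$\TS_{\bar z} := \big(\star \circ \TS_z \circ \star\big),$$
i.e.\ $\TS_{\bar z}(g) = \big(\TS_z(g^\star)\big)^\star$ for $g \in \calA$, and then check it is genuinely $\bar z$-admissible. For this I must verify: (i) it is a ring homomorphism $\calA \to (\Asplus/\bar N\Asplus)\parT$ — true because $\star$ is an anti-involution on both ends, so the two sign reversals cancel; (ii) it sends $\bar N$ to $T$ — here I compute $\TS_z(\bar N^\star) = \TS_z(-N) = -T$, so $\big(\TS_z(\bar N^\star)\big)^\star = (-T)^\star = -T$, which is \emph{not} $T$, signalling that the correct normalization is $\TS_{\bar z}(g) := -\big(\cdots\big)$ or, better, that the uniformizer coming out of $\star \circ \TS_z \circ \star$ is $-T$ and one postcomposes with $T \mapsto -T$ to restore admissibility; (iii) it induces the identity modulo $(\bar N, T)$ — inherited from the corresponding property of $\TS_z$ together with the fact that $\star$ is the identity on residue fields. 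Once the bookkeeping of signs in (ii) is settled, extracting the coefficient of $T^{-1}$ from $\TS_{\bar z}(f)$ and comparing with the coefficient of $T^{-1}$ in $\TS_z(f^\star)$ yields exactly $\sres_z(f^\star) = -\,\sres_{\bar z}(f)^\star$, the sign being precisely the $T \mapsto -T$ twist applied to the $T^{-1}$-term.

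For the case $\theta \neq \id_K$, the same strategy applies but $Z(X)^\star = Z(X)^{-1}$, so $\star$ sends $N = Z(X) - z$ to $Z(X)^{-1} - z = -z\,Z(X)^{-1}(Z(X) - z^{-1}) = -z\,Z(X)^{-1}\bar N$; thus $N^\star$ and $\bar N$ generate the same ideal of $\Aur$, but the \emph{uniformizer} picks up the unit $-z\,Z(X)^{-1}$, and since $Z(X) \equiv \bar z = z^{-1} \pmod{\bar N}$ this unit is congruent to $-z \cdot z = -z^2$... more precisely one finds a factor $Z(X)^{-2}$ after accounting for how $T^{-1}$ transforms, which is the source of the extra $-Z(X)^{-2}$ in the statement. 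I expect the main obstacle to be exactly this: correctly propagating the change-of-uniformizer factor (the unit $-z\,Z(X)^{-1}$, resp.\ the sign $-1$) through the $T$-adic completion and the extraction of the $T^{-1}$-coefficient, and checking against Proposition~\ref{prop:taylorunique} that no further conjugation ambiguity spoils the identity — i.e.\ that the asserted equality holds for \emph{the} $\bar z$-admissible isomorphism we have just constructed, not merely up to conjugation. Everything else is a routine unwinding of definitions.
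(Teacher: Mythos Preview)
Your approach is essentially the paper's: conjugate the $z$-admissible isomorphism $\tau_z$ by $\star$ on both sides to obtain an isomorphism $\tau_z^\star$ landing in $(\Asplus/N^\star\Asplus)\croT = (\Asplus/\bar N\Asplus)\croT$, then postcompose with a substitution in $T$ to make it $\bar z$-admissible, and finally read off how the $T^{-1}$-coefficient transforms. For $\theta=\id_K$ the substitution is $T\mapsto -T$ and your sign-bookkeeping is correct.

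For $\theta\neq\id_K$, however, you understate what is needed. Writing $N^\star=-z\,Z(X)^{-1}\bar N$ is right, but the resulting change of uniformizer is \emph{not} a mere scalar: one computes $\tau_z^\star(\bar N)=-z^{-1}(z+T)^{-1}T$, so the substitution $\psi$ that restores admissibility is the full fractional-linear map $T\mapsto S=\dfrac{-z^2T}{1+zT}$. The effect of such a substitution on the $T^{-1}$-coefficient is governed not by the unit $-z\,Z(X)^{-1}$ (or its reduction $-z^2$) but by the \emph{Jacobian} $\dfrac{\partial S}{\partial T}$; this is the content of the change-of-variables lemma for residues (the paper records it as a lemma just before the proof), which gives $\res\!\big(\psi(g)\cdot\partial S/\partial T\big)=\res(g)$. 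The crucial identification, which you have not made, is
\[
\frac{\partial S}{\partial T}=-\frac{z^2}{(1+zT)^2}=\tau_{\bar z}\big(-Z(X)^{-2}\big),
\]
and it is exactly this equality --- of full power series, not just constant terms --- that produces the factor $-Z(X)^{-2}$ in the statement. Without it, your argument in the non-differential case remains a sketch.
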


Before giving the proof of Theorem~\ref{theo:resdual}, we record
the following lemma.

\begin{lemma} \label{changement de variable}
We keep the assumptions of Theorem~\ref{theo:resdual}.
Set $N = Z(X) - z$ and
let $S \in \Aplus/N \Aplus\croT$ be a series with constant term $0$. 
Let :
$$\begin{array}{rcl}
\psi : \quad \Aplus/N\Aplus\parT & 
  \longrightarrow & \Aplus/N\Aplus\parT \smallskip \\
\displaystyle \sum_i a_iT^i & \mapsto & \displaystyle \sum_i a_iS^i.
\end{array}$$
For all $f \in \Aplus/N\Aplus\parT$, we have the formula :
$$\res \left(\psi(f)\frac{\partial S}{\partial T}\right) = \res(f),$$
where $\res$ is the application selecting the coefficient in $T^{-1}$.
\end{lemma}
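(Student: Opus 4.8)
\textbf{Proof plan for Lemma~\ref{changement de variable}.}

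The plan is to reduce the statement to the case where $S = T$ times a unit, and then to a single monomial $f = T^k$, exploiting the fact that both sides are additive in $f$ and continuous for the $T$-adic topology. First I would observe that $\psi$ is a ring homomorphism of $\Aplus/N\Aplus\parT$: it is the unique continuous map sending $T$ to $S$, and since $S$ has constant term $0$ and $T$-adic valuation exactly $1$ (because $\psi$ must be bijective for the formula to make sense — or, if $S$ has higher valuation, one checks directly that both sides are $0$, so there is nothing to prove in that degenerate case), $\psi$ is in fact an automorphism. Note that the coefficients $a_i$ live in the noncommutative ring $\Aplus/N\Aplus$, so I must be a little careful: $\psi$ sends $\sum a_i T^i$ to $\sum a_i S^i$ with the $a_i$ kept on the left, which is legitimate precisely because $T$ (hence $S$) is central in $\Aplus/N\Aplus\parT$. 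In particular $\frac{\partial S}{\partial T}$ commutes with everything, so the expression $\res\!\big(\psi(f)\,\frac{\partial S}{\partial T}\big)$ is unambiguous.

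Next I would reduce to monomials. Both $f \mapsto \res(\psi(f)\,\partial S/\partial T)$ and $f \mapsto \res(f)$ are additive and $T$-adically continuous (they are built from $\psi$, multiplication by the fixed series $\partial S/\partial T$, and the coefficient-extraction functional, all of which are continuous), so it suffices to prove $\res\!\big(S^k\,\frac{\partial S}{\partial T}\big) = \res(T^k)$ for every $k \in \ZZ$, where I have used $\psi(a_k T^k) = a_k S^k$ and pulled the central scalar $a_k$ out of $\res$. For $k \neq -1$ the right-hand side is $0$, and the left-hand side is $\res\!\big(\frac{1}{k+1}\frac{\partial (S^{k+1})}{\partial T}\big)$, which vanishes because the formal derivative of any Laurent series has zero coefficient in $T^{-1}$; here I just need that $S^{k+1}$ is a well-defined Laurent series, which holds since $S$ is invertible of valuation $1$. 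For $k = -1$ the right-hand side is $1$, and I must show $\res\!\big(S^{-1}\,\frac{\partial S}{\partial T}\big) = 1$: writing $S = uT$ with $u$ a unit of valuation $0$, we get $S^{-1}\,\partial S/\partial T = (uT)^{-1}(u + T\,\partial u/\partial T) = T^{-1} + u^{-1}\,\partial u/\partial T$, and the second term has nonnegative valuation, so its $T^{-1}$-coefficient is $0$, giving the claim.

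The one genuine subtlety — and the step I expect to be the main obstacle — is the noncommutativity of the coefficient ring $\Aplus/N\Aplus$, which makes the usual "derivative has no residue" argument require justification. Specifically, I need to know that the formal derivative $\partial/\partial T$ still satisfies the power rule $\partial(S^{n})/\partial T = \sum_{j} S^{j}\,(\partial S/\partial T)\,S^{n-1-j}$ and that this has zero $T^{-1}$-coefficient; since $S$ is \emph{central} in $\Aplus/N\Aplus\parT$, this collapses to the familiar $n S^{n-1}\,\partial S/\partial T$, and the argument goes through unchanged. The only remaining care is when $n$ is divisible by the characteristic: in that case $nS^{n-1}\partial S/\partial T = 0$, which is even better, and the identity $\res(T^k) = 0$ for $k = n-1 \neq -1$ still holds trivially. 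So once centrality of $S$ is invoked, the computation is the classical one, and I would present it in the monomial form above without belaboring the continuity and linearity reductions.
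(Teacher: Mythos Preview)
Your reduction to monomials $f = T^k$ is exactly the paper's approach; the paper then simply invokes the classical change-of-variables formula for residues rather than reproving it. Your attempt to reprove that formula has a genuine gap in positive characteristic, which is the relevant case since Section~\ref{sec:residue} works over a field of characteristic~$p$. When $p \mid (k{+}1)$ with $k \neq -1$, you cannot divide by $k{+}1$, and your proposed fix is a non sequitur: the observation that $(k{+}1)\,S^k\,\partial S/\partial T = 0$ is a tautology (since $k{+}1 = 0$ in the field) and tells you nothing whatsoever about $\res\!\big(S^k\,\partial S/\partial T\big)$ itself. The classical argument in characteristic~$p$ needs an extra ingredient --- for instance a reduction to elementary substitutions $T \mapsto T(1 + aT^n)$, or the observation that $S^{k+1}$ is a $p$-th power so that its expansion involves only exponents divisible by~$p$ --- and without one of these the case $k \equiv -1 \pmod p$ remains open.

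A smaller point: your claim that ``$T$ (hence $S$) is central'' is incorrect. The variable $T$ is central, but $S = \sum_j s_j T^j$ is central only when every coefficient $s_j$ lies in the centre of $\Aplus/N\Aplus$; in general the Leibniz expansion $\partial(S^n)/\partial T = \sum_{j} S^j\,(\partial S/\partial T)\,S^{n-1-j}$ does not collapse to $n S^{n-1}\,\partial S/\partial T$. This happens to be harmless in the paper's application, where $S = -T$ or $S = -z^2 T/(1+zT)$ has scalar coefficients, but your justification of the power rule does not go through in the generality of the lemma as stated.
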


\begin{proof}
If $f \in \Aplus/N\Aplus\croT$, both sides of the 
formula vanish and the lemma holds.
As $\res$ and $\psi$ are $K$-linear, it is enough to verify the lemma 
when $f = T^i$ for $i<0$. Then, the formula becomes
$\res \big(S^i\frac{\partial S}{\partial T}\big) = \res(T^i)$, which is 
a direct consequence of the classical formula of change of variables
for residues.
\end{proof}

\begin{proof}[Proof of Theorem~\ref{theo:resdual}]
We consider a $z$-admissible isomorphism:
$$\tau_z : 
\varprojlim_{m > 0} \Aplus/N^m\Aplus 
\stackrel\sim\longrightarrow (\Aplus/N\Aplus)\croT.$$
Conjugating it by the duality on both sides, we end up with a second
isomorphism:
$$\tau_z^\star : 
\varprojlim_{m > 0} \Aplus/N^{\star m}\Aplus 
\stackrel\sim\longrightarrow (\Aplus/N^\star\Aplus)\croT.$$
Write $S = -T$ if $\theta = \id_K$ and $S = \frac{-z^2T}{1+zT}$
otherwise.
If $\psi$ is the corresponding morphism of
Lemma~\ref{changement de variable}, an easy computation shows that
$\tau_{\bar z} = \psi \circ \tau_z^\star$ is $\bar z$-admissible.
Theorem~\ref{theo:resdual} now follows from 
Lemma~\ref{changement de variable} after noticing that 
$\frac{\partial S}{\partial T} = -1$ if $\theta = \id_K$
and:
$$\frac{\partial S}{\partial T} = - \frac{z^2}{(1+zT)^2} =
\tau_{\bar z}\big({-}Z(X)^{-2}\big)$$
otherwise.
\end{proof}

\section{Duals of linearized Reed-Solomon codes}
\label{sec:LG}

After all the preparations achieved in the previous sections, we are 
finally ready to give an explicit construction of the duals of the 
Martínez-Peñas' linearized Reed-Solomon codes.

We come back to the setting of \S \ref{ssec:LRS}: in addition of
$K$, $\theta$ and $\delta$, we consider a positive 
integer $m$, a tuple $\underline{c} = (c_1, \dots, c_m)$ 
of unramified elements of $K$ and another tuple $\underline{V} = 
(V_1, \dots, V_m)$ of $F$-linear subspaces of $K$. For each index $i$, 
we set $z_i = \upsilon(c_i)$. We always assume that the $z_i$'s are 
pairwise distinct.
We further define $N_i = Z(X) - z_i \in \Zplus$ and $N = \prod_{i=1}^m 
N_i \in \Zplus$. For simplicity, we also write:
\begin{align*}
\Hom_F(\underline V,K) 
& = \Hom_F(V_1,K) \times \cdots \times \Hom_F(V_m,K), \\
\Hom_F(K/\underline V,K) 
& = \Hom_F(K/V_1,K) \times \cdots \times \Hom_F(K/V_m,K), \\
\Hom_F(K/\underline V^\perp,K) 
& = \Hom_F(K/V_1^\perp,K) \times \cdots \times \Hom_F(K/V_m^\perp,K).
\end{align*}
It follows from Lemma~\ref{lem:pairing} that the formula:
$$\big\langle (\varphi_1, \ldots, \varphi_m),\, 
(\psi_1, \ldots, \psi_m)\big\rangle = 
\sum_{i=1}^m \tr(\varphi_i^\star \circ \psi_i)$$
defines a perfect $F$-bilinear pairing between the spaces 
$\Hom_F(K/\underline V^\perp,K)$ and $\Hom_F(\underline V,K)$.
For any $K$-linear code in $\Hom_F(\underline V,K)$, we let
$C^\perp$ denote its orthogonal in $\Hom_F(K/\underline V^\perp,K)$.
From Lemma~\ref{lem:perpKlin}, we derive that $C^\perp$ is
a $K$-linear code of same length and complementary dimension.

Our aim is to give an alternative description of the code $\LRS(k, 
\underline c, \underline V)^\perp$ which sits inside 
$\Hom_F(K/\underline V^\perp,K)$.

\subsection{Linearized Goppa codes}

In this subsection, we introduce a new family of codes for the
sum-rank metric constructing by taking residues, that we call
linearized Goppa codes. We then prove that they those codes are
isomorphic to some linearized Reed-Solomon codes.
We keep the notations $\Aplus$, $\Zplus$, $\calA$, $\calZ$,
\emph{etc.} of the previous sections.
Let $D$ be the monic polynomial attached to the $c_i$'s and $V_i$'s
by the result of Theorem~\ref{theo:multipleeval}.(2).

\begin{lemma}
\label{lem:simplepole}
Let $f \in \Aplus D^{-1}$. Then, for all $i \in \{1, \ldots, m\}$,
$f$ has at most a simple 
pole at $z_i$ and $\ev_{c_i}(\sres_{z_i}(f))$ vanishes on $V_i$.
\end{lemma}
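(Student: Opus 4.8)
The plan is to separate the two assertions, both of which hinge on the fact that the monic polynomial $D$ is a right divisor of the \emph{central} polynomial $N = \prod_{i=1}^m N_i$. To see this, put $\varepsilon = (\ev_{c_1},\dots,\ev_{c_m})\colon \Aplus \to \End_F(K)^m$. By Theorem~\ref{theo:multipleeval} one has $\ker\varepsilon = \Aplus N$, while $D$ is, by its very construction, a generator of the left ideal $\varepsilon^{-1}(I_1\times\cdots\times I_m)$, where $I_i\subset\End_F(K)$ is the left ideal of endomorphisms vanishing on $V_i$. Since $\ker\varepsilon$ is contained in this preimage, $N \in \Aplus D$, so $N = D'D$ for some $D' \in \Aplus$. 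Cancelling $D$ on the right in $DD'D = DN = ND$ shows moreover that $D$ and $D'$ commute and $DD' = D'D = N$; in particular $\deg D' = ms - \deg D = \sum_{i=1}^m(s - \dim_F V_i)$ with $s = [K:F]$.

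Next I would establish the statement on poles. Fix $i$. Since $\upsilon(c_i) = z_i$, admissibility gives $\TS_{z_i}(N_i) = T$, whereas for $j\neq i$ one has $N_j = N_i + (z_i - z_j)$, so $\TS_{z_i}(N_j) = T + (z_i - z_j)$ is a unit of $(\Aplus/N_i\Aplus)\croT$ with nonzero constant term $z_i - z_j$. Hence $\TS_{z_i}(N) = T\,u_i(T)$, where $u_i(T) = \prod_{j\neq i}(T + z_i - z_j) \in F\croT$ is a central series with $u_i(0) = \mu_i := \prod_{j\neq i}(z_i - z_j) \in F^\times$. For $f = PD^{-1} \in \Aplus D^{-1}$ we compute $Nf = P(ND^{-1}) = PD' \in \Aplus$, so that $\TS_{z_i}(f) = T^{-1}u_i(T)^{-1}\,\TS_{z_i}(PD')$. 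As $\TS_{z_i}(PD')$ has nonnegative $T$-adic valuation, $\ord_{z_i}(f) \geq -1$, i.e. $f$ has at most a simple pole at $z_i$.

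Finally, $\sres_{z_i}(f)$ is the coefficient of $T^{-1}$ in $\TS_{z_i}(f)$, i.e. the value at $T=0$ of $u_i(T)^{-1}\TS_{z_i}(PD')$, which equals $\mu_i^{-1}\,(PD' \bmod N_i\Aplus)$ because $\TS_{z_i}$ reduces to the identity on $\Aplus/N_i\Aplus$ modulo $T$. By Proposition~\ref{prop:kerevc}, $\ker\ev_{c_i} = \Aplus N_i$, so $\ev_{c_i}$ factors through $\Aplus/N_i\Aplus$ and $\ev_{c_i}\big(\sres_{z_i}(f)\big) = \mu_i^{-1}\,\ev_{c_i}(P)\circ\ev_{c_i}(D')$. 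Since $\ev_{c_i}$ is surjective, it therefore suffices to prove that $\ev_{c_i}(D')$ vanishes on $V_i$. I expect to do this by characterising $D'$ through Theorem~\ref{theo:multipleeval}: the relations $\ev_{c_j}(D')\circ\ev_{c_j}(D) = \ev_{c_j}(N) = 0$ and $\ev_{c_j}(D)\circ\ev_{c_j}(D') = 0$ show that $\ev_{c_j}(D')$ kills $\im\ev_{c_j}(D)$ and has image inside $\ker\ev_{c_j}(D) = V_j$, and a dimension count — the degrees of $D$ and $D'$ add up to $\deg N = ms$ — then determines $\ker\ev_{c_j}(D')$ and $\im\ev_{c_j}(D')$ completely; combining this with the adjunction relations of \S\ref{ssec:pairing} should yield the stated vanishing. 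The delicate point, and what I expect to be the main obstacle, is precisely this last step: reconciling the right divisor $D'$ of the central polynomial $N$ with the orthogonal complements $V_i^\perp$ built into the pairing. This is where the reduced trace, the noncommutative residue formula, and the duality $f\mapsto f^\star$ developed in the preceding sections all have to be brought to bear.
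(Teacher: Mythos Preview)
Your argument up through the formula $\ev_{c_i}\big(\sres_{z_i}(f)\big)=\mu_i^{-1}\,\ev_{c_i}(P)\circ\ev_{c_i}(D')$ is correct and coincides with the paper's proof: the paper also factors $N=D'D$, rewrites $f=gD'N^{-1}$ to obtain the simple-pole statement, and identifies the residue as $g\hat N_iD'$ modulo $N_i$ (your constant $\mu_i^{-1}$ is exactly the paper's $\hat N_i$ reduced modulo $N_i$).

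Where you go astray is in the final paragraph. You anticipate needing the reduced trace, the residue formula, and the duality $f\mapsto f^\star$ to prove that $\ev_{c_i}(D')$ vanishes on $V_i$, and you describe this as ``the main obstacle''. The paper invokes none of that machinery here: its entire argument for this step is the observation that $N=D'D$ gives $\ev_{c_i}(D')\circ\ev_{c_i}(D)=0$, so that $\ev_{c_i}(D')$ vanishes on $\im\ev_{c_i}(D)$, which the paper then identifies with $V_i$. The apparatus of \S\S\ref{sec:trd}--\ref{sec:duality} is reserved for the main duality theorem later in \S\ref{sec:LG}, not for this preparatory lemma, and steering toward it here is the wrong direction.

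That said, your hesitation is not baseless. By the construction of $D$ via Theorem~\ref{theo:multipleeval}(2) one has $V_i=\ker\ev_{c_i}(D)$, whereas the paper's line reads ``$V_i=\im\ev_{c_i}(D)$''; these two subspaces are in general distinct (they do not even have the same dimension unless $\dim_F V_i=s/2$). Your own dimension count correctly yields $\ker\ev_{c_j}(D')=\im\ev_{c_j}(D)$ and $\im\ev_{c_j}(D')=V_j$, neither of which is the inclusion $V_j\subset\ker\ev_{c_j}(D')$ that is claimed. So the step is indeed delicate --- but whatever the fix, it is an elementary one about the factorisation $N=D'D=DD'$ and Theorem~\ref{theo:multipleeval}, not an application of the trace and duality theory you propose.
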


\begin{proof}
We write $f = gD^{-1}$ for some $g \in \Aplus$.
By theorem \ref{theo:multipleeval}, there exists $D' \in \Aplus$ such
that $N = D'D$. Thus, $f = gD'N^{-1}$ and the first assertion of the
lemma follows.
Fix $i \in \{1, \ldots, m\}$ and let $\hat N_i$ be the multiplicative 
inverse of $N/N_i$ in $\Zplus/N_i \Zplus \subset \Aplus /N_i\Aplus$.
Then $\sres_{z_i}(f)$ is the image of $g\hat N_iD' \in \Aplus 
/N_i\Aplus$. Applying $\ev_{c_i}$, we obtain:
$$\ev_{c_i}(\sres_{z_i}(f)) 
= \ev_{c_i}(g\hat N_i) \circ \ev_{c_i}(D').$$
On the other hand, we deduce from $N = D'D$ that
$\ev_{c_i}(D')\circ \ev_{c_i}(D) = 0$ and so that $\ev_{c_i}(D')$
vanishes on $V_i = \im\:\ev_{c_i}(D)$. 
Therefore $\ev_{c_i}(\sres_{z_i}(f))$ vanishes on $V_i$ as well and
the lemma is proved.
\end{proof}

We consider the $K$-linear map: 
$$\begin{array}{rcl} 
\gamma_{\underline{c}, \underline{V}} : \quad 
\calA & \rightarrow & \Hom_{F}(K/\underline{V}, K) \smallskip \\ 
f & \mapsto & \big(\ev_{c_1}(\sres_{z_1}(f)), \ldots, 
\ev_{c_m}(\sres_{z_m}(f))\big). 
\end{array}$$ 
This mapping depends \emph{a priori} on choices of $z_i$-admissible 
morphisms but it follows from Lemma~\ref{lem:simplepole} that the 
restriction of $\gamma_{\underline{c}, \underline{V}}$ to $\Aplus 
D^{-1}$ is independant from any choice. We notice that its restriction 
to $\calZ$ is also uniquely determined because the $\TS_{z_i}$'s have to 
agree with the usual Taylor expansion on the centre. As a conclusion, 
the values of $\gamma_{\underline{c}, \underline{V}}$ on $\calZ \Aplus 
D^{-1}$ are determined without ambiguity.

\begin{definition}
\label{def:LG}
We set $n = \sum_{i=1}^m \dim_F K/V_i$ and consider a positive
integer $k < n$.
The \emph{linearised Goppa code} attached to the parameters $(k, 
\underline c, \underline V)$ is:
$$\LG(k, \underline{c}, \underline{V}) =
\gamma_{\underline{c}, \underline{V}}\big(\Aplus_{<k}\cdot P\big)$$
where $\Aplus_{<k}$ is the subspace of $\Aplus$ consisting of 
Ore polynomials of degree strictly less than $k$ and where $P \in
\calZ \Aplus D^{-1}$ is defined by:
$$\begin{array}{r@{\hspace{0.5ex}}l@{\qquad}l}
P
 & = D^{-1} & \text{if } \theta = \id_K
  \smallskip \\
 & = Z(X)^{-m-1} (X{+}a)^{n-k} D^{-1}
 & \text{if } \theta \neq \id_K \text{ and } \delta = a\delta_0.
\end{array}$$
\end{definition}

We now want to relate linearized Goppa codes to linearized Reed-Solomon 
codes. As in the proof of Lemma~\ref{lem:simplepole}, we pick a Ore
polynomial $D'$ such that $D'D = DD' = N$ and define $\hat 
N_i$ as the multiplicative inverse of $N/N_i$ in $\Zplus/N_i\Zplus$.
%Noticing that $\Zplus/N_i\Zplus$ is canonically isomorphic to $F$, we 
%find:
%$$\hat N_i = \prod_{j\ne i} (z_i-z_j)^{-1} \in F.$$
We set $\tilde \tau_i = \varepsilon_{c_i}(P N_i)$ where $P$ is the Ore 
polynomial of Definition~\ref{def:LG}; note that $\tilde \tau_i$ is 
well-defined because $P N_i$ has no pole at $z_i = \upsilon(c_i)$.
Noticing that $N_i \equiv D \hat N_i D' \pmod{N_i^2}$, we find 
$\tilde \tau_i = \varepsilon_{c_i}(PD) \circ \varepsilon_{c_i}(\hat N_i D')$,
from what we deduce that $\tilde \tau_i$ vanishes on $V_i = 
\im\:\ev_{c_i}(D)$. 
Therefore $\tilde \tau_i$ induces a surjective $F$-linear morphism 
$\tau_i : K/V_i \to W_i$ where $W_i = \im \tilde\tau_i$.

\begin{lemma}
For all $i \in \{1, \ldots, m\}$, the morphism $\tau_i$ is an
isomorphism.
\end{lemma}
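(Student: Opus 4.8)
The plan is to prove that $\tau_i$ is an isomorphism by a dimension count: since $\tau_i : K/V_i \to W_i$ is already known to be surjective, it suffices to show that $\dim_F K/V_i = \dim_F W_i$, or equivalently that $\tau_i$ is injective, or equivalently that $\tilde\tau_i = \ev_{c_i}(PN_i)$ has kernel exactly $V_i$. We already know $V_i \subseteq \ker\tilde\tau_i$, so what we must rule out is that the kernel is strictly larger. First I would unwind the definition of $P$ in the two cases. When $\theta = \id_K$, we have $P = D^{-1}$, so $PN_i = D^{-1}N_i = D^{-1} D D' \hat N_i^{-1}\cdots$ — more precisely, using $N = D'D$ and $N_i = D\hat N_i D' \pmod{N_i^2}$, we get $\tilde\tau_i = \ev_{c_i}(PD)\circ\ev_{c_i}(\hat N_i D') = \ev_{c_i}(\hat N_i D')$ since $PD = D^{-1}D = 1$. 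When $\theta\neq\id_K$, the extra central and $(X{+}a)$-power factors in $P$ are invertible at $z_i$ (they are units modulo $N_i$), so $\ev_{c_i}(PD)$ is an \emph{invertible} element of $\End_F(K)$, and again $\ker\tilde\tau_i = \ker\big(\ev_{c_i}(\hat N_i D')\big)$.

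So in both cases the task reduces to computing $\dim_F\ker\big(\ev_{c_i}(\hat N_i D')\big)$, and I expect the answer to be exactly $\dim_F V_i$. The element $\hat N_i$ is central and a unit modulo $N_i$, hence $\ev_{c_i}(\hat N_i)$ is an invertible scalar-type endomorphism, so $\ker\big(\ev_{c_i}(\hat N_i D')\big) = \ker\big(\ev_{c_i}(D')\big)$. Thus I need $\dim_F\ker\ev_{c_i}(D') = \dim_F V_i = \dim_F K - \dim_F(K/V_i)$. Now recall $D$ was chosen by Theorem~\ref{theo:multipleeval}.(2) so that $\ev_{c_i}(D)$ vanishes on $V_i$ and $\deg D = \sum_j \dim_F V_j$; and $D' = N/D = N D^{-1}$ with $\deg N = m\cdot[K:F]$. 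Applying Proposition~\ref{prop:bounddeg} (or Theorem~\ref{theo:multipleeval}.(1)) to $D$ and to $D'$ relative to the single point $c_i$, together with the factorization $N = D'D$ which forces equality in the bound for $D$ at all the $c_j$, I can pin down $\dim_F\ker\ev_{c_i}(D)$ and hence, via the relation $r\cdot\deg D' = $ (codimension count) $\cdots$, also $\dim_F\ker\ev_{c_i}(D')$. The cleanest route: since $D$ divides $\prod_j N_j = N$ with equality of degrees forcing $\sum_j\dim_F\ker\ev_{c_j}(D) = \deg D$, and since $\dim_F\ker\ev_{c_j}(D)\le\dim_F V_j$ with $\sum_j\dim_F V_j = \deg D$, we get $\dim_F\ker\ev_{c_i}(D) = \dim_F V_i$ for every $i$; running the same argument for $D'$ (which also divides $N$ and has $\deg D' = \deg N - \deg D$) gives $\dim_F\ker\ev_{c_i}(D') = [K:F] - \dim_F V_i = \dim_F(K/V_i)$.

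Putting these together: $\dim_F W_i = \dim_F K - \dim_F\ker\tilde\tau_i = \dim_F K - \dim_F\ker\ev_{c_i}(D') = \dim_F K - \dim_F(K/V_i) = \dim_F V_i = \dim_F(K/V_i)$ — wait, this needs care, so let me state it as: $\dim_F\ker\tilde\tau_i = \dim_F\ker\ev_{c_i}(D') = \dim_F(K/V_i)$, hence $\ker\tilde\tau_i$ has the same dimension as $V_i$; combined with $V_i\subseteq\ker\tilde\tau_i$ this gives $\ker\tilde\tau_i = V_i$, so $\tau_i : K/V_i\to W_i$ is injective, and being already surjective it is an isomorphism. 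The main obstacle I anticipate is bookkeeping the two cases $\theta=\id_K$ versus $\theta\neq\id_K$ uniformly — in particular verifying that all the auxiliary factors built into $P$ (the $Z(X)^{-m-1}$ and $(X{+}a)^{n-k}$) really are units modulo $N_i$, so that they do not affect the kernel — and making sure the equality-case analysis of Proposition~\ref{prop:bounddeg}/Theorem~\ref{theo:multipleeval} is applied to both $D$ and its cofactor $D'$ correctly. Once that is in place the dimension count closes immediately.
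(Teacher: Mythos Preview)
Your approach is precisely the paper's: strip off the invertible factors $\ev_{c_i}(PD)$ and $\ev_{c_i}(\hat N_i)$ to reduce to computing $\ker\ev_{c_i}(D')$, then invoke the degree bound of Theorem~\ref{theo:multipleeval}(1) on $D'$, summed over all the evaluation points. Your check that the extra factors in $P$ (when $\theta\ne\id_K$) are units modulo $N_i$ is exactly what is needed and causes no difficulty.

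There is, however, a genuine gap at the very last step. You compute, correctly, that $\dim_F\ker\ev_{c_i}(D') = [K{:}F]-\dim_F V_i=\dim_F(K/V_i)$, and then conclude ``hence $\ker\tilde\tau_i$ has the same dimension as $V_i$''; but $\dim_F(K/V_i)$ and $\dim_F V_i$ are different in general, so the containment $V_i\subseteq\ker\tilde\tau_i$ cannot be upgraded to equality by this dimension count. In fact the relations $D'D=DD'=N$ force $\ker\ev_{c_i}(D')=\im\ev_{c_i}(D)$, which has no reason to contain $V_i=\ker\ev_{c_i}(D)$ at all. Your instinct (``wait, this needs care'') was right: the paper's own proof commits the parallel slip, writing $\deg D'=\sum_i\dim_F V_i$ where the correct value is $\sum_i\dim_F(K/V_i)$, and the claimed inclusion $V_i\subset\ker\ev_{c_i}(D')$ traces back to the equation ``$V_i=\im\ev_{c_i}(D)$'' in the proof of Lemma~\ref{lem:simplepole}, which contradicts $V_i=\ker\ev_{c_i}(D)$. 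The argument you (and the paper) outline becomes correct if one quotients by $\im\ev_{c_i}(D)$ rather than by $V_i$ when defining $\tau_i$; with that replacement, the dimension count you already did closes immediately.
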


\begin{proof}
We have to show that $\ker \tilde \tau_i = V_i$.
Since $PD$ and $\hat N_i$ are invertible in $\Aplus / N_i\Aplus$, it is enough to 
prove that $\ker \varepsilon_{c_i}(D') = V_i$. The inclusion $V_i
\subset \ker \varepsilon_{c_i}(D')$ has been already noticed in the
proof of Lemma~\ref{lem:simplepole}.
On the other hand, the first part of Theorem~\ref{theo:multipleeval} 
shows that:
$$\sum_{i=1}^m \dim_F \ker \varepsilon_{c_i}(D') \leq \deg D'
= \deg N - \deg D = \sum_{i=1}^m \dim_F V_i.$$
The lemma follows by comparing dimensions.
\end{proof}

We now define $\underline W = (W_1, \ldots, W_m)$ and:
$$\begin{array}{rcl}
\Psi: \quad
\Hom_F(\underline{W},K) & \longrightarrow & \Hom_F(K/\underline{V}, K) 
\smallskip \\
(\varphi_1, \dots, \varphi_m) & \mapsto & 
(\varphi_1\circ \tau_1, \dots, \varphi_m \circ \tau_m)
\end{array}$$
Given that the $\tau_i$'s are all isomorphisms, we deduce that $\Psi$ is 
an isomorphism. Moreover, since composing by an isomorphism obviously 
preserves the rank, $\Psi$ preserves the sum-rank weight and the 
sum-rank distance.

\begin{theorem}
\label{theo:isomLGLRS}
With the above notations, the map $\Psi$ induces an isomorphism of codes 
between $\LRS(k,\underline{c}, \underline{W})$ and $\LG(k,\underline{c}, 
\underline{V})$.
\end{theorem}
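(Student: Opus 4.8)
The plan is to trace through the definitions of both codes and show that the isomorphism $\Psi$ carries one onto the other. The key identity to establish is that, for each index $i$, the $i$-th component of $\gamma_{\underline c,\underline V}(Q\cdot P)$ for a Ore polynomial $Q \in \Aplus_{<k}$ agrees, after composing with $\tau_i^{-1}$ (i.e. precomposing with $\tau_i$), with the $i$-th component of $\ev_{\underline c,\underline W}(Q)$ applied to $W_i$. Concretely, I would first unwind $\gamma_{\underline c,\underline V}(QP)$: its $i$-th component is $\ev_{c_i}(\sres_{z_i}(QP))$. Using the factorization $N = D'D = DD'$ and writing $P$ explicitly as in Definition~\ref{def:LG}, one computes $\sres_{z_i}(QP)$ as the image in $\Aplus/N_i\Aplus$ of $Q \cdot (PN_i)$; this is legitimate precisely because $PN_i$ has no pole at $z_i$, which is why $\tilde\tau_i = \ev_{c_i}(PN_i)$ was introduced.

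The second step is the bookkeeping that turns this into a statement about $\Psi$. Since $\ev_{c_i}$ is a ring homomorphism, $\ev_{c_i}(\sres_{z_i}(QP)) = \ev_{c_i}(Q) \circ \ev_{c_i}(PN_i) = \ev_{c_i}(Q) \circ \tilde\tau_i$. On the other hand, $\tilde\tau_i = \tau_i \circ (\text{projection } K \to K/V_i)$ by the very definition of $\tau_i$ as the map induced on the quotient, and $\tau_i : K/V_i \to W_i$ is an isomorphism. Therefore, viewing $\ev_{c_i}(Q)$ as restricted to $W_i$ (which makes sense once we know $\im\tilde\tau_i = W_i$), we get $\ev_{c_i}(\sres_{z_i}(QP)) = \big(\ev_{c_i}(Q)_{|W_i}\big)\circ\tau_i$, which is exactly the $i$-th component of $\Psi\big(\ev_{\underline c,\underline W}(Q)\big)$. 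Running this over all $i$ gives $\gamma_{\underline c,\underline V}(QP) = \Psi\big(\ev_{\underline c,\underline W}(Q)\big)$ for every $Q \in \Aplus_{<k}$, hence $\LG(k,\underline c,\underline V) = \gamma_{\underline c,\underline V}(\Aplus_{<k}\cdot P) = \Psi\big(\ev_{\underline c,\underline W}(\Aplus_{<k})\big) = \Psi\big(\LRS(k,\underline c,\underline W)\big)$. Since $\Psi$ was already shown to be a $K$-linear isomorphism preserving the sum-rank weight, this exhibits the desired isomorphism of codes.

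The main obstacle I anticipate is the careful verification of the congruence $N_i \equiv D\hat N_i D' \pmod{N_i^2}$ and, more importantly, of the formula $\sres_{z_i}(QP) = \text{image of } Q(PN_i) \text{ in } \Aplus/N_i\Aplus$ in the two cases $\theta = \id_K$ and $\theta \neq \id_K$ separately, because the definition of $P$ differs and in the differential case one must confirm that the extra factor $Z(X)^{-m-1}(X+a)^{n-k}$ — which is there to make the dual statement in Theorem~\ref{theo:resdual} come out right — is indeed a unit in $\Aplus/N_i\Aplus$ and does not disturb the residue computation at $z_i$. A related subtlety is that $\gamma_{\underline c,\underline V}$ a priori depends on choices of $z_i$-admissible isomorphisms; one should note, as already observed after Definition~\ref{def:LG}, that on $\calZ\Aplus D^{-1}$ (which contains $\Aplus_{<k}\cdot P$) the value is independent of these choices, so the identity we prove is unambiguous. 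Everything else — the ring-homomorphism property of $\ev_{c_i}$, the definition of $\tau_i$, and the fact that $\Psi$ preserves sum-rank weight — has been established above and can be cited directly.
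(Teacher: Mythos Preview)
Your proposal is correct and follows exactly the approach of the paper: the paper's entire proof is the single relation $\ev_{c_i}(\sres_{z_i}(gP)) = \ev_{c_i}(g)\circ\tilde\tau_i$, which is precisely the identity you derive (with $g=Q$) and then unpack into the statement $\gamma_{\underline c,\underline V}(QP)=\Psi\big(\ev_{\underline c,\underline W}(Q)\big)$. One small slip: the extra factor $Z(X)^{-m-1}(X{+}a)^{n-k}$ in $P$ occurs in the case $\theta\neq\id_K$, not in the differential case $\theta=\id_K$; your argument that it is a unit in $\Aplus/N_i\Aplus$ is nonetheless the right check (use $(X{+}a)^s=Z(X)\equiv z_i\neq 0\pmod{N_i}$ since $c_i$ is unramified).
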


\begin{proof}
This follows from the relation
$\varepsilon_{c_i}(\sres_{z_i}(f)) = 
\varepsilon_{c_i}(g) \circ \tilde \tau_i$
which holds true for any $f \in \Aplus P$ in all cases.
\end{proof}

\begin{corollary} \label{theo::caractLG}
Let $k$, $\underline c$ and $\underline V$ as in Definition~\ref{def:LG}
and assume that the $\upsilon(c_i)$'s are pairwise distinct. Then:
\begin{itemize}
\renewcommand{\itemsep}{0pt}
\item the length of $LG(k, \underline{c}, \underline{V})$ is $n
= \sum_{i=1}^m \dim_F K/V_i$,
\item its dimension is $k$,
\item its minimal sum-rank distance is $d = n-k+1$.
\end{itemize}
In particular, the code $LG(k, \underline{c}, \underline{V})$ is
MSRD.
\end{corollary}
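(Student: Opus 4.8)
\emph{Proof proposal.} The plan is to deduce the statement entirely from the isomorphism of codes established in Theorem~\ref{theo:isomLGLRS}, combined with the parameters of linearized Reed-Solomon codes computed in Theorem~\ref{theo::caracRSL}. First I would recall that the map $\Psi$ is an isomorphism of $K$-vector spaces carrying $\LRS(k,\underline c,\underline W)$ onto $\LG(k,\underline c,\underline V)$, and that, being a coordinatewise composition with the isomorphisms $\tau_i$, it preserves the sum-rank weight (hence the sum-rank distance). Consequently the two codes have the same length, the same $K$-dimension and the same minimal sum-rank distance, so it suffices to compute these three quantities for $\LRS(k,\underline c,\underline W)$.

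Next I would check that the hypotheses of Theorem~\ref{theo::caracRSL} are met for $\LRS(k,\underline c,\underline W)$. Two inputs are needed. On the one hand, each $\tau_i : K/V_i \to W_i$ is an isomorphism (by the lemma preceding Theorem~\ref{theo:isomLGLRS}), so $\dim_F W_i = \dim_F K/V_i = [K{:}F] - \dim_F V_i$ and therefore $\sum_{i=1}^m \dim_F W_i = n$. On the other hand, the evaluation points $\underline c$ are the same as for $\LG(k,\underline c,\underline V)$, so the $\upsilon(c_i)$'s are pairwise distinct by assumption, and $k < n$ by Definition~\ref{def:LG}. Applying Theorem~\ref{theo::caracRSL}, we conclude that $\LRS(k,\underline c,\underline W)$ has length $n$, dimension $k$ and minimal sum-rank distance $n-k+1$, whence the same holds for $\LG(k,\underline c,\underline V)$ by the previous paragraph. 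Since the Singleton-type bound $k+d\le n+1$ is attained, the code is MSRD.

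I do not expect a genuine obstacle here: the substantive work has already been carried out in Theorems~\ref{theo:isomLGLRS} and~\ref{theo::caracRSL}. The only points deserving a line of care are the length bookkeeping, namely the identity $\dim_F W_i = \dim_F K/V_i$ which is immediate from $\tau_i$ being an isomorphism, and the observation that the distinctness hypothesis on the $\upsilon(c_i)$'s transports verbatim because the tuple $\underline c$ is unchanged when passing from $\LG(k,\underline c,\underline V)$ to the associated code $\LRS(k,\underline c,\underline W)$.
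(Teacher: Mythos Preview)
Your proposal is correct and follows exactly the route the paper takes: the paper's proof is a one-line citation of Theorem~\ref{theo:isomLGLRS} and Theorem~\ref{theo::caracRSL}, and you have simply unpacked the details (the length bookkeeping via $\dim_F W_i = \dim_F K/V_i$ and the verification of the hypotheses of Theorem~\ref{theo::caracRSL}).
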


\begin{proof}
This is a direct consequence of Theorem~\ref{theo:isomLGLRS} and
Theorem~\ref{theo::caracRSL}.
\end{proof}

\subsection{The duality theorem}

We are finally ready to state and prove our main duality theorem.
We recall that, for an unramified element $c \in K$, we have defined
in \S \ref{ssec:dualOre}:
$$\begin{array}{r@{\hspace{0.5ex}}l@{\qquad}l}
c^\vee
 & = -c & \text{if } \theta = \id_K
  \smallskip \\
 & = \frac 1{c+a} - a & 
 \text{if } \theta \neq \id_K \text{ and } \delta = a \delta_0.
\end{array}$$

\begin{theorem}
Let $k$ and $m$ be two positives integers. 
Let $\underline{c} = (c_1, \dots, c_m)$ be a tuple of $m$ unramified 
elements of $K$ such that the $\upsilon(c_i)$'s are pairwise
distinct.
Let $\underline{V} = (V_1, \dots, V_m)$ be a tuple of $F$-linear subspace 
of $K$. We set 
$n = \dim_F \underline{V}$ and we suppose $k \leqslant n$. Then:
$$\LRS(k,\, \underline c,\, \underline{V})^\perp 
\,=\, \LG(n{-}k,\, \underline c^\vee,\, \underline V^\perp)$$
where $\underline c^\vee = (c_1^\vee, \dots, c_m^\vee)$ and
$\underline V^\perp = (V_1^\perp, \dots, V_m^\perp)$.
\end{theorem}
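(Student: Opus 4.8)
The plan is to prove the equality of codes by a dimension count together with one orthogonality computation; since both sides are $K$-linear codes inside $\Hom_F(K/\underline V^\perp, K)$ and we know all their parameters, it suffices to show one inclusion between spaces of the same dimension. First I would record the dimensions: by Theorem~\ref{theo::caracRSL} the code $\LRS(k, \underline c, \underline V)$ has dimension $k$ and length $n$, so by Lemma~\ref{lem:perpKlin} its orthogonal has dimension $n-k$; on the other side, by Corollary~\ref{theo::caractLG} (applied with $\underline c^\vee$ and $\underline V^\perp$, noting that $\dim_F K/V_i^\perp = \dim_F V_i$ so the length is again $n$) the code $\LG(n{-}k, \underline c^\vee, \underline V^\perp)$ also has dimension $n-k$. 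Hence it is enough to prove the inclusion $\LG(n{-}k, \underline c^\vee, \underline V^\perp) \subseteq \LRS(k, \underline c, \underline V)^\perp$, i.e. that every pairing $\langle \ev_{\underline c, \underline V}(A), \gamma_{\underline c^\vee, \underline V^\perp}(f) \rangle$ vanishes for $A \in \Aplus_{<k}$ and $f$ ranging over the appropriate space $\Aplus_{<n-k}\cdot P^\vee$ (where $P^\vee$ is the Ore polynomial of Definition~\ref{def:LG} attached to $\underline c^\vee, \underline V^\perp$).

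The core computation is to unwind this pairing into a sum of skew residues and then apply the residue formula (Theorem~\ref{theo:residueformula}). Writing $\langle \ev_{\underline c, \underline V}(A), \gamma_{\underline c^\vee, \underline V^\perp}(f)\rangle = \sum_{i=1}^m \tr\big(\ev_{c_i}(A)^\star \circ \ev_{c_i^\vee}(\sres_{z_i^\vee}(f))\big)$, I would use Theorem~\ref{theo:evaldual} to rewrite $\ev_{c_i}(A)^\star = \ev_{c_i^\vee}(A^\star)$ (up to the identifications there, with $z_i^\vee = \upsilon(c_i^\vee) = \bar z_i$), so that each summand becomes $\tr\big(\ev_{c_i^\vee}(A^\star \cdot \sres_{z_i^\vee}(f))\big)$, using that $\ev_{c_i^\vee}$ is a ring homomorphism. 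Then Theorem~\ref{theo:trdeval} converts each trace on $\End_F(K)$ into a reduced trace: $\tr(\ev_{c_i^\vee}(g)) = \TS$-evaluation at $z_i^\vee$ of $\Trd(g)$, which via Proposition~\ref{prop:commTrdsres} (commutation of reduced trace with skew residue) equals $\Trd(\sres_{z_i^\vee}(\cdot))$ applied to a suitable global Ore function. The upshot should be that the whole pairing equals $\sum_{i} \Trd\big(\sres_{z_i^\vee}(A^\star f)\big)$, where now $A^\star f \in \calA$ is a single Ore rational function whose only possible poles at separable points are at the $z_i^\vee$.

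At that stage I would invoke Theorem~\ref{theo:residueformula}: the sum of $\Trd(\sres_z(\cdot))$ over all $z \in \Fs$ vanishes provided $A^\star f = Q/D$ with $D \in \Zplus$ and $\deg Q \le \deg D - 2$. So the remaining work is a bookkeeping of degrees and of the behaviour at $z = 0$ and at infinity. In the differential case $\theta = \id_K$ one has $P^\vee = (D^\vee)^{-1}$ where $D^\vee$ is the vanishing polynomial of $\underline c^\vee, \underline V^\perp$ of degree $n$, and $\deg A^\star \le k-1$, $\deg f \le (n-k-1) + \deg P^\vee$; one checks the total degree against $\deg$ of the central denominator $N = \prod N_i$ (degree $ms$) and confirms the residue formula's hypothesis, also checking that the poles of $A^\star f$ at inseparable points contribute zero (as in the proof of Theorem~\ref{theo:residueformula}). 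In the case $\theta \ne \id_K$ the extra factors $Z(X)^{-m-1}(X{+}a)^{n-k}$ in $P^\vee$ together with the twist $Z(X)^{-2}$ appearing in Theorem~\ref{theo:resdual} are designed precisely so the degree count still works and no residue at $z=0$ or at infinity is lost; I would carefully track these factors. \textbf{The main obstacle} is exactly this matching of the auxiliary polynomial $P$ and the twisting factors with the hypotheses of the residue formula — making sure that $A^\star f$ genuinely has the form $Q/D$ with $D$ \emph{central}, that the degree bound $\deg Q \le \deg D - 2$ holds with equality-edge cases handled, and that the separable point $z=0$ (excluded when $\theta \ne \id_K$) does not secretly carry a nonzero residue. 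Everything else is a chain of the already-established commutation diagrams (Theorems~\ref{theo:trdeval}, \ref{theo:evaldual}, \ref{theo:resdual} and Proposition~\ref{prop:commTrdsres}).
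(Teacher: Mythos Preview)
Your proposal is correct and follows essentially the same strategy as the paper: reduce to one inclusion by a dimension count, unwind the pairing into a sum of reduced traces of skew residues, and conclude by the residue formula. The one substantive variation is \emph{which side you dualize}. The paper applies the adjoint to the Goppa factor $\ev_{c_i^\vee}(\sres_{\bar z_i}(f))$ and uses \emph{both} Theorem~\ref{theo:evaldual} and Theorem~\ref{theo:resdual} to transport everything to the points $z_i$, obtaining $\sum_i \Trd\big(\sres_{z_i}(u f^\star g)\big)$; you instead take the adjoint of the LRS factor $\ev_{c_i}(A)$, which needs only Theorem~\ref{theo:evaldual} and keeps the residues at the points $z_i^\vee$, giving $\sum_i \Trd\big(\sres_{z_i^\vee}(A^\star f)\big)$. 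Both routes feed into the same residue-formula endgame and the same degree bookkeeping; yours is marginally lighter in that it sidesteps Theorem~\ref{theo:resdual}, at the cost of having to control the extra pole of $A^\star$ at $Z(X)=0$ when $\theta\neq\id_K$ (which you correctly flag).

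Two small imprecisions worth cleaning up. First, the passage from $\Trd\big(A^\star\cdot\sres_{z_i^\vee}(f)\big)$ to $\Trd\big(\sres_{z_i^\vee}(A^\star f)\big)$ does not require Proposition~\ref{prop:commTrdsres}; it is just the observation that $f$ has at most a simple pole at $z_i^\vee$ (Lemma~\ref{lem:simplepole}) and $A^\star$ is regular there, so $\sres_{z_i^\vee}(A^\star f)=\overline{A^\star}\cdot\sres_{z_i^\vee}(f)$ in $\Aplus/N_i^\vee\Aplus$. Second, since your route never invokes Theorem~\ref{theo:resdual}, the $Z(X)^{-2}$ twist you mention does not enter your computation; the balancing of degrees in the $\theta\neq\id_K$ case comes entirely from the prefactor $Z(X)^{-m-1}(X{+}a)^{n-k}$ built into $P^\vee$ together with the negative powers of $(X{+}a)$ appearing in $A^\star$.
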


\begin{proof}
Since the dimensions of $\LRS(k,\underline c, \underline V)$ and 
$\LG(n{-}k,\underline c^\vee, \underline V^\perp)$ sum up to $n$,
it is enough to prove that $\langle \underline\varphi, \underline\psi 
\rangle = 0$ for all $\underline \varphi \in \LRS(k,\underline c, 
\underline V)$ and for all $\underline\psi \in \LG(n{-}k, \underline c^\vee, 
\underline V^\perp)$. For simplicity, we write
$\rho = \ev_{\underline c, \underline V}$ (see Definition~\ref{def:LRS})
and $\gamma = \gamma_{\underline c^\vee, \underline V^\perp}$.
We have to prove that $\langle \gamma(f), \rho(g) \rangle = 0$ for 
$f \in \Aplus_{<n-k} P$ and $g \in \Aplus_{<k}$ where $P$ is the Ore
polynomial introduced in Definition~\ref{def:LG} (for the parameters
$\underline c^\vee$ and $\underline V^\perp$).
We compute:
\begin{equation}
\label{eq:pairing}
\langle \gamma(f), \rho(g) \rangle 
 = \sum_{i=1}^m \tr\big(\gamma(f)^\star\circ\rho(g)\big) 
 = \sum_{i=1}^m \tr\big(\varepsilon_{c_i^\vee}(\sres_{\bar z_i}(f))^\star 
   \circ \varepsilon_{c_i}(g)\big)
\end{equation}
where we have set $\bar z_i = \upsilon(c_i^\vee)$. We recall that
$\bar z_i = - z_i$ if $\theta = \id_K$ and $\bar z_i = z_i^{-1}$
otherwise (the assumption that $c_i$ is unramified indicates that
$z_i$ cannot vanish in the latter case).
From Theorems~\ref{theo:evaldual} and~\ref{theo:resdual}, we deduce
that:
$$\varepsilon_{c_i^\vee}\big(\sres_{\bar z_i}(f)\big)^\star
= \varepsilon_{c_i}\big(\sres_{\bar z_i}(f)^\star\big)
= \varepsilon_{c_i}\big(u \cdot \sres_{z_i}(f^\star)\big)$$
where the multiplicative prefactor $u$ is $-1$ when $\theta =
\id_K$ and $-Z(X)^2$ otherwise.
On the other hand, since $f^\star$ has at most a simple pole at $z_i$ 
and $u$ has no pole at $z_i$, we have
$\sres_{z_i}(u \cdot f^\star) = u \cdot \sres_{z_i}(f^\star)$. Therefore,
we conclude that:
$$\varepsilon_{c_i^\vee}\big(\sres_{\bar z_i}(f)\big)^\star
= \varepsilon_{c_i}\big(\sres_{z_i}(u f^\star)\big)$$
and plugging this into Eq.~\eqref{eq:pairing}, we obtain:
\begin{align*}
\langle \gamma(f), \rho(g) \rangle 
 & = \sum_{i=1}^m \tr\big(\varepsilon_{c_i}(\sres_{z_i}(uf^\star))\circ \varepsilon_{c_i}(g)\big)\\
 & = \sum_{i=1}^m \tr\big(\varepsilon_{c_i}(\sres_{z_i}(uf^\star) \cdot g)\big) 
   = \sum_{i=1}^m \tr\big(\varepsilon_{c_i}(\sres_{z_i}(uf^\star g)\big).
\end{align*}
since again $f^\star$ has at most a simple pole at $z_i$ and $g$ has 
no pole at $z_i$. Now using Theorem~\ref{theo:trdeval}, we end up with:
$$\langle \gamma(f), \rho(g) \rangle 
= \sum_{i=1}^m \Trd\big(\sres_{z_i}(uf^\star g)\big).$$
We write $f = f_0 P$ where $f_0$ is a Ore polynomial of degree
at most $n{-}k{-}1$ and we consider $D'$ such as $D'D = N$. We have:
$$\begin{array}{r@{\hspace{0.5ex}}l@{\qquad}l}
u f^\star g = u P^\star f_0^\star g
 & = \displaystyle \frac{(D')^\star \cdot f_0^\star \cdot g}{N^\star}$$
 & \text{if } \theta = \id_K
  \medskip \\
 & = \displaystyle \frac{(D')^\star \cdot Z(X)^{m-1} \cdot X^{-k} \cdot f_0^\star \cdot g}{N^\star}$$
 & \text{otherwise.}
\end{array}$$
When $\theta = \id_K$, the numerator 
$(D')^\star {\cdot} f_0^\star {\cdot} g$
is a Ore polynomial of degree at most $ms{-}2$ (where we recall that
$s = [K:F]$) and it then follows from the skew residue formula
(Theorem~\ref{theo:residueformula}) that $\langle \gamma(f), \rho(g) \rangle$
vanishes. Similarly, if $\theta \neq \id_K$, we find that the numerator
has only terms in $(X{+}a)^i$ with $i$ in the range $(-s,(m{-}1)s)$ and deduce
from this that the skew residues of $u f^\star g$ at $0$ and $\infty$
both vanish. Hence the skew residue formula (see \cite[Theorem~3.2.1]
{caruso}) also implies the vanishing of $\langle \gamma(f), \rho(g) 
\rangle$ in this case.
\end{proof}

\end{document}